\documentclass[12pt]{amsart}

\usepackage{geometry}
\usepackage{fullpage}
\usepackage[utf8]{inputenc}
\usepackage{comment}
\usepackage{amsmath, amssymb}
\usepackage{mathrsfs, enumerate, csquotes, color,float}

\usepackage[hidelinks]{hyperref}
\usepackage{tikz,pgfplots}
\pgfplotsset{compat=1.13}
\usepackage[outline]{contour}
\usetikzlibrary{arrows.meta}
\usetikzlibrary{backgrounds}
\usetikzlibrary{calc}

\renewcommand{\vec}[1]{\mathbf{#1}}
\newcommand{\R}{\mathbb{R}}
\DeclareMathOperator{\supp}{supp}
\DeclareMathOperator{\dist}{dist}
\renewcommand{\Re}{\mathrm{Re}\,}
\renewcommand{\Im}{\mathrm{Im}\,}
\renewcommand{\leq}{\leqslant}	
\renewcommand{\geq}{\geqslant}
\newcommand{\cH}{\mathcal{H}}
\newcommand{\cHd}{\mathcal{H}^2}

\newtheorem{theorem}{Theorem}[section]

\newtheorem{lemma}[theorem]{Lemma}
\newtheorem{corollary}[theorem]{Corollary}
\newtheorem{proposition}[theorem]{Proposition}

\theoremstyle{definition}
\newtheorem{definition}[theorem]{Definition}
\newtheorem{remark}[theorem]{Remark}
\newtheorem{assumption}[theorem]{Assumption}

\makeatletter

\@addtoreset{equation}{section}
\makeatother

\DeclareMathOperator{\Id}{Id}

\title[Localization of the eigenfunctions of a Bloch-Torrey operator]{Localization of the eigenfunctions of a Bloch-Torrey operator on the half-plane}
\author[M. Averseng]{M. Averseng}
\address[M. Averseng]{Univ Angers, CNRS, LAREMA, F-49000 Angers, France}
\email{martin.averseng@univ-angers.fr}

\author[N. Frantz]{N. Frantz}
\address[N. Frantz]{Univ Angers, CNRS, LAREMA, F-49000 Angers, France}
\email{nicolas.frantz@univ-angers.fr}

\author[F. H\'erau]{F. H\'erau}
\address[F. H\'erau]{LMJL - UMR6629, Nantes Université, CNRS, 2 rue de la Houssini\`ere, BP 92208, F-44322 Nantes cedex 3, France}
\email{herau@univ-nantes.fr}

\author[N. Raymond]{N. Raymond}
\address[N. Raymond]{Univ Angers, CNRS, LAREMA, Institut Universitaire de France, F-49000 Angers, France}
\email{nicolas.raymond@univ-angers.fr}
\date{}

\begin{document}

\begin{abstract}
We consider a non-self adjoint operator of the form $-h^2 \Delta + i(V(x) + \alpha(x)y)$ on the upper half plane $y > 0$ with Dirichlet boundary conditions on $\{y = 0\}$ with $V \geq 0$, $V$ admitting a non-degenerate minimum at $x = 0$ and $\alpha'(0) = 0$. We study its eigenfunctions associated to the smallest eigenvalues in magnitude in the semiclassical limit $h \to 0$. Elementary variational estimates show that these eigenfunctions are localized near the point $(0,0)$ at the scales $O(h^{1/3})$ in $x$ and $O(h^{2/3})$ in $y$. In this paper, we show that the $O(h^{1/3})$ localization in $x$ is not optimal; more precisely, we establish that the eigenfunctions are concentrated in a neighborhood of size $O(h^{1/2})$ of the axis $\{x = 0\}$, and this scale is shown to be sharp.
The proof relies on the symbolic calculus of operator-valued  pseudodifferential operators.
\end{abstract}

\maketitle

\section{Motivations and questions}

\subsection{Main motivation}
The main motivation for this paper is the study of the ``low-energy eigenfunctions'' (i.e., associated to smallest eigenvalues in magnitude) of the {\em Bloch-Torrey operator}
$$\mathscr{B}_h := -h^2 \Delta + i x_1$$ 
defined on an appropriate subspace of $L^2(\Omega)$, where $\Omega \subset \R^2$ is a smooth, bounded domain,  with suitable boundary conditions, in the semiclassical limit $h \to 0$. More precisely, we are interested in analyzing the localization phenomenon that takes place at the point of $\partial \Omega$ where $x_1$ is minimal; this phenomenon is illustrated by the numerical experiments displayed in Figure \ref{fig:numerics} in the case of Dirichlet boundary conditions. 
\begin{figure}[htbp]
\includegraphics[width=0.20\textwidth]{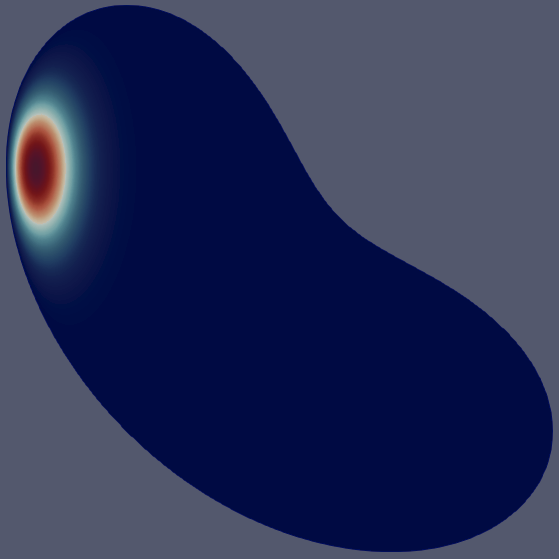}
\includegraphics[width=0.20\textwidth]{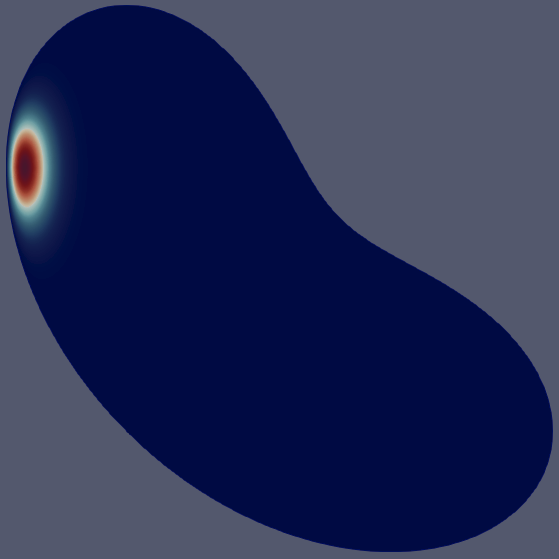}
\includegraphics[width=0.20\textwidth]{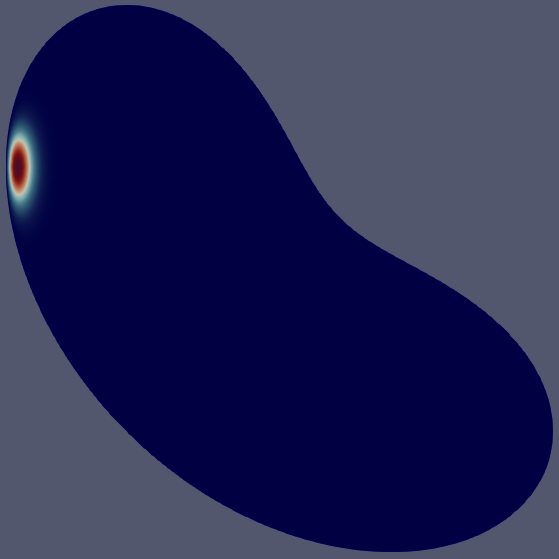}
\includegraphics[width=0.20\textwidth]{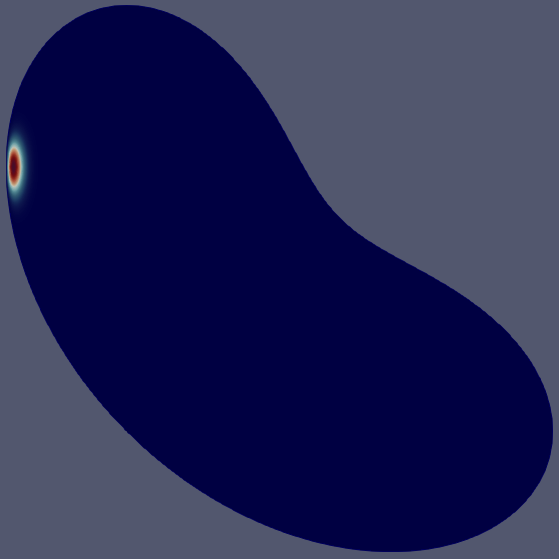}
\caption{Plot of the magnitude of the $L^2$ normalized first eigenfunction of the Bloch-Torrey operator $-(2^{-n})^2\Delta + ix_1$ with Dirichlet boundary conditions on the two-dimensional domain $\Omega$ bounded by the curve parametrized by $r(\theta) = \cos(\theta)^3 + \sin(\theta)^3$ in polar coordinates, for $n = 7,8,9,10$ (from top-left to bottom-right). The red intensity is proportional to the magnitude of the plotted eigenfunction. Numerical computations were performed using the finite element method with a refined mesh near the point of the boundary where the eigenfunction is localized. The $L^2$ norm of the eigenfunction is indeed small outside of a box of size $O(h^{1/2})$ along the $y$-axis direction and $O(h^{2/3})$ along the $x$-axis, where $h = 2^{-n}$. The source code for these computations is available at \cite{code}.}
\label{fig:numerics}
\end{figure}
Our goal is to tackle the following question:
\begin{center}
{\bf Q1:} {\em How are the low-energy eigenfunctions of the Bloch-Torrey operator localized?}
\end{center}
In this work, we start addressing this question by considering a model operator on a half-plane, obtained by local flattening near the point of minimal $x_1$-coordinate, and we hope to address {\bf Q1} in its full generality in a subsequent work.

\subsection{Physical background}

In their simplest form -- in dimensionless units, without relaxation and with isotropic diffusion -- the {\em Bloch-Torrey equations} read 
$$\frac{dM}{dt}(\vec x,t) = -i gz M + \Delta M$$
where $M$ is the transverse magnetization of spin-bearing particles in a domain $\Omega$, subject to a magnetic field with constant gradient $g$ in the $z$ direction. 

These equations were derived in 1956 by Torrey \cite{torrey1956bloch} from the Bloch equations \cite{bloch1946nuclear} to take into account diffusion effects due to inhomogenous magnetic fields. They are used to describe the magnetization diffusion of nuclei in a confined domain, and are the main model for the imaging technique known as diffusion MRI, or dMRI (a variant of MRI, since the latter makes uses of constant magnetic fields), which is used for medical imaging, especially applied to brain cells (see, e.g., \cite{lebihan2003looking}). The spectrum of the operator
$$-\Delta + i g z\,,$$
(usually considered with Neumann or Robin conditions) describes the relationship between a measurable signal, obtained by spatially integrating the transverse magnetization data over a region called ``voxel'', and the microstructure of the imaged tissue see e.g. \cite{grebenkov2024spectral}. The case of a large gradient $g \gg 1$ (which is equivalent to the semiclassical regime with $h \to g^{-1/2}$), or ``localization regime'', was considered in a seminal paper \cite{stoller1991transverse}, see also \cite{grebenkov2014exploring,grebenkov2018diffusion,moutal2019localization}. Localization effects (such as the ones visible in Figure \ref{fig:numerics}) are connected to the phenomenon of ``diffusive edge enhancement'', in which boundaries of confining cells appear brighter on reconstructed images, see \cite[Section V]{deswiet1994decay}. As of yet, this localization regime is ``yet poorly understood and exploiting its potential advantages is still challenging in experiments'' \cite[Section 5]{moutal2019localization}, see also \cite{grebenkov2018diffusion} for related discussions.

We also note that the Bloch-Torrey operator appears in other physical applications, such as superconductivity. For instance, the linearized time-dependent Ginzburg-Landau equations near the ``normal state'', under the assumption of vanishing magnetic field, lead to a model close to the Bloch-Torrey equations, see, e.g. \cite[Section 2.2]{almog2008stability}.

\subsection{Known results}

The mathematical investigation of the spectrum of the Bloch-Torrey operator was started by Almog in \cite{almog2008stability} in the context of superconductivity. 
This work was followed by several others, (e.g. \cite{henry2014semiclassical,almog2016spectral,almog2018spectral,grebenkov2018spectral}) which, together, show in various geometric settings and choices of boundary conditions, that the ``left-margin'' of the spectrum $\sigma(\mathscr{B}_h)$ satisfies 
$$ \lim_{h \to 0} \Re(\sigma(\mathscr{B}_h)) = \frac{|z_1|}{2}h^{2/3}$$
where $z_1 \approx -2.33811$ is the rightmost real zero of the Airy function. 
In particular, an upper bound on this quantity is obtained in \cite{almog2016spectral} by showing that (under adequate non-degeneracy assumptions on the boundary) in the case of Dirichlet boundary conditions, there exists an eigenvalue $\lambda(h)$ of $\mathscr{B}_h$ satisfying
$$\lambda(h) = i \left(\inf_{\Omega}x_1\right) + e^{i\pi/3}|z_1| h^{2/3} + Kh + o(h),$$
where $K$ is a constant related to the curvature of the boundary of $\Omega$ at the point of minimal abscissa (this is a particular case of \cite[Theorem 1.1]{almog2016spectral} choosing $V = x_1$). The work \cite{grebenkov2018spectral} furthermore constructs ``quasimodes'', with associated ``quasi-eigenvalues'' close to $\lambda(h)$ -- i.e., approximate solutions $(\widetilde{\lambda}(h),\widetilde{u}(h))$ of the eigenvalue problem 
$$\big(\mathscr{B}_h - \widetilde{\lambda}(h)\big) \widetilde{u}(h) = 0\,,$$
with $\widetilde{\lambda}(h) \approx \lambda(h)$. Their ansatz $\widetilde{u}(h)$ possess a localization property which is similar to the one visible in Figure \ref{fig:numerics}. However, the fact that these quasimodes are actually close to true eigenfunctions is not known.

Recently, analytic dilation techniques were used in \cite{herau2024semiclassical} to obtain approximations of the low-energy eigenvalues of the more general operator $-h^2 \Delta + e^{i\alpha} x_1$, $\alpha \in [0,\frac{3\pi}{5})$. These approximations agree with \cite[Equation (23)]{deswiet1994decay} when $\alpha=\frac{\pi}{2}$. The eigenfunctions of the resulting operator after analytic dilation are localized in a way that is completely understood. Unfortunately, dilating back to the original coordinates, the localization information is lost: only the spectrum is preserved. 

The above results and their proofs suggest that the natural localization for the eigenfunctions is $O(h^{2/3})$ in the normal direction and $O(h^{1/2})$ in the tangential direction. We also highlight that these localization results are known for a self-adjoint counterpart of the Bloch-Torrey operator (i.e., replacing $ix_1$ by $x_1$ in the definition of $\mathscr{B}_h$), see \cite{cornean2022two}; however, as far as we are aware, they have not been proven for the complex version. Moreover, while the $O(h^{2/3})$ localization in the normal direction can be obtained via fairly standard arguments (see Remark \ref{rem:comments_main} (\ref{remitem:pollution})), obtaining the $O(h^{1/2})$ scale in the tangential variable appears to require new techniques and seems to be the main difficulty for answering {\bf Q1}. In this paper, we prove this $O(h^{1/2})$ localization for a model operator on the half-plane.

\subsection{Model operator}

As suggested by the above discussion, we should focus on the point of $\partial \Omega$ where $x_1$ is minimal, which we can assume to be located at $(0,0)$. Introducing a normal parametrization $\gamma : [0,1] \to \R^2$ of $\partial \Omega$, we can consider the system of coordinates
$$x(s,t) = \vec \gamma(s) + t N(s)\,, \quad (s,t) \in (-s_0,s_0) \times (0,t_0)$$
in a tubular neighborhood of $(0,0)$, where $N(s)$ is the unit normal vector at $\gamma(s)$ pointing inside $\Omega$ (see Figure \ref{fig:param}).

\begin{figure}[htbp]
\begin{tikzpicture}[scale=0.7]
\draw  (-3,0) -- (3,0);
\draw  (0,-3) -- (0,3);
\draw   plot[smooth,domain=-3:3] (0.3*\x*\x, {\x});
\node at (0.3*2*2+0.5,2-0.5) {$\Omega$};
\node at (0.3*9+0.5,3+0.1) {$\partial\Omega$};
\draw[-Latex] (0.3*4,-2)--(0.3*4+0.4*1,-2+0.4*1.2) node[below right=-1mm]{$N(s)$};
\draw[dashed,blue] (0.3*4,-2)--(0.3*4+1.2*1,-2+1.2*1.2) node[right]{$x(s,t)$};
\draw [decorate,decoration={brace,amplitude=10pt},blue] (0.3*4,-2)--(0.3*4+1.2*1,-2+1.2*1.2) node[midway,xshift=-0.9em,yshift=1.08em] {$t$};
\node[below left] at (0.3*4+0.15,-2) {$\gamma(s)$};
\end{tikzpicture}
\caption{Local parametrization of the boundary $\partial \Omega$ near the point of minimal abscissa.}
\label{fig:param}
\end{figure}
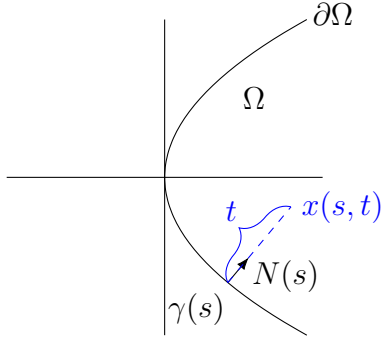
Discarding some curvature terms, we are led to considering the model operator 
$$\mathscr{T}_h := -h^2 (\partial_s^2 + \partial_t^2) + i\alpha(s) t + i V(s)$$
on the open set $(s,t) \in \R \times \R_+$, where $\alpha(s)$ and $V(s)$ coincide with $N(s) \cdot e_1$ and $V(s) := \gamma(s) \cdot e_1$ respectively in a neighborhood of $s = 0$. In particular, assuming that the curvature does not vanish at $s = 0$, the functions $\alpha$ and $V$ can be chosen to satisfy the following properties: 
\begin{enumerate}
\item[(i)] $\alpha(0) = 1$, $\alpha'(0) = 0$ and $\alpha_0 \leq \alpha(s) \leq 1$ for some $\alpha_0 > 0$.
\item[(ii)] $V(0) = V'(0) = 0$ and $V''(0) > 0$. 
\end{enumerate}
Through the change of variables $t \to h^{2/3} y$ and $s \to x$, this operator is unitarily equivalent to 
$$\mathscr{L}_h = h^{2/3} \big(D_y^2 + i\alpha(x)y\big) + h^2 D_x^2 + iV(x),$$
where here and in what follows we denote $D = -i\partial$. Numerical experiments displayed in Figures \ref{fig:numericsLh} and \ref{fig:Airy} suggest that the low-lying eigenfunctions of $\mathscr{L}_h$ concentrate near the half-line $\{x = 0\}$ as $h \to 0$, and are approximately of the form $\psi_h(x,y) \approx f_h(x) u_{\rm Ai}(y)$ where $u_{\rm Ai}(y)$ is a ``ground-state'' of the one-dimensional, complex Airy operator 
$D_y^2 + i\alpha(0)y$
with Dirichlet conditions on the half-line (one such eigenfunction is explicitly given by $u_{\rm Ai}(y) = \textup{Ai}(e^{i \frac\pi6} \alpha(0)^{1/3}y + z_1)$), and where $f_h$ is localized at the scale $O(h^{1/2})$.
The goal of this paper is to describe this localization in the $x$-variable i.e., answer the question
\begin{center}
{\bf Q2:} {\em At what scale are the low-lying eigenfunctions of $\mathscr{L}_h$ localized in the $x$-variable ?}
\end{center}
We expect that the answer to {\bf Q2} is the key difficulty in answering {\bf Q1}. More precisely, the ellipticity of $\mathscr{L}_h$ (see Proposition \ref{prop:main_ellip} below) should be the main ingredient for obtaining the tangential localization in the original problem. 

\begin{figure}[htbp]
\includegraphics[width=0.20\textwidth]{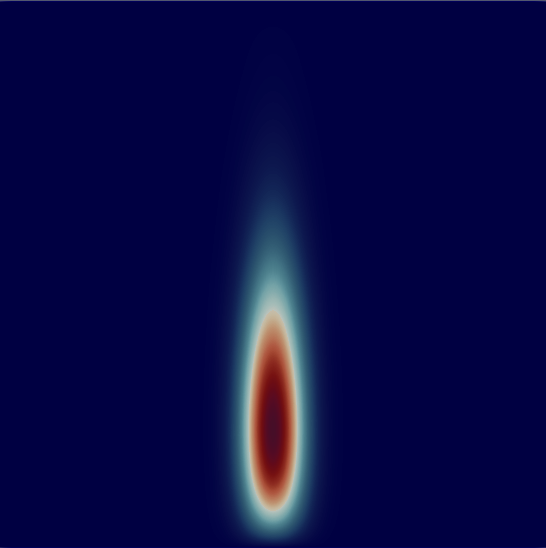}
\includegraphics[width=0.20\textwidth]{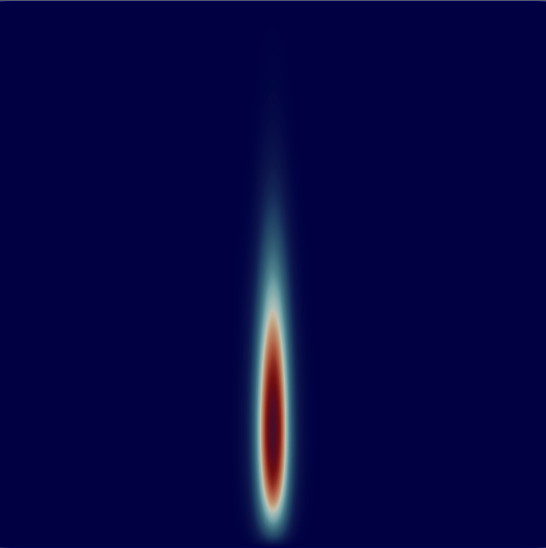}
\includegraphics[width=0.20\textwidth]{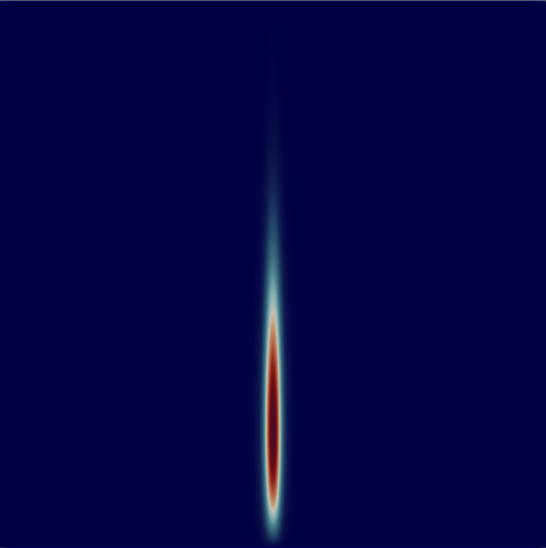}
\includegraphics[width=0.20\textwidth]{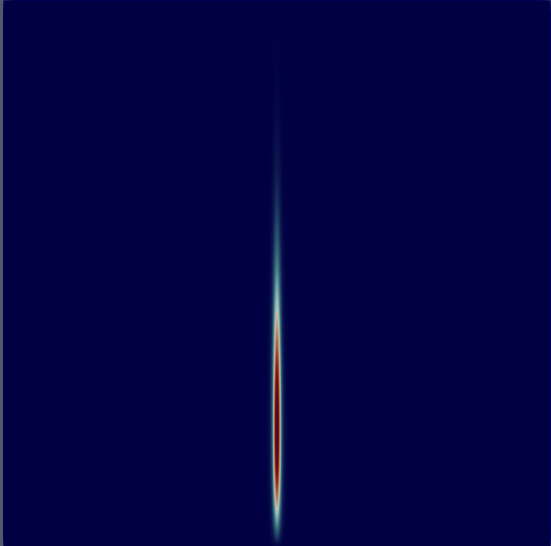}
\caption{Plot of the magnitude of the $L^2$ normalized first eigenfunction of the operator $\mathscr{L}_h$ with $V = x^2$ and $\alpha = 1 - 0.1x^2$, truncated to a square domain $[-R,R] \times [0,2R]$ ($R = 4$) with Dirichlet boundary conditions on the boundary, and for $h = 2^{-n}$ with $n = 4,6,8,10$ (from top-left to bottom-right). The red intensity is proportional to the magnitude of the plotted eigenfunction. Numerical computations were performed using the finite element method with a refined mesh near near the axis $\{x = 0\}$. The $L^2$ norm of the eigenfunction is small outside a box of size $O(h^{1/2})$ along the $x$-direction and $O(1)$ along the $y$-direction. The source code for these computations is available at \cite{code}.}
\label{fig:numericsLh}
\end{figure}

\begin{figure}[htbp]
\centering
\includegraphics[height=3cm]{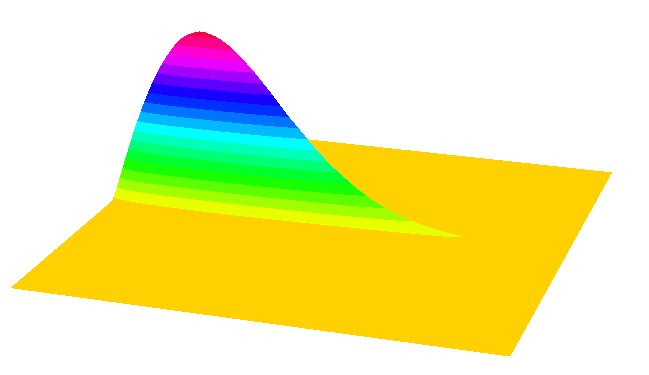} \includegraphics[height=3cm]{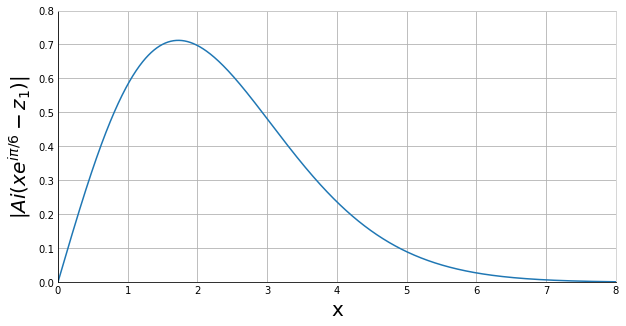}
\caption{Graph of the magnitude of the eigenfunction computed numerically as in Figure \ref{fig:numericsLh} for $n = 10$ (left) and  graph of the function $x \mapsto |\textup{Ai}(xe^{i \frac\pi6} + z_1)|$ (right).}
\label{fig:Airy}
\end{figure}

\section{About the main result and its proof}
Let us now describe our main result and the strategy of its proof.

\subsection{Statement of the main result}

Let $V, \alpha : \R \to \R$ be smooth, bounded functions. For $h > 0$, let 
$$\mathscr{L}_h : L^2(\R \times\R_+) \to L^2(\R \times \R_+)$$ 
be the (unbounded) linear operator defined by 
$$\mathscr{L}_h := h^{2/3} (D_y^2 + i\alpha(x) y) + (hD_x)^2 + iV(x)$$
on the domain
\begin{equation}
\label{eq:defH2}
\cHd := \Big\{\psi \in H^2(\R \times \R_+) \cap H^1_0(\R \times \R_+)\,\mid\, y \psi \in L^2(\R\times \R_+)\Big\}.
\end{equation}
We introduce the following additional assumptions on $V$ and $\alpha$.
\begin{assumption}[Assumptions on $V$]\label{ass:V}
The function $V$ satisfies the following properties 
\begin{itemize}
\item[(i)] $V$ is smooth and bounded as well as all of its derivatives. 
\item[(ii)] $V(x) \geq 0$ for all $x \in \R$ and $V(x)$ vanishes only for $x = 0$. This minimum is furthermore non-degenerate, i.e. $V''(0) > 0$.
\item[(iii)] $V$ is bounded below at infinity, i.e., $V_\infty := \displaystyle\liminf_{|x|\to \infty} V(x) > 0$. 
\end{itemize}
\end{assumption}
\begin{assumption}[Assumptions on $\alpha$]\label{ass:alpha}
The function $\alpha$ satisfies the following properties
\begin{itemize}
\item[(i)] $\alpha$ is smooth and bounded as well as all of its derivatives, and satisfies $\alpha \geq \alpha_0$ for some $\alpha_0 > 0$. 
\item[(ii)] $\alpha'(0) = 0$,
\item[(iii)] It holds that
\begin{equation}
\label{eq:cond_alpha}
(\inf\alpha)^{\frac23}|z_2|-(\sup\alpha)^{\frac23}|z_1|>0
\end{equation}
where $z_1 \approx -2.33811$ and $z_2 \approx -4.08795$ are the the rightmost real zeros of the Airy function.
\end{itemize}
\end{assumption}
For each $x \in \R$, we denote by $\lambda_{1,\alpha}(x)$ the ``first'' eigenvalue (i.e., smallest in magnitude) of the complex Airy operator $D_y^2 + i\alpha(x)u$ on the half line with Dirichlet conditions, which is given by 
\begin{equation}
\label{eq:def_lambda0}
\lambda_{1,\alpha}(x) := \alpha(x)^{2/3} |z_1| e^{i \frac{\pi}{3}}.
\end{equation}
Our main result below is an Agmon estimate (after S. Agmon, see \cite{agmon1982lectures}) describing the localization in the $x$-variable of the eigenfunctions associated to the eigenvalues in $D(\lambda_{1,\alpha}(0)h^{\frac23},Rh)$ in terms of the following Agmon distance
\begin{equation}
\label{eq:def_phieta}
\phi_{\mu}(x) := \frac{1-\mu}{\sqrt{2}}\left|\int_0^x\sqrt{V(s)}\,ds\right|
\end{equation}
where $\mu \in (0,1)$. 
\begin{theorem}
\label{thm:main}
	Let $V$ and $\alpha$ satisfy Assumptions \ref{ass:V} and \ref{ass:alpha}, let $R > 0$, $\mu\in(0,1)$ and let $\phi_\mu$ be defined by \eqref{eq:def_phieta}. There exist $C, h_0>0$ such that the estimate 
	\begin{equation}
	\label{eq:AgmonMain}
	\|e^{\phi_\mu/h}\psi\|_{L^2(\R \times \R_+)}\leq C\|\psi\|_{L^2(\R\times \R_+)}
	\end{equation}
	holds for all $h \in (0,h_0)$, $\lambda\in D\left(\lambda_{1,\alpha}(0)h^{\frac23},Rh\right)$ and all $\psi \in \mathcal{H}^2$ satisfying ${(\mathscr{L}_h - \lambda) \psi = 0}$.
	\end{theorem}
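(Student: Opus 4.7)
\emph{Setup.} Replace $\phi_\mu$ by a bounded Lipschitz truncation $\phi^N$ with $\phi^N = \phi_\mu$ on $\{|x|\leq N\}$ and $|(\phi^N)'|\leq \phi_\mu'$ everywhere; set $u := e^{\phi^N/h}\psi$. It suffices to prove $\|u\| \leq C\|\psi\|$ with $C$ independent of $N$ and $h$, since the theorem then follows by monotone convergence. The function $u$ satisfies $(\mathscr{L}_h^{\phi^N} - \lambda)u = 0$ with $\mathscr{L}_h^{\phi^N} = \mathscr{L}_h + 2i(\phi^N)'hD_x + h(\phi^N)'' - ((\phi^N)')^2$, since $\phi^N$ depends only on $x$.

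\emph{Operator-valued reduction.} Viewing $\mathscr{L}_h$ as a semiclassical $h$-pseudodifferential operator in $x$ with operator-valued symbol $a(x,\xi) = h^{2/3}(D_y^2 + i\alpha(x)y) + \xi^2 + iV(x)$ on $L^2(\R_+;dy)$, the spectral-gap condition \eqref{eq:cond_alpha} ensures that the rank-one spectral projector $\pi_1(x)$ onto the first Airy eigenspace is smooth and uniformly bounded in $x$. A Feshbach--Grushin reduction in this calculus, leveraging the a priori ellipticity of Proposition \ref{prop:main_ellip}, shows that $u^\perp := (1-\pi_1)u$ is negligible and that the ``ground-state coefficient'' $f(x)$ satisfies the effective 1D equation
\begin{equation*}
\bigl((hD_x + i(\phi^N)'(x))^2 + iV(x) + h^{2/3}W(x) - \tilde\lambda\bigr)f = O(h)f,
\end{equation*}
where $W(x) := \lambda_{1,\alpha}(x) - \lambda_{1,\alpha}(0) = O(x^2)$ (since $\alpha'(0)=0$) and $\tilde\lambda := \lambda - h^{2/3}\lambda_{1,\alpha}(0) = O(h)$.

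\emph{Agmon estimate and conclusion.} Careful manipulation of the real- and imaginary-part identities obtained by pairing the effective equation with $f$---using $((\phi^N)')^2 \leq \tfrac{(1-\mu)^2}{2}V$, the bounds $|W(x)| = O(\min(x^2,1))$ with $V(x) \geq cx^2$ near $0$, and absorbing cross terms via Cauchy--Schwarz---yields, for $h$ small and any $\mu \in (0,1)$, $\int V|f|^2 \leq Ch\|f\|^2$. To extract $\|f\|\leq C\|\psi\|$, take the classically allowed region $\Omega_h := \{|x|\leq C_0\sqrt h\}$ with $C_0$ large. On $\Omega_h^c$, $V(x) \geq cC_0^2 h$ near the well (and $V \geq V_\infty/2$ at infinity, by Assumption \ref{ass:V}(iii)), so $\int_{\Omega_h^c}|f|^2 \leq (C/C_0^2)\|f\|^2 \leq \tfrac12\|f\|^2$. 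On $\Omega_h$, $\phi^N(x) \leq \phi_\mu(C_0\sqrt h) = O(h)$ (since $V(x) \leq Cx^2$ near $0$), so $e^{2\phi^N/h} = O(1)$ and $\int_{\Omega_h}|f|^2 \leq C\|\psi\|^2$. Combining gives $\|f\|\leq C\|\psi\|$, and with the bound on $u^\perp$, $\|u\|\leq C\|\psi\|$ uniformly in $N,h$. The main obstacle is the Feshbach--Grushin reduction in the Agmon-weighted space: the commutators $[\pi_1, hD_x + i(\phi^N)']$ of order $h$ must be controlled uniformly via the operator-valued pseudodifferential calculus announced in the abstract; once that is done, the Agmon combination follows standard lines.
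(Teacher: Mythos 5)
Your plan is, in substance, the paper's own proof: conjugate by a truncated Agmon weight, split $u$ along the rank-one Airy projector $\pi_{1,\alpha}(x)$ (your ``Feshbach--Grushin reduction'' is exactly the paper's decomposition into $\Pi_{1,\alpha}$ and $\Id-\Pi_{1,\alpha}$), show the transverse part is negligible via operator-valued pseudodifferential calculus, reduce to the effective one-dimensional operator $(hD_x+i\Phi')^2+iV+h^{2/3}(\lambda_{1,\alpha}(x)-\lambda_{1,\alpha}(0))-\tilde\lambda$, run a variational estimate there, and conclude on the classically allowed region $\{|x|\lesssim h^{1/2}\}$ where $e^{\phi_\mu/h}=O(1)$. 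The effective potential, the role of $\alpha'(0)=0$, and the final splitting all match Propositions \ref{prop:ellip1}--\ref{prop:ellip2} and the short deduction in \S\ref{sec:main_ellip}.

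Two caveats. First, a \emph{bounded Lipschitz} truncation $\phi^N$ is incompatible with your own machinery: the symbol $p_h(x,\xi)=h^{2/3}\mathscr{A}_\alpha(x)+(\xi+i(\phi^N)')^2+iV$ must lie in a symbol class, and both the parametrix construction and Calder\'on--Vaillancourt require bounds on $\max_{n\le N}\|\partial_x^n(\phi^N)'\|_\infty$ \emph{uniformly in the truncation parameter}. The paper resolves this by truncating $V$ smoothly ($V_A=V\chi(\cdot/A)$), defining $\Phi_A$ from $V_A$ so that it remains a $\mu$-subsolution with uniformly bounded derivatives, and passing to the limit by Fatou; you should do the same rather than truncate $\phi_\mu$ directly. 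Second, the step you defer --- the uniform invertibility of the symbol on $\mathrm{Ran}(\Id-\pi_{1,\alpha})$ and the class estimates for the resulting parametrix $r_\lambda$ in the weighted calculus --- is the actual core of the paper (the resolvent bounds in the half-plane $G_\eta$, the $h^{-|\beta|/3}$ gains for $\partial^\beta r_\lambda$, and the commutator control), so as written the proposal is a correct roadmap rather than a complete proof.
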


\begin{remark}
\label{rem:comments_main}
~
		\begin{enumerate}[\hspace{-18pt}\rm (i)]
	\item  Under Assumption \ref{ass:V}, $\phi_\mu(x) \sim \frac{1 - \mu}{2}\sqrt{V''(0)}x^2$ as $x \to 0$. Thus, roughly speaking, Theorem~\ref{thm:main} states that the eigenfunctions are exponentially localized in a $O(h^{1/2})$-neighborhood of the (half) $y$-axis.
	
	\item Theorem \ref{thm:main} is sharp, in the sense that the result does not hold if we allow $\mu = 0$ (in particular, under the same assumptions, one cannot prove a localization at a scale $O(h^{1/2-\varepsilon})$ for any $\varepsilon > 0$). This is shown in Section \ref{sec:sharp}.
	
	\item The condition (iii) in Assumption \ref{ass:alpha} is merely included to make the proof simpler, but Theorem \ref{thm:main} holds without it. The reason is explained in the sketch of the proof in Section \ref{sec:sketch}, see Remark \ref{rem:get_rid_condition}.

	\item \label{remitem:pollution} It is possible to obtain a  localization at scale $O(h^{1/3})$, by using simple variational arguments. The reason why this approach does not obtain the optimal $O(h^{1/2})$ scale is related to a ``pollution effect'' caused by the numerical range of the complex Airy operator. We explain this in more details in Appendix \ref{app:pollution}.

	\item The results of this paper can be used to show the existence of an eigenvalue of $\mathscr{L}_h$ in the disk $D(\lambda_{1,\alpha}(0)h^{\frac23},Rh)$ for $R > 0$ large enough --  in particular, Theorem \ref{thm:main} has non-empty assumptions.  This is shown in Appendix \ref{app:sp_non_vide}. 
	The analog of this result is known for Bloch-Torrey operators on bounded domains, and we thus recover it in our setting by a new method. It can even be shown using analytic dilation arguments that the disk $D(\lambda_{1,\alpha}(0)h^{\frac23},Rh)$ in fact contains the eigenvalue of $\mathscr{L}_h$ that is smallest in magnitude, but we do not prove this here for the sake of conciseness.

	\item Although Theorem \ref{thm:main} is stated for Dirichlet boundary conditions, it is not difficult to check that the proof in this paper extends to Neumann or Robin condition: the main difference is that the zeros of the Airy function must be replaced by the zeros of its derivative (for Neumann conditions) or of a generic linear combination of the Airy function and its derivative (for Robin boundary conditions).
	\item Our proof relies on a one-dimensional Agmon estimate in the range of a projection operator related to the first Airy eigenfunction, and an operator-valued semiclassical pseudodifferential argument in the complement of this space, which seems to be new in this context. We explain this in the next section.
	\end{enumerate}
	\end{remark}

\subsection{Sketch of the proof}

We start by reducing the proof to an elliptic estimate for an exponential conjugation of the operator, namely,
\begin{equation}
\label{eq:ellip_intro}
\|u\| \lesssim h^{-1}\|(\mathscr{L}_h^\Phi - \lambda) u\| + \|\mathbf{1}_{\{|x|\lesssim h^{1/2}\}} u\|
\end{equation}
for all $u \in \mathcal{H}^2$ and $\lambda \in D(\lambda_{1,\alpha}(0)h^{2/3},Rh)$, where $\Phi(x)$ is a suitable weight function and $\mathscr{L}_h^\Phi := e^{\Phi/h} \mathscr{L}_h e^{-\Phi/h}$ (see Proposition \ref{prop:main_ellip} below). Indeed, choosing $u = e^{\Phi/h} \psi$, the first term in the right-hand side vanishes, and the Agmon estimate follows by using that $e^{\Phi/h}$ is $h$-independently bounded on the set $\{|x| \lesssim h^{1/2}\}$. To simplify the explanation, we now sketch our method to obtain \eqref{eq:ellip_intro} in the particular case where $\Phi = 0$; the general case only involves minor modifications, mainly due to the fact that $\mathscr{L}_h^\Phi$ and $\mathscr{L}_h$ differ only by an operator of order one. 

The key idea is to regard the complex Schrödinger part $h^2 D_x^2 + iV(x)$ as a perturbation of the eigenvalue $\lambda$, thus viewing $\mathscr{L}_h-\lambda$ as
$$\mathscr{L}_h -\lambda  = h^{2/3} \mathscr{A}_\alpha -  \widetilde{\lambda}(x,hD_x)\,, $$
where 
$\mathscr{A}_\alpha = D_y^2 + i \alpha(x) y$ and $\widetilde{\lambda}(x,\xi) := \lambda - (\xi^2 + iV(x))$. This suggests that, roughly speaking, a bound on $(\mathscr{L}_h - \lambda)^{-1}$ could be obtained from an estimate of the ``operator-valued resolvent'' $(h^{2/3}\mathscr{A}_\alpha - \widetilde{\lambda})^{-1}$. One way to formalize this idea is to use the concept of pseudodifferential operators with {\em operator-valued symbols} (the main results needed here are recapped in \S \ref{sec:recap_pseudo}). Namely, we rewrite 
$$\mathscr{L}_h - \lambda = \textup{Op}_h (h^{2/3}\mathscr{A}_\alpha(x) - \tilde{\lambda}(x,\xi))$$ 
where for each $x \in \R$, $\mathscr{A}_\alpha(x) := D_y^2 + i\alpha(x) y$ is a one-dimensional differential operator acting on a subspace of $L^2(\R_+)$ with Dirichlet boundary conditions at $y = 0$. The {\em quantization} $\textup{Op}_h$ works analogously to a semiclassical quantization on scalar symbols. The operator $\mathscr{A}_\alpha(x)$ is well-understood: its spectrum consists of the simple eigenvalues given by 
$$\lambda_{n,\alpha}(x) = \alpha(x)^{2/3} |z_n| e^{i \frac{\pi}{3}}\,,$$
where $z_n$ is the $n$-th real zero of the Airy function. This leads to the following natural question: does the region
$$U := \left\{h^{-2/3} \widetilde{\lambda}(x,\xi) \,\,\big|\,\, h > 0\,, \,\, \lambda \in D(\lambda_{1,\alpha}(0)h^{\frac23},Rh)\,,\,\, (x,\xi) \in \R^2\right\}\subset \mathbb{C}$$
contain any poles of $(\mathscr{A}_\alpha(x) - z)^{-1}$? The answer is that, since $\xi^2 \geq 0$ and $V(x) \geq 0$, the perturbation $-(\xi^2 + iV)$ ``pushes'' the eigenvalue $\lambda$ towards the ``south-west'' of the complex plane. In particular, $U$ is contained in a diagonal half-plane $\mathcal{P}$ as represented in Figure~\ref{fig:half_plane_intro}. Assuming that for all $x \in \R^d$, $\lambda_{2,\alpha}(x)$ remains sufficiently larger than $\lambda_{1,\alpha}(0)$ for all $x$, the only possible pole of $(h^{2/3} \mathscr{A}_\alpha(x) - z)^{-1}$ in $U$ is $\lambda_{1,\alpha}(x)h^{2/3}$. The previous condition is guaranteed for $h$ small enough provided that
$$(\inf \alpha)^{2/3} |z_2| > (\sup \alpha)^{2/3} |z_1|,$$
i.e., if $\alpha$ satisfies the condition (iii) in Assumption \ref{ass:alpha}.

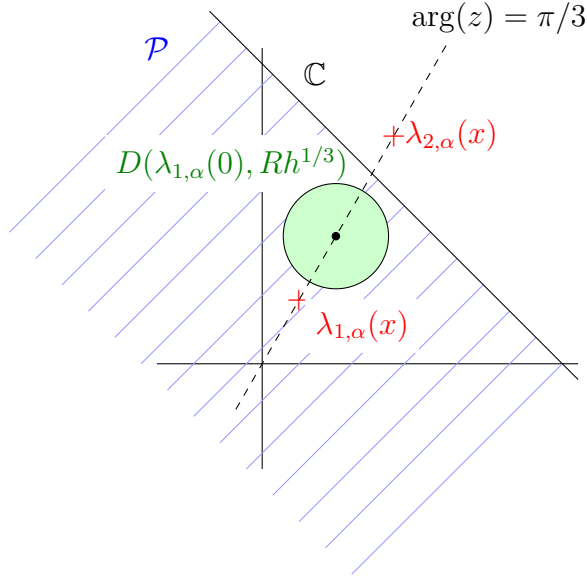
\begin{figure}[htbp]
\centering
\begin{tikzpicture}[scale=0.7]
\draw (-2,0)--(6,0);
\draw (0,-2)--(0,6);
\node at (1,5.5) {$\mathbb{C}$};
\coordinate (z1) at ({1.4*cos(60)},{1.4*sin(60)});
\coordinate (z2) at ({5*cos(60)},{5*sin(60)});
\coordinate (z10) at ({2.8*cos(60)},{2.8*sin(60)});

\node[above] at ({7*cos(60)+1},{7*sin(60)}) {$\arg(z) = \pi/3$};
\node[color=blue] at (-2,6) {$\mathcal{P}$};
\foreach \i in {0,...,13}
{\coordinate(a) at ($(5.7,0)+(-0.5*\i,0.5*\i)$);
\coordinate(b) at ($(a) + (-4,-4)$);
\draw[color=blue!40!white] (a)--(b);}
\node[green!50!black,fill=white] at (-0.58,3.8) {$D(\lambda_{1,\alpha}(0),Rh^{1/3})$};
\draw[fill=green!20!white] (z10) circle(1cm);
\draw[fill=black] (z10) circle(2pt);
\node[red] at (z2) {$+$};
\node[red,right] at (z2) {$\lambda_{2,\alpha}(x)$};
\node[red] at (z1) {$+$};
\node[red,fill=white] at ($(z1)+(1.2,-0.5)$) {$\lambda_{1,\alpha}(x)$};
\draw[dashed] ({-1*cos(60)},{-1*sin(60)})--({7*cos(60)},{7*sin(60)});
\draw[domain=-1:6] plot ({\x},{5.7 - \x});
\end{tikzpicture}
\caption{Poles of the resolvent $(\mathscr{A}_\alpha(x) - z)^{-1}$ (red), a region $\mathcal{P} \subset \mathbb{C}$ (blue hatched half plane) enclosing the set $U$ of the values taken by $h^{-2/3} \widetilde{\lambda}(x,\xi)$ for $(x,\xi) \in \R^2$ and $\lambda \in D(\lambda_{1,\alpha}(0)h^{2/3},Rh)$}
\label{fig:half_plane_intro}
\end{figure}

To capture the influence of the pole $\lambda_{1,\alpha}(x)$, it is useful to introduce the spectral projection $\pi_{1,\alpha}(x)$ onto to the one-dimensional eigenspace associated to $\lambda_{1,\alpha}(x)$. The point is that $h^{2/3}\mathscr{A}_\alpha(x) - \widetilde{\lambda}(x,\xi)$ can be boundedly inverted on the range of $\Id - \pi_{1,\alpha}(x)$, and the inverse $r(x,\xi)$ is itself a sufficiently well-behaved symbol (see Proposition \ref{prop:rlambda_class}). The quantization of this inverse will give a first-order parametrix for $\mathscr{L}_h^\Phi - \lambda$. Indeed, using a composition theorem for pseudodifferential operators with operator-valued symbols (Theorem \ref{thm:compo} below), we obtain
$$\textup{Op}_h(r(x,\xi)) \textup{Op}_h(h^{2/3} \mathscr{A}_\alpha(x) - \widetilde{\lambda}(x,\xi)) = \Id - \textup{Op}_h(\pi_{1,\alpha}) + \textup{remainders},$$
In other words, writing $R := \textup{Op}_h(r(x,\xi))$ and $\Pi_{1,\alpha} := \textup{Op}_h(\pi_{1,\alpha})$, we have
$$R(\mathscr{L}_h - \lambda) = \Id - \Pi_{1,\alpha} + \textup{remainders}.$$
(see \eqref{eq:composition} below). We then use the Calder\'{o}n-Vaillancourt theorem (Theorem \ref{thm:CadlderonVaillancourt} below) to show that $\|R\| \lesssim h^{-2/3}$ as well as to control the remainder terms. This leads to the estimate
\begin{equation}
\label{eq:ellip2_intro}
\|(\Id - \Pi_{1,\alpha})u\| \lesssim h^{-2/3} \|(\mathscr{L}_h - \lambda)u\| + h^{1/3}\|u\|,
\end{equation}
(see Proposition \ref{prop:ellip2}). 

\begin{remark}[Removing the assumption on $\alpha$]
\label{rem:get_rid_condition}
To avoid using the condition (iii) in Assumption \ref{ass:alpha}, one should instead invert the symbol of $\mathscr{L}_h - \lambda$ in the range of $\Id - \pi_{1,\alpha}(x) - \ldots - \pi_{n,\alpha}(x)$, where $\pi_{i,\alpha}(x)$ is the spectral projection on to the eigenspace associated to $\lambda_{i,\alpha}(x)$, and where $n$ is large enough so that $(\inf \alpha)^{2/3} |z_{n+1}| > (\sup \alpha)^{2/3} |z_1|$. 
\end{remark}

To complete the proof of the estimate \eqref{eq:ellip_intro}, it remains to estimate the contribution of $\Pi_{1,\alpha} u$. For this, we notice that 
$$(\mathscr{L}_h- \lambda)\Pi_{1,\alpha} = \left(\lambda_{1,\alpha}(x)h^{2/3} + (hD_x)^2 + iV - \lambda\right) \Pi_{1,\alpha};$$
that is, on the range of $\Pi_{1,\alpha}$, the operator $\mathscr{A}_\alpha$ is just the pointwise multiplication by $\lambda_{1,\alpha}(x)$. This effectively removes the pollution problem alluded to in Remark \ref{rem:comments_main} (\ref{remitem:pollution}), allowing to show the estimate
\begin{equation}
\label{eq:ellip1_intro}
\|\Pi_{1,\alpha}u\| \lesssim h^{-1}\|(\mathscr{L}_h - \lambda)\Pi_{1,\alpha}u\| + \|\mathbf{1}_{\{|x|\lesssim h^{1/2}\}}\Pi_{1,\alpha}u\|
\end{equation}
via simple variational arguments for a one-dimensional complex Schrödinger operator (see Proposition \ref{prop:ellip1}). The estimate \eqref{eq:ellip_intro} is finally obtained by summing \eqref{eq:ellip1_intro}, \eqref{eq:ellip2_intro}, combined with an estimate of the commutator $[\mathscr{L}_h,\Pi_{1,\alpha}]$ (Corollary \ref{cor:commut}).

\subsection{Organization of the paper}
\label{sec:sketch}

The remainder of this article is organized as follows. Section \ref{sec:reduction} reduces the proof of Theorem \ref{thm:main} to an elliptic estimate (Proposition \ref{prop:main_ellip}), which, after introducing a projection operator $\Pi_{1,\alpha}$, is further split into two key estimates, respectively on the range of $\Pi_{1,\alpha}$ and $(\Id - \Pi_{1,\alpha})$. The first one follows immediately from the properties of $\Pi_{1,\alpha}$ and is presented in \S \ref{sec:two_ellip}. Moreover, the (short) proof that 
Theorem \ref{thm:main} follows from the main elliptic estimate
can be found in \S\ref{sec:main_ellip}. Section \ref{sec:2D} -- the bulk of the paper -- is devoted to the second and more subtle key estimate on the range of $\Id - \Pi_{1,\alpha}$. Finally, in Section \ref{sec:sharp}, we establish the sharpness of Theorem \ref{thm:main}.

We also include several appendices. Appendix \ref{app:density} proves a density result used in the proofs of Section \ref{sec:reduction}. Appendix \ref{app:er} gathers elliptic regularity estimates for the operators involved in the paper. Appendix \ref{app:sp_non_vide} shows that our method gives the existence of an eigenvalue of $\mathscr{L}_h$ in the disk appearing in the assumption of Theorem \ref{thm:main}. Finally, Appendix \ref{app:pollution} discusses the ``pollution'' by the Airy operator mentioned in Remark \ref{rem:comments_main} (\ref{remitem:pollution}).

\subsection{Notation} Here we gather some of the notation used throughout this article. Let $C^\infty_c(\R^d)$ denote the space of infinitely differentiable functions in $\R^d$ with compact support, and $\mathscr{S}(\R^d)$ the Schwartz space. For a closed subset $F \subset \R^d$, $\mathscr{S}(F)$ denotes the set of restrictions to $F$ of elements of $\mathscr{S}(\R^d)$. For $n \in \mathbb{N}$ and $X \in \R^n$, let
$$\langle X \rangle := (1 + |X|^2)^{1/2}.$$
Let $D = \frac{1}{i}\partial$ (i.e., $D_{x_i} = \frac{1}{i} \frac{\partial}{\partial_{x_i}}$ and similarly for $D_{\xi_i}$ and so on). We denote by $\|\cdot\|_\infty$ the supremum norm on $\R^d$. 

We write $\mathscr{L}(E,F)$ for the space of bounded linear maps between the normed spaces $E$ and $F$ with the norm $$\|A\|_{\mathscr{L}(E,F)} := \sup_{x \in E\setminus \{0\}} \frac{\|Ax\|_{F}}{\|x\|_{E}}.$$ 
When $E = F$ (with equal norms), we put $\mathscr{L}(E):=\mathscr{L}(E,E)$ and let $\|\cdot\|_{\mathscr{L}(E)}$ denote the corresponding norm. As usual, the adjoint of a (possibly unbounded) operator $A$ is denoted by $A^*$, and the {\em transpose} $A^T$ is the operator defined on $\textup{dom}(A^*)$ by 
$$A^Tu:= \overline{A^* \overline{u}}.$$ 
Given two operators $A,B$, we denote by $[A,B] = AB - BA$ their commutator, whenever it makes sense. Given a Hilbert space $H$ and an (unbounded) linear map $A : H \to H$, we denote by $\sigma(A)\subset \mathbb{C}$ the spectrum of $A$ and by $\rho(A) \subset \mathbb{C}$ its resolvent set. We denote by $D(z_0,r)$ the open disk centered at $z_0 \in \mathbb{C}$ and with radius $r > 0$, and by $\mathscr{C}(z_0,r)$ its boundary, with the counter-clockwise orientation when used in contour integrals. 

Let $\cH := L^2(\R \times \R_+)$. We denote by $\|\cdot\|_{\R \times \R_+}$ and $\langle \cdot,\cdot \rangle_{\R \times \R_+}$ its usual norm and inner product, and we will often drop the $\R \times \R_+$ subscript when it will not lead to confusion. Let $\cH^2$ be as in \eqref{eq:defH2} with the norm
$$\|\psi\|_{\mathcal{H}^2}^2 := \|\psi\|^2 + \|\Delta \psi\|^2 + \|y \psi\|^2.$$
We also consider the space 
$$\mathrm{D} = \big\{f \in H^2(\R_+) \cap H^1_0(\R_+) \,\mid\, y f\in L^2(\R)\big\},$$
endowed with the norm 
$$\|f\|_{\mathrm{D}}^2 := \|f\|_{\R_+}^2 + \|\mathscr{A}f\|_{\R_+}^2\,,$$ 
where $\mathscr{A} := D_y^2 + iy$ and $\|\cdot\|_{\R_+}$ denotes the $L^2$ norm on $\R_+$. We denote by $\mathscr{A}_\alpha$ the operator defined on $\cHd$ by 
\begin{equation}
\mathscr{A}_\alpha  := D_y^2  + i\alpha(x)y
\end{equation}
(i.e., $\mathscr{A}_\alpha$ acts on functions of the variables $x$ and $y$) and for each $x \in \R$, we write $\mathscr{A}_\alpha(x)$ for the operator defined by the same formula but acting on $\mathrm{D}$
(i.e., for each $x$, $\mathscr{A}_\alpha(x)$ acts on functions of the variable $y$). 

Let $\mathrm{Ai}$ be the standard Airy function which can be defined for $x \in \R$ by the semi-convergent integral
$$\mathrm{Ai}(x) = \frac{1}{\pi} \int_{0}^\infty \cos\left(\frac{t^3}3 + xt\right)dt\,,$$
and let $0 > z_1 > z_2 > \ldots$ be the sequence of its zeros ($z_1 \approx -2.33811$, $z_2 \approx -4.08795$, $z_3 \approx -5.52056$, etc.). For every $x\in \R$ and $n \in \mathbb{N}$, we write
\begin{equation}
\label{eq:def_lambdah(x)}
\lambda_{n,\alpha}(x) := \alpha(x)^{2/3} |z_n| e^{i \frac{\pi}{3}}.
\end{equation}

To lighten the proofs, we will often denote by $C$ any generic positive constant whose value can be bounded independently of the universally quantified variables in the statement.

\section{Reduction to elliptic estimates}

\label{sec:reduction}
\subsection{Elliptic estimate for a conjugate operator}

\label{sec:main_ellip}
In this section, we reduce the proof of Theorem \ref{thm:main} to a global elliptic estimate (Proposition \ref{prop:main_ellip} below) for a conjugate operator of the form
\begin{equation}
\label{eq:def_Lphih}
\mathscr{L}^\Phi_h := e^{\Phi/h} \mathscr{L}_h e^{-\Phi/h} = h^{2/3} \mathscr{A}_\alpha + (hD_x + i\Phi')^2 + iV.
\end{equation}
Considering the leading order in $h$, one can guess that the eikonal equation $\phi(x)'^2 = iV(x)$ should play a role. Namely, here we consider weights $\Phi$ in \eqref{eq:def_Lphih} that are controlled by the real part of the eikonal solution $\phi$ in the following sense.
\begin{definition}[$\mu$-subsolution]
\label{def:etaSubsolution}
Let $\mu \in (0,1)$ and let $\Phi : \R \to \R_+$ be an infinitely differentiable function which is bounded as well as all of its derivatives.

We say that $\Phi$ is a {\em $\mu$-subsolution} 
if it satisfies
$$V - 2\Phi'^2 \geq \mu V.$$
\end{definition}

\begin{proposition}[Main elliptic estimate]
\label{prop:main_ellip}
For any $R > 0$ and $\mu \in (0,1)$, there exists $C(R,\mu) > 0$, $h_0 > 0$, $L > 0$ and $N > 0$ such that the inequality
$$\|u\| \leq C(R,\mu) \max_{n \leq N} \|\partial_x^n\Phi'\|_{\infty} \left(h^{-1}\|(\mathscr{L}_h^\Phi - \lambda)u\| + \|\mathbf{1}_{\{|x| \leq Lh^{1/2}\}} u\|\right)$$
holds for any $h \in (0,h_0)$, $\lambda \in D(\lambda_{1,\alpha}(0)h^{\frac23},Rh)$, $u \in \cHd$ and for any $\mu$-subsolution $\Phi$. 

\end{proposition}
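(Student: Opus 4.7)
The plan is to follow the two-piece decomposition outlined in Section 2. Introduce the operator-valued symbol $\pi_{1,\alpha}(x)$, the spectral projector onto the one-dimensional eigenspace of $\mathscr{A}_\alpha(x)$ associated to the eigenvalue $\lambda_{1,\alpha}(x)$, and let $\Pi_{1,\alpha} := \textup{Op}_h(\pi_{1,\alpha})$ be its quantization. Writing $u = \Pi_{1,\alpha} u + (\Id - \Pi_{1,\alpha}) u$, I would prove the target estimate separately on the range of $\Pi_{1,\alpha}$ and on its complement, then combine the two using a commutator estimate. The conjugation by $e^{\Phi/h}$ only modifies the horizontal part, giving $(hD_x + i\Phi')^2$ in place of $(hD_x)^2$; the additional terms are of lower semiclassical order and produce polynomial factors in $\Phi'$ and its derivatives — this is the source of the $\max_{n \leq N} \|\partial_x^n \Phi'\|_\infty$ factor in the statement.

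On $\textup{Ran}(\Pi_{1,\alpha})$, the vertical operator $h^{2/3} \mathscr{A}_\alpha$ reduces, up to a commutator, to the multiplication by $h^{2/3} \lambda_{1,\alpha}(x)$, so that
\begin{equation*}
(\mathscr{L}_h^\Phi - \lambda)\Pi_{1,\alpha} = \bigl(h^{2/3} \lambda_{1,\alpha}(x) + (hD_x + i\Phi')^2 + iV - \lambda\bigr)\Pi_{1,\alpha} + [\mathscr{L}_h^\Phi,\Pi_{1,\alpha}].
\end{equation*}
Taking the imaginary part of the scalar product with $\Pi_{1,\alpha} u$ kills the contribution from $(hD_x + i\Phi')^2$ (which is formally symmetric) and leaves $\int (V(x) + h^{2/3} \textup{Im}\,\lambda_{1,\alpha}(x) - \textup{Im}\,\lambda)|\Pi_{1,\alpha} u|^2 dx$. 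Since $\textup{Im}\,\lambda = h^{2/3} \textup{Im}\,\lambda_{1,\alpha}(0) + O(h)$, $\alpha'(0) = 0$, and $V(x) \geq c x^2$ near $0$ (with $V$ bounded below at infinity), the integrand dominates a positive multiple of $\min(x^2, 1) - Ch$, which is $\gtrsim h$ once $|x| \geq Lh^{1/2}$ with $L$ large. Cauchy–Schwarz and a Young inequality then yield the expected one-dimensional variational estimate
\begin{equation*}
\|\Pi_{1,\alpha} u\| \lesssim h^{-1}\|(\mathscr{L}_h^\Phi - \lambda)\Pi_{1,\alpha} u\| + \|\mathbf{1}_{\{|x| \leq Lh^{1/2}\}}\Pi_{1,\alpha} u\|.
\end{equation*}

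On $\textup{Ran}(\Id - \Pi_{1,\alpha})$, I would build a left parametrix by inverting the operator-valued symbol $h^{2/3}\mathscr{A}_\alpha(x) - \widetilde\lambda(x,\xi)$ on the range of $\Id - \pi_{1,\alpha}(x)$. Assumption \ref{ass:alpha} (iii) guarantees that in the half-plane $\mathcal{P}$ containing the values of $\widetilde\lambda$, the only pole of $(\mathscr{A}_\alpha(x) - z)^{-1}$ is $\lambda_{1,\alpha}(x)$, so the inverse symbol $r(x,\xi)$ is well-defined, belongs to an appropriate operator-valued symbol class, and satisfies $\|r\|_{\mathscr{L}(\cH,\mathrm{D})} \lesssim h^{-2/3}$. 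Quantizing and using the composition theorem for operator-valued pseudodifferential operators together with Calderón–Vaillancourt gives
\begin{equation*}
R(\mathscr{L}_h^\Phi - \lambda) = \Id - \Pi_{1,\alpha} + \mathcal{R}_h, \qquad \|R\|_{\mathscr{L}(\cH)} \lesssim h^{-2/3}, \qquad \|\mathcal{R}_h\|_{\mathscr{L}(\cH)} \lesssim h,
\end{equation*}
where the remainder $\mathcal{R}_h$ absorbs the contribution of the conjugation weight (hence the dependence on the supremum norms of $\Phi'$ and its derivatives). Applied to $u$, this yields $\|(\Id - \Pi_{1,\alpha})u\| \lesssim h^{-2/3}\|(\mathscr{L}_h^\Phi - \lambda)u\| + h^{1/3}\|u\|$.

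It remains to sum the two estimates. Writing $(\mathscr{L}_h^\Phi - \lambda)\Pi_{1,\alpha} u = \Pi_{1,\alpha}(\mathscr{L}_h^\Phi - \lambda)u + [\mathscr{L}_h^\Phi,\Pi_{1,\alpha}] u$ and invoking Corollary \ref{cor:commut} to control the commutator by $o(h)\|u\|$ (after absorbing elliptic-regularity contributions from $hD_x u$), the Step 1 estimate becomes $\|\Pi_{1,\alpha} u\| \lesssim h^{-1}\|(\mathscr{L}_h^\Phi - \lambda)u\| + o(1)\|u\| + \|\mathbf{1}_{\{|x|\leq Lh^{1/2}\}}u\|$. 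Adding to it the Step 2 bound and absorbing the $o(1)\|u\|$ terms on the left for $h$ small enough yields the desired inequality, with the factor $\max_{n\leq N}\|\partial_x^n \Phi'\|_\infty$ coming from both the parametrix construction and the commutator. The main obstacle is clearly Step 2, since it requires setting up the pseudodifferential calculus for operator-valued symbols taking values in $\mathscr{L}(\mathrm{D},\cH)$, verifying that $r(x,\xi)$ together with its derivatives satisfies the uniform symbol estimates needed for Calderón–Vaillancourt, and carefully tracking the $\Phi$-dependence of the remainder terms.
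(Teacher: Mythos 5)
Your overall architecture is exactly the paper's: decompose $u = \Pi_{1,\alpha}u + (\Id - \Pi_{1,\alpha})u$, prove a one-dimensional variational estimate on $\textup{Ran}(\Pi_{1,\alpha})$, build an operator-valued parametrix on the complement, and glue the two with the commutator estimate of Corollary \ref{cor:commut}. Steps 2 and 3 match the paper's Propositions \ref{prop:ellip2} and the combination argument (your claim $\|\mathcal{R}_h\|\lesssim h$ is slightly optimistic — the Poisson-bracket term only gives $O(h^{1/3})$ after Calder\'on--Vaillancourt — but your stated conclusion $\|(\Id-\Pi_{1,\alpha})u\|\lesssim h^{-2/3}\|(\mathscr{L}_h^\Phi-\lambda)u\| + h^{1/3}\|u\|$ is the correct one).

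There is, however, a genuine gap in Step 1. The operator $(hD_x + i\Phi')^2$ is \emph{not} formally symmetric: its formal adjoint is $(hD_x - i\Phi')^2$. A direct computation gives
\begin{equation*}
\langle (hD_x+i\Phi')^2 f, f\rangle = \|hD_xf\|^2 - \|\Phi' f\|^2 + 2i\,\Re\langle hD_x f, \Phi' f\rangle,
\end{equation*}
so taking the imaginary part does not kill this contribution; it leaves the cross term $2\Re\langle hD_xf,\Phi'f\rangle$, which has no sign and is a priori as large as $\|hD_xf\|\,\|\Phi'f\|$. With only the imaginary part at your disposal there is no positive $\|hD_xf\|^2$ available to absorb it, and the argument collapses. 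Note also that your Step 1 never invokes the $\mu$-subsolution hypothesis, which is precisely the ingredient designed to handle this term. The paper's fix is to multiply the quadratic form by $e^{-i\pi/4}$ and take the \emph{real} part: this produces $\cos(\pi/4)\big(\|hD_xf\|^2 + 2\Re\langle hD_xf,\Phi'f\rangle + \langle(V-\Phi'^2)f,f\rangle\big)$ plus the Airy and $z$ contributions; the cross term is then absorbed via $2\Re\langle hD_xf,\Phi'f\rangle \geq -\|hD_xf\|^2 - \|\Phi'f\|^2$, and the resulting $\langle(V-2\Phi'^2)f,f\rangle$ is bounded below by $\mu\langle Vf,f\rangle$ exactly because $\Phi$ is a $\mu$-subsolution. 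With this correction (and the observation that $\tfrac{V}{2} + h^{2/3}(\alpha^{2/3}-\alpha(0)^{2/3}) \geq 0$ for small $h$, using $\alpha'(0)=0$ and the non-degeneracy of $V$), your Step 1 yields the paper's Proposition \ref{prop:ellip1} and the rest of your plan goes through.
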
  
This result is obtained as a direct consequence of Propositions \ref{prop:ellip1} and \ref{prop:ellip2} below. We now show that Theorem \ref{thm:main} follows from Proposition \ref{prop:main_ellip}.
\begin{proof}[Proof of Theorem \ref{thm:main} using Proposition \ref{prop:main_ellip}]
For $A \geq 1$, consider $\Phi_{A}(x) := \frac{1 - \mu}{\sqrt{2}} \left|\int_{0}^x \sqrt{V_A(s)}\,ds\right|$ with $V_A = V \chi_A$, where $\chi_A = \chi(x/A)$ where $\chi \geq 0$, $\supp \chi \subset [-1,1]$ and $\chi \equiv 1$ near $0$. Then $\Phi_A$ is a $\mu$-subsolution. Moreover, $\Phi_A \leq \phi_\mu$ and the derivatives of $\Phi'_A$ are uniformly bounded (in terms of $V$, $\chi$, and $A$); hence by Proposition \ref{prop:main_ellip} applied to $u := e^{\Phi_A/h} \psi$,
$$\|e^{\Phi_A/h} \psi\| \leq C(R,h_0) \|\mathbf{1}_{\{|x|\leq Lh^{1/2}\}} e^{\Phi_A/h} \psi\| \leq C(R,h_0) \|\mathbf{1}_{\{|x|\leq Lh^{1/2}\}}e^{\phi_\mu/h} \psi\|$$
where, importantly, $C(R,h_0)$ is independent of $A$. Using the Fatou lemma, we conclude that
$$\|e^{\phi_\mu/h} \psi\| = \liminf_{A \to \infty}\|e^{\Phi_A/h} \psi\| \leq C(R,h_0) \|\mathbf{1}_{\{|x|\leq Lh^{1/2}\}} e^{\phi_\mu/h} \psi\|.$$
The result follows using that $\mathbf{1}_{\{|x|\leq Lh^{1/2}\}} \phi_\mu$ can be bounded independently of $h$, since $\phi_\mu$ behaves quadratically near $x = 0$.
\end{proof}

\subsection{The projection \texorpdfstring{$\Pi_{1,\alpha}$}{Pi\_1,alpha} and its properties}
\label{sec:Pialpha}
Proposition \ref{prop:main_ellip} will be proved separately on the range and the kernel of projection operator $\Pi_{1,\alpha}$. We now define this operator and establish some of its properties. Let $\mathscr{A}$ be the {\em complex Airy operator on the half-line}, i.e., the unbounded linear operator defined by 
$$\mathscr{A} := D_y^2 + iy$$
densely defined on the domain
\begin{equation}
\label{eq:def_D}
\mathrm{D}:=\{f\in H^2(\R_+)\cap H^1_0(\R_+)\ : yf\in L^2(\R_+)\}.
\end{equation}
This operator has been studied in \cite{grebenkov2017complex,savchuk2017spectral}. Its adjoint $\mathscr{A}^*$ is the unbounded linear operator defined by 
$\mathscr{A}^* = D_y^2 - iy$
with domain $\mathrm{D}$; thus, 
\begin{equation}
\label{eq:adjointA}
\mathscr{A}^T = \mathscr{A}.
\end{equation}
For any $x \in \R$, let $\mathscr{A}_\alpha(x) : \mathrm{D} \to L^2(\R_+)$ be defined by 
\begin{equation}
\label{eq:def_Ax}
\mathscr{A}_\alpha(x) := D_y^2 + i \alpha(x)y.
\end{equation}
Observe that
\begin{equation}
\label{eq:Ax_is_UxAUx*}
\forall x \in \R\,, \quad \mathscr{A}_\alpha(x)= \alpha(x)^{2/3}\mathcal{U}_\alpha(x) \mathscr{A} \mathcal{U}_\alpha(x)^*
\end{equation}
where $\mathcal{U}_\alpha(x) : L^2(\R_+) \to L^2(\R_+)$ is the unitary operator
\begin{equation}
\label{eq:defUx}
(\mathcal{U}_\alpha(x) f)(y) = \alpha(x)^{1/6} f(\alpha(x)^{1/3}y).
\end{equation} 
It is well-known that $\mathscr{A}$ is closed (see, e.g., \cite[Proposition 3]{savchuk2017spectral}), and 
thus, so is $\mathscr{A}_\alpha(x)$ for each $x$. In what follows, we equip $\mathrm{D}$ with the graph norm
\begin{equation}
\label{eq:def_normeD}
\|f\|^2_{\mathrm{D}} := \|f\|^2_{\R_+} + \|\mathscr{A}f\|^2_{\R_+}.
\end{equation}

The operator $\mathscr{A}$ has compact resolvent and its spectrum is given by
\[\sigma(\mathscr{A})=\left\{|z_n|e^{i\frac{\pi}{3}}\,\mid\, n= 1,2,\ldots\right\}\,\]
where $0 > z_1 > z_2 > \ldots$ is the sequence of the real zeros of the Airy function. Let $\pi_1 : L^2(\R_+) \to L^2(\R_+)$ be the Riesz projector 
\begin{equation}
\label{eq:def_Pi1}
\pi_1 := \frac{1}{2\pi i} \int_{\mathscr{C}(|z_1|e^{i\frac{\pi}{3}},r)} (z - \mathscr{A})^{-1}\,dz
\end{equation}
where $r < |z_2| - |z_1|$.
\begin{proposition}
\label{prop:u1}
There exists $u_1 \in \mathrm{D}$ satisfying $\mathscr{A} u_1 = |z_1| e^{i \frac{\pi}{3}} u_1$ and $\langle u_1,\overline{u}_1\rangle_{L^2(\R_+)} = 1$ (where $\langle \cdot ,\cdot\rangle_{\R_+}$ denotes the standard inner product of $L^2(\R_+)$). Moreover,
$$\pi_1 f = \langle f,\overline{u_1}\rangle_{\R_+} u_1 \qquad \textup{for all } f \in L^2(\R_+).$$
\end{proposition}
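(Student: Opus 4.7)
My plan is to split the proof into three steps: (i) construct the eigenfunction $u_1$ explicitly as a rotated Airy function, (ii) exploit the fact that the Riesz projector $\pi_1$ is rank one, and (iii) identify the rank-one structure through the bilinear pairing $B(f,g) := \langle f, \overline{g}\rangle_{\R_+} = \int_0^\infty f(y)g(y)\,dy$, exploiting the $T$-symmetry $\mathscr{A}^T = \mathscr{A}$ of \eqref{eq:adjointA}.

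For step (i), I would look for an eigenfunction of the form $u(y) = \mathrm{Ai}(ay+b)$. Using $\mathrm{Ai}''(z) = z\,\mathrm{Ai}(z)$ and plugging into $-u''+iyu = \lambda u$ forces $a^3 = i$ and $\lambda = -a^2 b$. The cube root $a = e^{i\pi/6}$ is the unique one for which $\Re\bigl((ay)^{3/2}\bigr) = \tfrac{\sqrt 2}{2} y^{3/2} > 0$; the classical asymptotics $\mathrm{Ai}(z) \sim \tfrac{1}{2\sqrt\pi} z^{-1/4}e^{-\frac23 z^{3/2}}$ in $|\arg z|<\pi$ then yield exponential decay of $\mathrm{Ai}(ay+b)$ at $+\infty$ (the other two cube roots lead to exponential growth). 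The Dirichlet condition $u(0) = \mathrm{Ai}(b) = 0$ combined with the requirement $\lambda = |z_1|e^{i\pi/3}$ singles out $b = z_1$. Setting $\tilde u_1(y) := \mathrm{Ai}(e^{i\pi/6} y + z_1)$, the above decay, applied also to $\mathrm{Ai}'$, places $\tilde u_1$ in $\mathrm{D}$.

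For step (ii), the spectral analysis of the complex Airy operator in \cite{grebenkov2017complex,savchuk2017spectral} shows that $|z_1|e^{i\pi/3}$ is a simple eigenvalue, so $\pi_1$ is a rank-one projector and has the form $\pi_1 f = \varphi(f)\,\tilde u_1$ for some $\varphi$ in the dual of $L^2(\R_+)$. For step (iii), integration by parts (using the Dirichlet condition at $0$ and the decay at $\infty$) gives $B(\mathscr{A} f, g) = B(f,\mathscr{A} g)$ for $f,g \in \mathrm{D}$, which is equivalent to $\mathscr{A}^T = \mathscr{A}$. This $B$-symmetry is inherited by $(z-\mathscr{A})^{-1}$ and then, via the Riesz contour integral \eqref{eq:def_Pi1}, by $\pi_1$ itself. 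Applying $B(\pi_1 f, \tilde u_1) = B(f,\pi_1 \tilde u_1)$ together with $\pi_1 \tilde u_1 = \tilde u_1$ produces the identity
\begin{equation*}
\varphi(f)\, B(\tilde u_1, \tilde u_1) = B(f, \tilde u_1) \qquad \text{for all } f \in L^2(\R_+).
\end{equation*}

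The main (and essentially only) obstacle is to show $B(\tilde u_1, \tilde u_1) \neq 0$, without which $\varphi$ would not be determined by the above identity. Fortunately this nonvanishing is automatic from the identity itself: if $B(\tilde u_1, \tilde u_1)$ were zero, then $B(f,\tilde u_1) = 0$ for every $f \in L^2(\R_+)$, forcing $\tilde u_1 = 0$, a contradiction. Rescaling $u_1 := c\,\tilde u_1$ with $c$ any square root of $1/B(\tilde u_1, \tilde u_1)$ then normalizes $\langle u_1, \overline{u}_1\rangle_{\R_+} = B(u_1,u_1) = 1$, and dividing through in the displayed identity yields the claimed formula $\pi_1 f = \langle f, \overline{u}_1\rangle_{\R_+} u_1$.
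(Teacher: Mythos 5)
Your proof is correct, and it reaches the identification of $\pi_1$ by a route that differs from the paper's in two respects. First, you construct the eigenfunction explicitly as $\mathrm{Ai}(e^{i\pi/6}y+z_1)$ and verify membership in $\mathrm{D}$ from the Airy asymptotics; the paper instead takes an abstract eigenvector $u_1$ (existence being guaranteed by the compactness of the resolvent and the known spectrum), so your step (i) is more self-contained but also more work than strictly needed. Second, and more substantively, you identify the rank-one functional by propagating the transpose symmetry $\mathscr{A}^T=\mathscr{A}$ through the resolvent and the contour integral to get $\pi_1^T=\pi_1$, i.e.\ $B(\pi_1 f,g)=B(f,\pi_1 g)$, and then evaluate at $g=\tilde u_1$; the paper works instead with the adjoint, noting that $\overline{u_1}$ is an eigenvector of $\mathscr{A}^*$ for the algebraically simple eigenvalue $|z_1|e^{-i\pi/3}$, so that $\textup{Ran}(\pi_1^*)=\textup{Span}(\overline{u_1})$, and concludes via $\textup{Ker}(\pi_1)=\textup{Ran}(\pi_1^*)^\perp$. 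The two mechanisms are dual formulations of the same fact, and both ultimately rest on the algebraic simplicity of the eigenvalue imported from the literature; your version has the small advantage of never leaving the bilinear pairing $B$, which makes the non-degeneracy step ($B(\tilde u_1,\tilde u_1)\neq 0$, hence the legitimacy of the normalization $c^2=1/B(\tilde u_1,\tilde u_1)$) come out of the projector identity itself, exactly as the paper's $c\neq 0$ comes out of $\pi_1 u_1=u_1$. No gaps.
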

\begin{proof}
~
\begin{enumerate}[1.]
\item We first observe that for $z \in \gamma$, since $\mathscr{A}$ is closed and densely defined, $((z - \mathscr{A})^{-1})^* = (z- \mathscr{A}^*)^{-1}$, thus
$$\pi_1^* = \frac{1}{2\pi i} \int_{\gamma'} (z - \mathscr{A}^*)^{-1}\,dz$$
where $\gamma'$ is a circle centered at $|z_1| e^{-i\pi/3}$ and with radius $r < |z_2| - |z_1|$.

\item \label{step2_propu1}Let $u_1$ be an eigenvector of $\mathscr{A}$ associated to $|z_1| e^{i\frac{\pi}{3}}$. Observe that $\mathscr{A}^* \overline{u_1} = \overline{\mathscr{A}u_1} = |z_1|e^{-i\pi/3}\overline{u_1}$. Since the eigenvalues of $\mathscr{A}$ and $\mathscr{A}^*$ are algebraically simple, it follows (see, e.g., \cite[Proposition 3.35]{cheverry2021guide}) that $\textup{Ran}(\pi_1) = \textup{Span}(\{u_1\})$ and $\textup{Ran}(\pi_1^*) = \textup{Span}(\{\overline{u_1}\})$. Thus there exist $v,w \in L^2(\R_+)$ such that for all $f \in L^2(\R_+)$, 
$$\pi_1 f = \langle f,v\rangle_{\R_+} u_1\,, \quad \pi_1^* f = \langle f,w\rangle_{\R_+} \overline{u_1}.$$
In particular, $\textup{Ker}(\pi_1) = \{v\}^\perp$ and $\textup{Ran}(\pi_1^*) = \textup{Span}(\{\overline{u}_1\})$. Since $\textup{Ker}(\pi_1) = \textup{Ran}(\pi_1^*)^\perp$, it follows that $v = c\overline{u}_1$ for some $c \in \mathbb{C}$. 
\item Since $\pi_1 u_1 = u_1$, we deduce from step \ref{step2_propu1} that 
$$\langle u_1,c\overline{u}_1\rangle_{\R_+} = 1;$$
in particular, $c \neq 0$ and $\langle u_1,\overline{u}_1\rangle_{\R_+} = \frac{1}{\overline{c}}$. Choosing $c'$ such that $c'^2 = \overline{c}$, it is easy to check that the function $\tilde{u}_1 := c' u_1$ satisfies the requirements. \qedhere
\end{enumerate}
\end{proof}

In what follows, we fix $u_1$ as in Proposition \ref{prop:u1} and for all $x \in \R$, we denote
\begin{equation}
\label{eq:def_ux}
u_{\alpha,x}(y) := (\mathcal{U}_\alpha(x) u_1)(y).
\end{equation}
For all $x \in \R$, 
\begin{equation}
\label{eq:uxux*=1}
\langle u_{\alpha,x},\overline{u_{\alpha,x}}\rangle_{\R_+} = \langle \mathcal{U}_\alpha(x)^* \mathcal{U}_\alpha(x) u_1,\overline{u_1}\rangle_{\R_+} = 1;
\end{equation} 
Since $u_1 \in \mathscr{S}(\R_+)$, one can check that the map $\R \ni x \mapsto u_{\alpha,x} \in L^2(\R_+)$ is infinitely differentiable and satisfies
\begin{equation}
\label{eq:Dxux}
\forall k \in \mathbb{N}\,, \,\, \exists C_k > 0\,: \,\, \forall x \in \R\,, \quad \|D_x^k u_{\alpha,x}\|_{\R_+} \leq C_k.
\end{equation}
Let $j_\alpha : L^2(\R) \to \cH$ be defined for $f \in L^2(\R_+)$ by
$$(j_\alpha f)(x,y) := f(x)\cdot u_{\alpha,x}(y) \quad \textup{for all } (x,y) \in \R \times \R_+$$
and let $j_\alpha^T : \cH \to L^2(\R)$ be its transpose, i.e., 
\begin{equation}
\label{eq:def_jT}
\langle j_\alpha^T \psi,f\rangle_{\R_+} := \langle \psi,\overline{(j_\alpha\overline{f})}\rangle_{\R \times \R_+}\,, \quad (\psi,f) \in \cH \times L^2(\R_+).
\end{equation}
By the Fubini theorem, 
\begin{equation}
\label{eq:expression_jT}
(j_\alpha^T \psi)(x) = \langle \psi(x,\cdot),\overline{u}_{\alpha,x}\rangle_{\R_+}.
\end{equation}
\begin{definition}[The operator $\Pi_{1,\alpha}$]
We define $\Pi_{1,\alpha} : \cH \to \cH$ by 
$$\Pi_{1,\alpha} := j_\alpha \cdot j_\alpha^T.$$
\end{definition} 

To avoid worrying about the pointwise evaluation of almost-everywhere defined functions, it will be convenient to use the following density result. Its proof can be found in Appendix~\ref{app:density}.
\begin{lemma}
\label{lem:density}
The space 
$$X := C_{c}^\infty(\overline{\R\times \R_+}) \cap \cHd$$ 
is dense in $\cHd$. Here, $C_{c}^\infty(\overline{\R \times \R_+})$ denotes the set of restrictions to $\R \times \R_+$ of functions in $C_{c}^\infty(\R^2)$.  
\end{lemma}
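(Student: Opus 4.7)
\medskip

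\textbf{Proof proposal.} My plan is the standard two-step density argument for weighted Sobolev spaces on a half-space: first truncate $\psi\in\cHd$ to have compact support, then mollify the result. The only subtlety is that the smoothing step must respect the Dirichlet condition at $\{y=0\}$, and this I arrange by extending $\psi$ to the full plane by odd reflection across the boundary.

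\textbf{Truncation to compact support.} Fix $\chi\in C_c^\infty(\R^2)$ with $\chi\equiv 1$ in a neighborhood of the origin and set $\psi_R(x,y):=\chi(x/R,y/R)\psi(x,y)$. The product $\psi_R$ still vanishes at $y=0$, lies in $\cHd$, and has compact support in $\overline{\R\times\R_+}$. Dominated convergence yields $\|\psi-\psi_R\|\to 0$ and $\|y(\psi-\psi_R)\|\to 0$ as $R\to\infty$. For the $\Delta$-term, the Leibniz expansion
\[
\Delta\psi_R-\chi(\cdot/R)\Delta\psi=\tfrac{2}{R}(\nabla\chi)(\cdot/R)\!\cdot\!\nabla\psi+\tfrac{1}{R^2}(\Delta\chi)(\cdot/R)\,\psi
\]
tends to zero in $L^2$ because $\psi\in H^2$ (so in particular $\nabla\psi\in L^2$) and $\nabla\chi$, $\Delta\chi$ are bounded. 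It therefore suffices to approximate a compactly supported $\psi\in\cHd$.

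\textbf{Odd reflection and mollification.} Define $\tilde\psi$ on $\R^2$ by $\tilde\psi(x,y):=-\psi(x,-y)$ for $y<0$. Because $\psi$ satisfies the Dirichlet condition, a direct integration by parts shows that $\partial_y\tilde\psi$ coincides with the even extension of $\partial_y\psi$ (the would-be boundary contribution is exactly $2\psi(\cdot,0)=0$); the $x$-derivatives are unproblematic; and $\partial_y^2\tilde\psi$ is then the odd extension of $\partial_y^2\psi$, with no further boundary term since we are now differentiating an even $H^1$ function. Thus $\tilde\psi\in H^2(\R^2)$, with $\|\tilde\psi\|_{H^2(\R^2)}^2=2\|\psi\|_{H^2(\R\times\R_+)}^2$, and similarly $\|y\tilde\psi\|_{L^2(\R^2)}^2=2\|y\psi\|_{L^2(\R\times\R_+)}^2$. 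Pick a mollifier $\rho\in C_c^\infty(\R^2)$, $\rho\geq 0$, $\int\rho=1$, that is \emph{even in the $y$-variable}, and set $\rho_\epsilon(x,y):=\epsilon^{-2}\rho(x/\epsilon,y/\epsilon)$, $\psi_\epsilon:=\tilde\psi*\rho_\epsilon$. Then $\psi_\epsilon\in C_c^\infty(\R^2)$; convolving a $y$-odd function with a $y$-even mollifier yields a $y$-odd function, hence $\psi_\epsilon(x,0)=0$. Standard mollification gives $\psi_\epsilon\to\tilde\psi$ in $H^2(\R^2)$. For the weighted term, the identity
\[
y\,\psi_\epsilon=(y\tilde\psi)*\rho_\epsilon+\tilde\psi*(y\rho_\epsilon)
\]
reduces the problem to two pieces: the first converges to $y\tilde\psi$ in $L^2$ by standard mollification (since $y\tilde\psi\in L^2$), while the second is $O(\epsilon)$ in $L^2$ by Young's inequality, because $\|y\rho_\epsilon\|_{L^1(\R^2)}=\epsilon\|y\rho\|_{L^1(\R^2)}$.

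\textbf{Conclusion and main obstacle.} Restricting $\psi_\epsilon$ to $\R\times\R_+$ produces a sequence in $X=C_c^\infty(\overline{\R\times\R_+})\cap\cHd$ that converges to the original $\psi$ in the $\cHd$ norm, by combining the $H^2(\R^2)$ convergence and the $L^2(\R^2)$ convergence of the weighted term (both of which restrict trivially to the half-plane). The only genuinely delicate point is verifying that the odd extension lies in $H^2(\R^2)$: this relies crucially on the Dirichlet condition, which kills the boundary jump that would otherwise appear when differentiating $\tilde\psi$ across $\{y=0\}$. Everything else (the truncation step, the $y$-even choice of mollifier, and the commutator estimate for $y$ and convolution) is routine.
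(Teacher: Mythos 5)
Your proof is correct, and it takes a genuinely different route from the paper. Both arguments begin with the same cutoff step to reduce to compactly supported $\psi$, but then diverge. The paper first approximates $\psi$ in $H^2(\R\times\R_+)$ by a smooth function $\varphi_1$ that need not vanish on $\{y=0\}$, observes via the trace theorem that $\gamma\varphi_1$ is small in $H^{3/2}(\R)$, and then corrects the trace by solving a variational problem for $-\Delta+1$ with Dirichlet conditions and invoking elliptic regularity up to the boundary to bound the correction in $H^2$. Your odd-reflection argument avoids the trace theorem and boundary elliptic regularity entirely: the Dirichlet condition is exactly what kills the jump of $\tilde\psi$ across $\{y=0\}$, so $\partial_y\tilde\psi$ is the even extension of $\partial_y\psi$ (no delta), and differentiating an even extension never produces a boundary distribution, whence $\tilde\psi\in H^2(\R^2)$; the $y$-even mollifier then preserves oddness and hence the boundary condition, and your commutator identity $y\,(\tilde\psi*\rho_\epsilon)=(y\tilde\psi)*\rho_\epsilon+\tilde\psi*(y\rho_\epsilon)$ with $\|y\rho_\epsilon\|_{L^1}=O(\epsilon)$ handles the weight cleanly (in fact trivially, since after truncation everything has compact support). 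Your approach is more elementary and self-contained; its only limitation is that reflection is tied to the flat geometry of the half-plane, whereas the paper's trace-correction argument would survive on a curved boundary. For the lemma as stated, either proof is complete.
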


\begin{proposition}[Elementary properties of $j_\alpha$ and $\Pi_{1,\alpha}$]
\label{prop:jpi_elem}
The maps $j_\alpha : L^2(\R) \to \cH$, $j_\alpha^T : \cH \to L^2(\R)$ and $\Pi_{1,\alpha} : \cH \to \cH$ are bounded and satisfy 
\begin{itemize}
\item[(i)] $j_\alpha^T  \cdot j_\alpha = \Id_{L^2(\R)}$ and $\Pi_{1,\alpha}$ is a projection (i.e., $\Pi_{1,\alpha}^2 = \Pi_{1,\alpha}$).
\item[(ii)] The embeddings $j_\alpha(H^2(\R)) \subset \cHd$, $j_\alpha^T(\cHd) \subset H^2(\R)$ and (thus) $\Pi_{1,\alpha}(\cHd) \subset \cHd$, hold and are continuous.
\item[(iii)] $\Pi_{1,\alpha} F = F \Pi_{1,\alpha}$ holds for any multiplication operator of the form $(Fu)(x,y) := F(x)u(x,y)$. 
\item[(iv)] $\Pi_{1,\alpha} \mathscr{A}_\alpha = \mathscr{A}_\alpha \Pi_{1,\alpha} = \lambda_{1,\alpha} \Pi_{1,\alpha}$ as operators from $\cHd$ to $L^2(\R_+ \times \R)$, where $\lambda_{1,\alpha}$ is the multiplication operator $(\lambda_{1,\alpha} u)(x,y) := \lambda_{1,\alpha}(x) u(x,y)$, with $\lambda_{1,\alpha}(x)$ defined by \eqref{eq:def_lambdah(x)}. 
\end{itemize}
\end{proposition}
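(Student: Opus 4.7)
The strategy throughout is to exploit the unitary relation $u_{\alpha,x} = \mathcal{U}_\alpha(x) u_1$ from \eqref{eq:defUx}, together with the bounds $\alpha_0 \leq \alpha(x) \leq \|\alpha\|_\infty$ and the Schwartz property $u_1 \in \mathscr{S}(\R_+) \cap H^1_0(\R_+)$. The change of variables $z = \alpha(x)^{1/3} y$, combined with \eqref{eq:Dxux} and smoothness of $\alpha$, yields uniform-in-$x$ bounds
\[ \|D_x^k D_y^\ell u_{\alpha,x}\|_{\R_+} + \|y D_x^k u_{\alpha,x}\|_{\R_+} \leq C_{k,\ell}, \qquad k, \ell \in \mathbb{N}. \]
For part (i), Fubini gives $\|j_\alpha f\|_{\cH} = \|u_1\|_{\R_+} \|f\|_{L^2(\R)}$, and Cauchy--Schwarz applied to \eqref{eq:expression_jT} gives $\|j_\alpha^T \psi\|_{L^2(\R)} \leq \|u_1\|_{\R_+} \|\psi\|_{\cH}$, hence $\Pi_{1,\alpha}$ is bounded. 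The identity $j_\alpha^T j_\alpha = \Id_{L^2(\R)}$ reduces to the normalization $\langle u_{\alpha,x}, \overline{u_{\alpha,x}}\rangle_{\R_+} = 1$ recalled in \eqref{eq:uxux*=1}, and the idempotence $\Pi_{1,\alpha}^2 = \Pi_{1,\alpha}$ follows immediately.

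For (ii), the uniform bounds above combined with Leibniz's rule applied to $(j_\alpha f)(x,y) = f(x) u_{\alpha,x}(y)$ give all the necessary second-order Sobolev controls, while the Dirichlet condition $(j_\alpha f)(x,0) = 0$ follows from $u_1(0) = 0$. Symmetrically, for $\psi$ in the dense subspace $X := C_c^\infty(\overline{\R \times \R_+}) \cap \cHd$ of Lemma \ref{lem:density}, differentiating under the integral sign in \eqref{eq:expression_jT} expresses $\partial_x^k(j_\alpha^T \psi)(x)$ for $k \leq 2$ as a finite sum of pairings $\langle \partial_x^j \psi(x,\cdot), \partial_x^{k-j} \overline{u_{\alpha,x}}\rangle_{\R_+}$, yielding $\|j_\alpha^T \psi\|_{H^2(\R)} \lesssim \|\psi\|_{\cHd}$; the general case follows by density.

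Part (iii) is immediate from the factorizations $(j_\alpha^T(Fu))(x) = F(x)(j_\alpha^T u)(x)$ and $(j_\alpha(Fg))(x,y) = F(x)(j_\alpha g)(x,y)$, valid because $F$ depends only on $x$. For (iv), I would first derive the fibrewise eigenvalue relation $\mathscr{A}_\alpha(x) u_{\alpha,x} = \lambda_{1,\alpha}(x) u_{\alpha,x}$ by conjugating $\mathscr{A} u_1 = |z_1| e^{i\pi/3} u_1$ via \eqref{eq:Ax_is_UxAUx*}. The right-multiplication identity $\mathscr{A}_\alpha \Pi_{1,\alpha} u = \lambda_{1,\alpha} \Pi_{1,\alpha} u$ then follows by applying $D_y^2 + i\alpha(x) y$ directly to $(\Pi_{1,\alpha} u)(x,y) = (j_\alpha^T u)(x) u_{\alpha,x}(y)$. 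For the left-multiplication identity, two integrations by parts in $y$ (that is, the transpose invariance $\mathscr{A}^T = \mathscr{A}$ of \eqref{eq:adjointA} transported through \eqref{eq:Ax_is_UxAUx*}) transfer the operator onto $u_{\alpha,x}$:
\[ (j_\alpha^T \mathscr{A}_\alpha u)(x) = \int_0^\infty u(x,y)\bigl(\mathscr{A}_\alpha(x) u_{\alpha,x}\bigr)(y)\,dy = \lambda_{1,\alpha}(x)(j_\alpha^T u)(x), \]
with all boundary terms vanishing because both $u(x,\cdot)$ and $u_{\alpha,x}$ belong to $\mathrm{D}$. Applying $j_\alpha$ yields $\Pi_{1,\alpha} \mathscr{A}_\alpha u = \lambda_{1,\alpha} \Pi_{1,\alpha} u$.

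The principal technical subtlety is justifying the pointwise-in-$x$ computation of the last display for arbitrary $u \in \cHd$, since the slices $u(x,\cdot)$ are a priori defined (and in $\mathrm{D}$) only for almost every $x$. I would circumvent this by first proving all identities on the dense subspace $X$ of Lemma \ref{lem:density}, where the calculus is classical, and then extending by continuity, using the boundedness established in (ii) of every operator involved as a map $\cHd \to \cH$ (together with the easy bound $\|\mathscr{A}_\alpha u\|_{\cH} \lesssim \|u\|_{\cHd}$).
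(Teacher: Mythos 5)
Your proposal is correct and follows essentially the same route as the paper: the normalization \eqref{eq:uxux*=1} plus Fubini for (i), the uniform derivative bounds \eqref{eq:Dxux} with differentiation under the integral sign on the dense subspace $X$ of Lemma \ref{lem:density} for (ii), the obvious factorization for (iii), and the fibrewise eigenvalue relation combined with the transpose identity $\mathscr{A}_\alpha^T = \mathscr{A}_\alpha$ (which you realize as two integrations by parts in $y$) for (iv). Your explicit handling of the almost-everywhere-defined slices via density is exactly the mechanism the paper relies on as well.
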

\begin{proof}
~
\begin{itemize}
\item[(i)] From \eqref{eq:uxux*=1} and the Fubini theorem, we deduce that $\langle j_\alpha u,\overline{j_\alpha\overline{v}}\rangle_{\R \times \R_+} = \langle u,v\rangle_{\R}$, which implies that $j_\alpha^T \cdot j_\alpha = \Id_{L^2(\R)}$. Thus $\Pi_{1,\alpha}^2 = (j_\alpha\cdot j_\alpha^T)\cdot (j_\alpha \cdot j_\alpha^T)= j_\alpha\cdot (j_\alpha^T \cdot j_\alpha) \cdot j_\alpha^T = \Pi_{1,\alpha}$. 
\item[(ii)] The continuous embedding $j_\alpha(H^2(\R))$ follows from the definition of $j_\alpha$ and the property \eqref{eq:Dxux}. On the other hand, for $\psi \in X$, the combination of \eqref{eq:expression_jT} and \eqref{eq:Dxux} and differentiation under the integral sign implies that $j_\alpha^T \psi \in H^2(\R)$ with $\|j_\alpha^T \psi\|_{H^2(\R)} \leq \|\psi\|_{\cHd}$, and the continuous embedding $j_\alpha^T(\cHd) \subset H^2(\R)$ follows by density.  
\item[(iii)] By definition of $j_\alpha$, it is immediate that $j_\alpha F = Fj_\alpha$ (where $F$ stands both for the multiplication operator on $L^2(\R)$ and on $\cH$). It follows by taking the transpose that $j_\alpha^T F = F j_\alpha^T$, and thus $\Pi_{1,\alpha} F =( j_\alpha \cdot j_\alpha^T) F = F (j_\alpha \cdot j_\alpha^T) = F \Pi_{1,\alpha}$.
\item[(iv)] For any $f \in H^2(\R)$, 
$$\mathscr{A}_\alpha (j_\alpha f) = \lambda_{1,\alpha} (j_\alpha f)$$
by definition of $j_\alpha$ and using that for all $x \in \R$, $(D_y^2 + \alpha(x) iy )u_{\alpha,x} = \lambda_{1,\alpha}(x) u_{\alpha,x}$. By taking the transpose and using that $\mathscr{A}_\alpha^T = \mathscr{A}_\alpha$, it follows that $j^T \mathscr{A}_\alpha = j^T \lambda_{1,\alpha}$. Thus by (iii), 
$$\Pi_{1,\alpha} \mathscr{A}_\alpha f = j_\alpha \cdot j_\alpha^T \mathscr{A}_\alpha f = j_\alpha \cdot j_\alpha^T (\lambda_{1,\alpha} f) = \lambda_{1,\alpha} j_\alpha \cdot j_\alpha^T f = \lambda_{1,\alpha} \Pi_{1,\alpha} f.$$
This shows that $\Pi_{1,\alpha} \mathscr{A}_\alpha = \lambda_{1,\alpha} \Pi_{1,\alpha}$ and by taking the transpose and using again (iii), it follows that $\mathscr{A}_\alpha \Pi_{1,\alpha} = \lambda_{1,\alpha} \Pi_{1,\alpha}$. \qedhere
\end{itemize}
\end{proof}
\begin{proposition}[{Estimate of the commutators $[D_x^2,\Pi_{1,\alpha}]$}]
\label{prop:commutator}
There exists $C > 0$ such that for all $\psi \in \cHd$, 
$$\|[D_x^2,\Pi_{1,\alpha}]\psi\|_{\R \times \R_+} \leq C \|D_x\psi\|_{\R \times \R_+}.$$
\end{proposition}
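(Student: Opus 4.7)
The plan is to exploit the fact that $\Pi_{1,\alpha}$ is ``local in $x$'':
\[(\Pi_{1,\alpha}\psi)(x,\cdot) = P(x)\psi(x,\cdot),\]
where $P(x) : L^2(\R_+) \to L^2(\R_+)$ is the rank-one projection with kernel $(y,y') \mapsto u_{\alpha,x}(y) u_{\alpha,x}(y')$. I will first establish the estimate for $\psi$ in the dense subspace $X$ of $\cHd$ from Lemma~\ref{lem:density}, so that all pointwise computations and differentiations are classically justified, and extend at the end by density. Using that $D_x$ differentiates only in $x$ while $P(x)$ acts only in $y$ with $x$ a smooth parameter, the Leibniz rule yields
\begin{equation*}
[D_x^2, \Pi_{1,\alpha}]\psi = -2i\, \mathcal{P}_1 D_x\psi - \mathcal{P}_2 \psi,
\end{equation*}
where $\mathcal{P}_k$ denotes the pointwise-in-$x$ application of $(\partial_x^k P)(x)$ to $\psi(x,\cdot)$.

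The next step is to show that $\mathcal{P}_1$ and $\mathcal{P}_2$ are bounded on $\cH$. Each $\partial_x^k P(x)$ is a finite-rank operator whose kernel is a sum of products among $\{u_{\alpha,x}, \partial_x u_{\alpha,x}, \partial_x^2 u_{\alpha,x}\}$, and \eqref{eq:Dxux} guarantees uniform $L^2(\R_+)$-bounds on these functions. Hence $\sup_{x \in \R} \|\partial_x^k P(x)\|_{\mathscr{L}(L^2(\R_+))} \leq C$, and Fubini yields $\|\mathcal{P}_k \varphi\|_{\cH} \leq C \|\varphi\|_{\cH}$ for every $\varphi \in \cH$. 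In particular $\|\mathcal{P}_1 D_x\psi\| \leq C \|D_x\psi\|$, which handles the first term of the commutator immediately.

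The main obstacle is the term $\mathcal{P}_2 \psi$, since the direct bound only gives $\|\mathcal{P}_2\psi\| \leq C\|\psi\|$. To sharpen this to a bound by $\|D_x\psi\|$, I will use the operator-valued commutator identity $\mathcal{P}_2 = i[D_x, \mathcal{P}_1]$ -- a pointwise-in-$x$ reformulation of $\partial_x^2 P = \partial_x(\partial_x P)$ -- to rewrite
\[
[D_x^2, \Pi_{1,\alpha}]\psi = -i\, \mathcal{P}_1 D_x\psi - i\, D_x(\mathcal{P}_1\psi).
\]
Controlling $\|D_x(\mathcal{P}_1\psi)\|$ by $\|D_x\psi\|$ is where the projection structure $P(x)^2 = P(x)$ plays its essential role: differentiating this identity gives $\partial_x P = P(\partial_x P)(1-P) + (1-P)(\partial_x P) P$, so $\partial_x P(x)$ has purely off-diagonal action with respect to the splitting $\mathrm{Ran}\, P(x) \oplus \mathrm{Ran}\,(1 - P(x))$. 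Combined with integrations by parts in $x$ that exploit the orthogonality $j_\alpha^T(1 - \Pi_{1,\alpha}) = 0$, this off-diagonal structure absorbs the apparent zero-order contribution, leaving only terms controlled by $\|D_x\psi\|$. A density argument via Lemma~\ref{lem:density} then extends the inequality from $X$ to all of $\cHd$.

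The hard part is in the third step: removing the residual $\|\psi\|$ term from the naive bound on $\mathcal{P}_2\psi$ genuinely requires using the algebraic structure of the rank-one projection $P(x)$ and cannot be achieved from uniform operator-norm estimates on its $x$-derivatives alone.
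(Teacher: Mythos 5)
Your first two steps are correct and are essentially the paper's proof: on the dense subspace $X$ of Lemma \ref{lem:density} one differentiates $\Pi_{1,\alpha}\psi(x,\cdot)=\pi_{1,\alpha}(x)\psi(x,\cdot)$ under the integral sign, getting $[D_x^2,\Pi_{1,\alpha}]\psi=-2i\,\mathcal{P}_1D_x\psi-\mathcal{P}_2\psi$ with $\sup_x\|\partial_x^k\pi_{1,\alpha}(x)\|_{\mathscr{L}(L^2(\R_+))}\leq C$ by \eqref{eq:Dxux}, and concludes by density. But this delivers only $\|[D_x^2,\Pi_{1,\alpha}]\psi\|\leq C(\|D_x\psi\|+\|\psi\|)$, and your third step — which is supposed to remove the $\|\psi\|$ term — is where the argument breaks down. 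The identity $\mathcal{P}_2=i[D_x,\mathcal{P}_1]$ is circular: expanding $D_x(\mathcal{P}_1\psi)$ reproduces exactly the term $\mathcal{P}_2\psi$ you are trying to control. The off-diagonal identity $\partial_x\pi_{1,\alpha}=\pi_{1,\alpha}(\partial_x\pi_{1,\alpha})(1-\pi_{1,\alpha})+(1-\pi_{1,\alpha})(\partial_x\pi_{1,\alpha})\pi_{1,\alpha}$ constrains the block structure of the \emph{first} derivative but says nothing that would let you trade the zero-order operator $\partial_x^2\pi_{1,\alpha}$ for a derivative of $\psi$, and there is nothing to integrate by parts against inside the norm $\|\mathcal{P}_2\psi\|$.

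More importantly, the sharpening you are after is not just unproven but false whenever $\partial_x^2\pi_{1,\alpha}\not\equiv 0$ (i.e.\ for non-constant $\alpha$; note $\partial_x^2\pi_{1,\alpha}(0)$ is a nonzero multiple of $\alpha''(0)$). Take $\psi_R(x,y)=g(x/R)u(y)$ with $g\in C_c^\infty(\R)$, $g(0)\neq 0$, and $u\in\mathrm{D}$. Then $\|D_x\psi_R\|=O(R^{-1/2})$, while $\|\mathcal{P}_2\psi_R\|^2=\int_\R|g(x/R)|^2\|(\partial_x^2\pi_{1,\alpha})(x)u\|^2_{L^2(\R_+)}\,dx$ converges to a positive constant as $R\to\infty$ (choose $u$ with $(\partial_x^2\pi_{1,\alpha})(x)u\neq0$ on a set of positive measure); since $\|\mathcal{P}_1D_x\psi_R\|=O(R^{-1/2})$, the ratio $\|[D_x^2,\Pi_{1,\alpha}]\psi_R\|/\|D_x\psi_R\|$ blows up. So no algebraic manipulation can close your third step. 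The honest conclusion of your (and the paper's) computation is the bound $C(\|D_x\psi\|+\|\psi\|)$; this is in fact all that the subsequent Corollary \ref{cor:commut} needs, since the extra contribution there is $h^2\|\psi\|\leq h^{4/3}\|\psi\|$, which is already present on the right-hand side of that corollary. You should therefore stop after your second step, state the estimate with the additional $\|\psi\|$ term, and drop the third step entirely.
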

\begin{proof}
For $\psi \in C_c^\infty(\overline{\R \times \R_+}) \cap \cHd$, one can check that
$$\Pi_{1,\alpha} \psi(x,y) = \langle \psi_x,\overline{u_{\alpha,x}}\rangle_{\R_+}u_{\alpha,x}.$$
In this case, the claimed inequality is then obtained by differentiating under the integral sign and using \eqref{eq:Dxux}.
The result follows for any $\psi \in \cHd$ using the continuity of $[D_x^2,\Pi_{1,\alpha}]$ from $\cHd$ to $\cH$ (by property (ii) of Proposition \ref{prop:jpi_elem}) and density (Lemma \ref{lem:density}).
\end{proof}

\begin{corollary}
\label{cor:commut}
For all $R > 0$, $\mu \in (0,1)$, and $h_0 > 0$, there exists $C > 0$ such that the estimate
$$\|[\mathscr{L}_h^\Phi,\Pi_{1,\alpha}]\psi\|_{\R \times \R_+} \leq Ch^{4/3} \|\psi\|_{\R \times \R_+} + Ch^{2/3}\|(\mathscr{L}^\Phi_h - \lambda) \psi\|_{\R \times \R_+}$$
holds for all $h \in (0,h_0)$, $\lambda \in D(\lambda_{1,\alpha}(0)h^{2/3},Rh)$, $\psi \in \cHd$ and $\Phi$ any $\mu$-subsolution in the sense of Definition \ref{def:etaSubsolution}. 
\end{corollary}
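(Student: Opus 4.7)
The plan is to first reduce the commutator using the algebraic properties of $\Pi_{1,\alpha}$, then to bound the surviving pieces using the smoothness of $x\mapsto u_{\alpha,x}$, and finally to close the estimate via an energy inequality for $\langle(\mathscr{L}_h^\Phi-\lambda)\psi,\psi\rangle$.

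By Proposition~\ref{prop:jpi_elem}, $\Pi_{1,\alpha}$ commutes with $\mathscr{A}_\alpha$ and with every multiplication operator in $x$. Expanding $(hD_x+i\Phi')^2 = h^2 D_x^2 + 2ih\Phi' D_x + h\Phi'' - (\Phi')^2$, the multiplication terms and the $\mathscr{A}_\alpha$-term disappear, yielding
\[
[\mathscr{L}_h^\Phi,\Pi_{1,\alpha}] = h^2 [D_x^2,\Pi_{1,\alpha}] + 2ih\,\Phi'[D_x,\Pi_{1,\alpha}].
\]
Proposition~\ref{prop:commutator} bounds the first term by $Ch\|hD_x\psi\|$. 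For the second, I differentiate the explicit kernel $\Pi_{1,\alpha}\psi(x,y)=\langle\psi(x,\cdot),\overline{u_{\alpha,x}}\rangle\,u_{\alpha,x}(y)$; since $u_{\alpha,x}=\mathcal{U}_\alpha(x)u_1$ depends on $x$ only through $\alpha(x)$, we obtain $\partial_x u_{\alpha,x} = \alpha'(x)\,g(\alpha(x),\cdot)$ with $g$ smooth and $L^2_y$-bounded. Combining the pointwise bound $|\Phi'(x)\alpha'(x)|^2 \leq \tfrac{1-\mu}{2}\|\alpha'\|_\infty^2\,V(x)$ (from the $\mu$-subsolution property) with this factorization gives $\|\Phi'[D_x,\Pi_{1,\alpha}]\psi\| \leq C\,\langle V\psi,\psi\rangle^{1/2}$, and hence
\[
\|[\mathscr{L}_h^\Phi,\Pi_{1,\alpha}]\psi\| \leq Ch\|hD_x\psi\| + Ch\,\langle V\psi,\psi\rangle^{1/2}.
\]

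The main step is to show the energy estimate
\[
\|hD_x\psi\|^2 + \langle V\psi,\psi\rangle \leq C\,\|(\mathscr{L}_h^\Phi-\lambda)\psi\|\,\|\psi\| + Ch^{2/3}\|\psi\|^2. \qquad(\star)
\]
I compute $\Re\langle(\mathscr{L}_h^\Phi-\lambda)\psi,\psi\rangle$ and $\Im\langle(\mathscr{L}_h^\Phi-\lambda)\psi,\psi\rangle$, which produce, respectively, $h^{2/3}\|D_y\psi\|^2 + \|hD_x\psi\|^2 + h\langle\Phi''\psi,\psi\rangle - \|\Phi'\psi\|^2 - 2h\Im\langle\Phi' D_x\psi,\psi\rangle - \Re(\lambda)\|\psi\|^2$ and $h^{2/3}\int\!\alpha\, y|\psi|^2 + 2h\Re\langle\Phi' D_x\psi,\psi\rangle + \langle V\psi,\psi\rangle - \Im(\lambda)\|\psi\|^2$. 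I bound the cross terms $\pm 2h\langle\Phi' D_x\psi,\psi\rangle$ by $2\|\Phi'\psi\|\|hD_x\psi\|$ and apply Young's inequality together with $\|\Phi'\psi\|^2 \leq \tfrac{1-\mu}{2}\langle V\psi,\psi\rangle$. Taking a suitable linear combination $\Re+\eta\Im$ with $\eta=\eta(\mu)$ chosen so that the negative $\|\Phi'\psi\|^2$ is absorbed by the positive $\langle V\psi,\psi\rangle$ contribution, and using $|\Re\lambda|,|\Im\lambda|=O(h^{2/3})$ on $D(\lambda_{1,\alpha}(0)h^{2/3},Rh)$, I obtain $(\star)$.

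To conclude, I take square roots in $(\star)$ and apply Young's inequality $ab\leq \tfrac{a^2}{2\epsilon}+\tfrac{\epsilon b^2}{2}$ with $\epsilon=h^{1/3}$ to the factor $\|(\mathscr{L}_h^\Phi-\lambda)\psi\|^{1/2}\|\psi\|^{1/2}$, obtaining
\[
\|hD_x\psi\|,\;\langle V\psi,\psi\rangle^{1/2} \leq Ch^{-1/3}\|(\mathscr{L}_h^\Phi-\lambda)\psi\| + Ch^{1/3}\|\psi\|.
\]
Multiplying by $h$ and inserting into the bound derived above yields $\|[\mathscr{L}_h^\Phi,\Pi_{1,\alpha}]\psi\| \leq Ch^{2/3}\|(\mathscr{L}_h^\Phi-\lambda)\psi\| + Ch^{4/3}\|\psi\|$, which is the claim. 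The hard part will be the energy estimate $(\star)$: the linear combination of real and imaginary parts and the Young parameters must be tuned to $\mu$ so that both $\|hD_x\psi\|^2$ and $\langle V\psi,\psi\rangle$ have positive coefficients on the left, and every leftover $\|\psi\|^2$ term carries the required factor $h^{2/3}$ (which ultimately comes from $|\lambda|=O(h^{2/3})$).
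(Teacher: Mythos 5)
Your proof is correct and follows the same overall architecture as the paper's (reduce the commutator using Proposition \ref{prop:jpi_elem}, bound what survives, then close via the quadratic form and a Young inequality with $\varepsilon=h^{1/3}$), but it is actually more careful on one point: the paper's proof asserts $[\mathscr{L}_h^\Phi,\Pi_{1,\alpha}]=h^2[D_x^2,\Pi_{1,\alpha}]$ and thereby drops the cross term $ih\big(\Phi'[D_x,\Pi_{1,\alpha}]+[D_x,\Pi_{1,\alpha}]\Phi'\big)$ coming from $(hD_x+i\Phi')^2$, which is nonzero whenever $\alpha$ is non-constant. You correctly identify this term, and your treatment of it is exactly right: $[D_x,\Pi_{1,\alpha}]$ carries a factor $\alpha'(x)$ and no derivative of $\psi$, so $\|\Phi'[D_x,\Pi_{1,\alpha}]\psi\|\lesssim\|\Phi'\psi\|\lesssim\langle V\psi,\psi\rangle^{1/2}$ by the $\mu$-subsolution property, and after multiplication by $h$ this lands inside the claimed bound $Ch^{2/3}\|(\mathscr{L}_h^\Phi-\lambda)\psi\|+Ch^{4/3}\|\psi\|$.

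The one soft spot is your derivation of $(\star)$. The estimate itself is true --- it is precisely Proposition \ref{prop:basic_ellip} of the paper (applied with $\varepsilon=h^{1/3}$, after writing $\|\mathscr{L}_h^\Phi\psi\|\leq\|(\mathscr{L}_h^\Phi-\lambda)\psi\|+Ch^{2/3}\|\psi\|$), so you could simply cite it, as the paper's proof does. If you insist on re-deriving it, bounding \emph{both} cross terms by $2\|\Phi'\psi\|\,\|hD_x\psi\|$ and absorbing via Young is too lossy: one checks that keeping positive coefficients on both $\|hD_x\psi\|^2$ and $\langle V\psi,\psi\rangle$ in a single combination $\Re+\eta\Im$ then forces $\mu>2-\sqrt{2}$, whereas the statement must hold for all $\mu\in(0,1)$. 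The fix is to observe that the real part has no net cross term (the term $-2h\Im\langle\Phi'D_x\psi,\psi\rangle$ cancels exactly against $h\langle\Phi''\psi,\psi\rangle$, equivalently $\Re\langle(hD_x+i\Phi')^2\psi,\psi\rangle=\|hD_x\psi\|^2-\|\Phi'\psi\|^2$), so that only the imaginary part contributes $2\Re\langle hD_x\psi,\Phi'\psi\rangle$, which one handles by completing the square: $\Re+\Im$ produces $\|(hD_x+\Phi')\psi\|^2+\langle(V-2\Phi'^2)\psi,\psi\rangle\geq\mu\langle V\psi,\psi\rangle$, and $\|hD_x\psi\|$ is then recovered in a second step from the real part alone using $\Phi'^2\leq\frac{1-\mu}{2}V$. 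With that adjustment (or with a direct citation of Proposition \ref{prop:basic_ellip}) your argument is complete.
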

\begin{proof}
By Proposition \ref{prop:jpi_elem}, $[\mathscr{L}_h^\Phi,\Pi_{1,\alpha}] = h^2[D_x^2,\Pi_{1,\alpha}]$, and thus by Proposition \ref{prop:commutator}, 
$$\|[\mathscr{L}_h,\Pi_{1,\alpha}]\|_{\R \times \R_+} \leq Ch \|(hD_x)\psi\|_{\R \times \R_+}.$$
Using the basic elliptic estimate of Proposition \ref{prop:basic_ellip}, we deduce that for any $\varepsilon \in (0,1)$, 
$$\|[\mathscr{L}_h,\Pi_{1,\alpha}]\|_{\R \times \R_+} \leq C\varepsilon h\|(\mathscr{L}_h^\Phi- \lambda)\psi\|_{\R \times \R_+} + h\left(\varepsilon \lambda + \frac{1}{\varepsilon}\right)\|\psi\|_{\R \times \R_+}\,,$$
and the conclusion follows by taking $\varepsilon = h^{-1/3}$. 
\end{proof}

\subsection{The two key elliptic estimates}
\label{sec:two_ellip}

With the operator $\Pi_{1,\alpha}$ at hand, we can now state the two key elliptic estimates used to prove Proposition \ref{prop:main_ellip}. 

\begin{proposition}[Elliptic estimate on $\textup{Ran}(\Pi_{1,\alpha})$]
\label{prop:ellip1}
Let $R > 0$ and $\mu \in (0,1)$. Then, there exist $h_0 > 0$, $C(R,\mu,h_0) > 0$, and $L > 0$ such that the inequality
\begin{equation}
\|\Pi_{1,\alpha} \psi\|_{\R \times \R_+} \leq C(R,\mu,h_0) \left(h^{-1}\|(\mathscr{L}_h^\Phi - \lambda) \psi\|_{\R \times \R_+} + \|\mathbf{1}_{\{|x|\leq Lh^{1/2}\}} \psi\|_{\R \times \R_+} + h^{1/3}\|\psi\|_{\R \times \R_+} \right)
\end{equation}
holds for any $h \in (0,h_0)$, $\lambda \in D(\lambda_{1,\alpha}(0)h^{\frac23},Rh)$, $u \in \cHd$ and any $\mu$-subsolution $\Phi$. 
\end{proposition}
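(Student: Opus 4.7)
Since $j_\alpha^T j_\alpha = \Id_{L^2(\R)}$ by Proposition~\ref{prop:jpi_elem}(i), setting $f := j_\alpha^T\psi \in H^2(\R)$ gives $\Pi_{1,\alpha}\psi = j_\alpha f$, and since $\mathcal{U}_\alpha(x)$ is unitary, $\|\Pi_{1,\alpha}\psi\|_{\cH} = \|u_1\|_{L^2(\R_+)}\|f\|_{L^2(\R)}$. The plan is to prove a one-dimensional elliptic estimate
\begin{equation}
\label{eq:plan_1d}
\|f\|_{L^2(\R)} \leq C(R,\mu)\bigl(h^{-1}\|\mathscr{N}_h^\Phi f\|_{L^2(\R)} + \|\mathbf{1}_{\{|x|\leq Lh^{1/2}\}}f\|_{L^2(\R)}\bigr)
\end{equation}
for the scalar operator $\mathscr{N}_h^\Phi := h^{2/3}\lambda_{1,\alpha}(x) + (hD_x+i\Phi')^2 + iV(x) - \lambda$ acting on $L^2(\R)$, and then chain it with a bound of $\mathscr{N}_h^\Phi f$ in terms of $(\mathscr{L}_h^\Phi-\lambda)\psi$.

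For the chaining step, I will compute $j_\alpha^T \mathscr{L}_h^\Phi j_\alpha$ explicitly. Applying the Leibniz rule twice to $(hD_x+i\Phi')^2(j_\alpha f)$ produces, besides $j_\alpha[(hD_x+i\Phi')^2 f]$, correction terms involving $D_x u_{\alpha,x}$ and $D_x^2 u_{\alpha,x}$. Differentiating the normalization \eqref{eq:uxux*=1} in $x$ gives $\int_{\R_+} u_{\alpha,x}(y)\, D_x u_{\alpha,x}(y)\,dy = 0$, so the first correction is annihilated by $j_\alpha^T$, while a second differentiation shows that the second correction reduces to multiplication by a bounded function $\rho\in L^\infty(\R)$. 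Combined with Proposition~\ref{prop:jpi_elem}(iii)-(iv), this yields
$$j_\alpha^T\mathscr{L}_h^\Phi j_\alpha \,=\, h^{2/3}\lambda_{1,\alpha}(x) + (hD_x+i\Phi')^2 + iV(x) + h^2\rho(x).$$
Writing $(\mathscr{L}_h^\Phi-\lambda)\Pi_{1,\alpha}\psi = \Pi_{1,\alpha}(\mathscr{L}_h^\Phi-\lambda)\psi + [\mathscr{L}_h^\Phi,\Pi_{1,\alpha}]\psi$, applying $j_\alpha^T$ (using $j_\alpha^T\Pi_{1,\alpha} = j_\alpha^T$), and invoking Corollary~\ref{cor:commut} gives
\begin{equation}
\label{eq:plan_chain}
\|\mathscr{N}_h^\Phi f\|_{L^2(\R)} \leq C\|(\mathscr{L}_h^\Phi - \lambda)\psi\| + Ch^{4/3}\|\psi\| + Ch^2\|f\|.
\end{equation}

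The core step is the variational proof of \eqref{eq:plan_1d}. Setting $Q(f) := \langle \mathscr{N}_h^\Phi f, f\rangle_{L^2(\R)}$ and using $\lambda_{1,\alpha}(x) = \alpha(x)^{2/3}|z_1|e^{i\pi/3}$, $|\lambda - \lambda_{1,\alpha}(0)h^{2/3}|\leq Rh$, and $\Re\langle(hD_x+i\Phi')^2 g, g\rangle = \|hD_xg\|^2 - \|\Phi'g\|^2$, separating real and imaginary parts gives
\begin{align*}
\Re Q &= \tfrac{h^{2/3}|z_1|}{2}\!\!\int(\alpha(x)^{2/3}-\alpha(0)^{2/3})|f|^2 + \|hD_xf\|^2 - \|\Phi'f\|^2 + O(h)\|f\|^2, \\
\Im Q &= \tfrac{\sqrt{3}\,h^{2/3}|z_1|}{2}\!\!\int(\alpha(x)^{2/3}-\alpha(0)^{2/3})|f|^2 + 2\Im\langle\Phi'f, hD_xf\rangle + \!\int V|f|^2 + O(h)\|f\|^2.
\end{align*}
Summing, and using $2|\Im\langle\Phi'f, hD_xf\rangle| \leq \|hD_xf\|^2 + \|\Phi'f\|^2$ together with the $\mu$-subsolution bound $2\|\Phi'f\|^2 \leq (1-\mu)\int V|f|^2$, I obtain
$$\Re Q + \Im Q \geq \mu\!\int V|f|^2 + \tfrac{(1+\sqrt{3})|z_1|h^{2/3}}{2}\!\int(\alpha(x)^{2/3}-\alpha(0)^{2/3})|f|^2 - Ch\|f\|^2.$$
The $\alpha$-perturbation is controlled by noting that, since $\alpha'(0) = 0$, we have $|\alpha(x)^{2/3} - \alpha(0)^{2/3}| = O(x^2)$ near $0$; combined with $V(x) \gtrsim \min(x^2, 1)$ from Assumption~\ref{ass:V}, the ratio $|\alpha(x)^{2/3} - \alpha(0)^{2/3}|/V(x)$ is uniformly bounded on $\R\setminus\{0\}$, so for $h$ small this term is absorbed into $\tfrac{\mu}{2}\int V|f|^2$. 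Since $V(x) \geq cL^2 h$ on $\{|x| \geq Lh^{1/2}\}$ (for $Lh^{1/2} \leq 1$), choosing $L$ large enough to absorb the $Ch$ error on $\{|x| \geq Lh^{1/2}\}$, then using the crude bound $|\Re Q|+|\Im Q| \leq 2\|\mathscr{N}_h^\Phi f\|\|f\|$ and Young's inequality, yields \eqref{eq:plan_1d}.

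Combining \eqref{eq:plan_1d} and \eqref{eq:plan_chain}, absorbing the $Ch\|f\|$ contribution for $h$ small, and using that $\mathbf{1}_{\{|x|\leq Lh^{1/2}\}}$ commutes with $j_\alpha^T$ (Proposition~\ref{prop:jpi_elem}(iii)) to replace $\|\mathbf{1}_{\{|x|\leq Lh^{1/2}\}}f\|$ by $\|\mathbf{1}_{\{|x|\leq Lh^{1/2}\}}\psi\|$ (up to the bounded factor $\|j_\alpha^T\|$), the proposition follows. The main obstacle is the 1D variational estimate: $\mathscr{N}_h^\Phi$ being non-self-adjoint, coercivity is not automatic, and a useful lower bound requires carefully combining $\Re Q$ and $\Im Q$ so that (i) $\int V|f|^2$ appears as a coercive leading term, (ii) the indefinite cross term $\Im\langle\Phi'f, hD_xf\rangle$ is absorbed without losing the $V$-coercivity -- which is exactly what the $\mu$-subsolution property enables -- and (iii) the sign-indefinite $\alpha$-perturbation stays smaller than $\mu\int V|f|^2$, which is where the hypothesis $\alpha'(0) = 0$ enters.
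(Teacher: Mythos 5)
Your proof is correct and follows essentially the same route as the paper's: the core step is the identical variational coercivity estimate for the effective one-dimensional operator (taking $\Re Q+\Im Q$ is the paper's multiplication by $e^{-i\pi/4}$ in disguise, and the $\mu$-subsolution property, the absorption of the cross term, the use of $\alpha'(0)=0$ to dominate the $\alpha$-perturbation by $V$, and the choice of $L$ large are all used in exactly the same way). The only packaging difference is that you conjugate by $j_\alpha^T$ to land on a genuine operator on $L^2(\R)$ --- which forces the extra, correctly handled, Leibniz computation killing the $D_x u_{\alpha,x}$ cross term via the normalization $\langle u_{\alpha,x},\overline{u_{\alpha,x}}\rangle_{\R_+}=1$ --- whereas the paper applies the one-dimensional lower bound fiberwise in $y$ to $(\Pi_{1,\alpha}\psi)(\cdot,y)$ and integrates; also, the cross term in your $\Im Q$ should read $2\Re\langle\Phi'f,hD_xf\rangle$ rather than $2\Im\langle\Phi'f,hD_xf\rangle$, which does not affect the absorption step.
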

\begin{proof}
By Proposition \ref{prop:jpi_elem}, 
$$(\mathscr{L}_h^\Phi - \lambda) \Pi_{1,\alpha} = ([(hD_x + i\Phi')^2 + V_{h,\rm eff}]-z)\Pi_{1,\alpha}$$
on $\cH^2$, where $V_{h,{\rm eff}} = iV + h^{2/3} (\lambda_{1,\alpha}(x)- \lambda_{1,\alpha}(0))$ and $z = \lambda - \lambda_{1,\alpha}(0)h^{\frac23}$ satisfies $z \in D(0,Rh)$. 

Denoting $\mathscr{H}^\Phi_{\rm eff} := (hD_x + i\Phi')^2 + V_{h,{\rm eff}}$, we observe that for all $f \in H^2(\R)$, 
$$ \langle(\mathscr{H}_{\rm eff}^\Phi - z) f,f \rangle_\R = \|(hD_x)f\|^2_{\R} + 2i\Re \big(\big\langle (hD_x) f,\Phi' f\big\rangle_{\R}\big) + \langle (V_{h,{\rm eff}} - \Phi'^2-z)f,f \rangle_\R$$
where $\|\cdot\|_\R$ and $\langle \cdot,\cdot\rangle_\R$ stand for the norm and inner product of $L^2(\R)$, respectively. Multiplying by $e^{-i\frac\pi 4}$, taking the real part, using that 
$2\Re(\langle (hD_x)f ,\Phi' f\rangle_\R) \geq - \|(hD_x) f\|_{\R}^2 - \|\Phi' f\|_{\R}^2$ and $|z| \leq Rh$, we deduce that
\begin{align*}
&\textup{Re}\left[e^{-i\frac{\pi}{4}}\big\langle(\mathscr{H}_{\rm eff}^\Phi - z) f,f \big\rangle_\R \right] \\
&\qquad\geq \cos(\pi/4)\big\langle (V - 2 \Phi'^2 - Ch)f,f \big\rangle_{\R} + h^{2/3}\cos(\pi/12)|z_1| \big\langle \big(\alpha^{\frac23}- \alpha(0)^{\frac23}\big)f,f\big\rangle_\R\\
&\qquad \geq  c \big\langle \big[V + h^{\frac23}(\alpha^{\frac23}- \alpha(0)^{\frac23})-Ch\big]f,f \big\rangle_\R
\end{align*}
for some $c > 0$, since $\Phi$ is a $\mu$-subsolution. We then write
$$V(x) + h^{\frac23}(\alpha(x)^{\frac23}-\alpha(0)^{\frac23}) = \frac{V(x)}{2} +\left[ \frac{V(x)}{2} + h^{2/3}(\alpha(x)^{2/3}-\alpha(0)^{2/3})\right]\,,$$
and notice that, under Assumptions \ref{ass:V} and \ref{ass:alpha}, for $h$ small enough,  the term between square brackets in the right-hand side is non-negative for all $x \in \R$.
Therefore,
\begin{equation}
\label{eq:lower_bound_Heff}
\textup{Re}\left[e^{-i\frac\pi4}\big\langle(\mathscr{H}^\Phi_{\rm eff} - z)f ,f\big\rangle_\R\right]
\geq ch\|f\|_{\R}^2 - Ch\|\mathbf{1}_{\{V \leq Ch\}}f\|_{\R}^2.
\end{equation}

Given $y > 0$ and $\psi \in C^\infty_c(\overline{\R\times \R_+}) \cap \cHd$, we apply \eqref{eq:lower_bound_Heff} to $f = (\Pi_{1,\alpha} \psi)(\cdot,y) \in H^2(\R)$. Integrating in $y$ and using the Fubini theorem, we deduce that
$$\|\Pi_{1,\alpha} \psi\|_{\R \times \R_+}^2 \leq Ch^{-1}\textup{Re} \left[e^{-i\frac\pi4}\big \langle (\mathscr{L}_h - \lambda) \Pi_{1,\alpha} \psi ,\Pi_{1,\alpha} \psi\big\rangle_{\R \times \R_+}\right] + C \|\mathbf{1}_{\{V \leq Ch\}}\Pi_{1,\alpha} \psi\|_{\R \times \R_+}^2,$$
and the same follows for all $\psi \in \cH^2$ by density (Lemma \ref{lem:density}). Finally, using the Cauchy-Schwarz inequality, the fact that $\Pi_{1,\alpha}$ commutes with $\mathbf{1}_{\{V \leq Ch\}}$ (by Proposition \ref{prop:jpi_elem} (iii)), 
$$\|\Pi_{1,\alpha} \psi\|_{\R \times\R_+} \leq Ch^{-1} \|(\mathscr{L}_h^\Phi - \lambda) \Pi_{1,\alpha} \psi\|_{\R \times\R_+} + C \|\mathbf{1}_{\{V \leq Ch\}}\psi\|_{\R \times\R_+},$$
and the result follows by the commutator estimate of Corollary \ref{cor:commut} and the fact that $V(x) \sim \frac{V''(0)}{2}x^2$ for $x$ close to $0$. 
\end{proof}

\begin{proposition}[Elliptic estimate on $\textup{Ran}(\Id - \Pi_{1,\alpha})$]
\label{prop:ellip2}
Let $R > 0$ and $\mu > 0$. There exists $C(R,\mu) > 0$, $h_0 > 0$ and $N > 0$ such that the inequality
\begin{equation}
\|(\Id -\Pi_{1,\alpha}) \psi\|_{\R \times \R_+} \leq C(R,\mu)\max_{n \leq N} \|\partial_x^n \Phi'\|_\infty\big( h^{-\frac23}\|(\mathscr{L}_h^\Phi - \lambda) \psi\|_{\R \times \R_+} + h^{1/3}\|\psi\|_{\R \times \R_+}\big)
\end{equation}
holds for any $h \in (0,h_0)$, $\lambda \in D(\lambda_{1,\alpha}(0)h^{\frac23},Rh)$, $\psi \in \cHd$, and any $\mu$-subsolution $\Phi$.  
\end{proposition}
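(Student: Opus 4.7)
The plan is to implement the operator-valued pseudodifferential strategy sketched in Section \ref{sec:sketch}. Writing
$$\mathscr{L}_h^\Phi - \lambda = \textup{Op}_h(\sigma_\lambda(x,\xi)), \qquad \sigma_\lambda(x,\xi) := h^{2/3}\mathscr{A}_\alpha(x) + (\xi + i\Phi'(x))^2 + iV(x) - \lambda,$$
I view $\sigma_\lambda(x,\xi)$ as a family of closed operators on $L^2(\R_+)$ with common domain $\mathrm{D}$, depending smoothly on $(x,\xi)\in\R^2$, with spectrum $\{h^{2/3}\lambda_{n,\alpha}(x) + (\xi+i\Phi'(x))^2 + iV(x) - \lambda : n \geq 1\}$.

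The first substantive step is to construct an operator-valued symbol $r_\lambda(x,\xi)$ that inverts $\sigma_\lambda(x,\xi)$ on $\textup{Ran}(\Id - \pi_{1,\alpha}(x))$. Because $\lambda \in D(\lambda_{1,\alpha}(0)h^{2/3},Rh)$, $\xi^2 \geq 0$ and $V \geq 0$, the rescaled quantity $h^{-2/3}[\lambda - (\xi+i\Phi'(x))^2 - iV(x)]$ lies in the closed half-plane $\mathcal{P}$ of Figure \ref{fig:half_plane_intro} up to an $O(h^{1/3})$ perturbation coming from $\Phi'$; condition (iii) of Assumption \ref{ass:alpha} then guarantees, for $h$ small, a uniform separation between this set and the spectrum $\{\lambda_{n,\alpha}(x): n\geq 2\}$ of $\mathscr{A}_\alpha(x)|_{\textup{Ran}(\Id - \pi_{1,\alpha}(x))}$. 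Using the Riesz functional calculus, I would set
$$r_\lambda(x,\xi) := \frac{1}{2\pi i}\oint_{\Gamma(x)} \frac{(z - \mathscr{A}_\alpha(x))^{-1}(\Id - \pi_{1,\alpha}(x))}{h^{2/3}z - \lambda + (\xi + i\Phi'(x))^2 + iV(x)}\,dz$$
with $\Gamma(x)$ a contour enclosing $\{\lambda_{n,\alpha}(x) : n \geq 2\}$ and avoiding the denominator's pole. This yields the identities $r_\lambda \sigma_\lambda = \sigma_\lambda r_\lambda = \Id - \pi_{1,\alpha}$ and the uniform bound $\|r_\lambda(x,\xi)\|_{L^2(\R_+)\to L^2(\R_+)} \leq Ch^{-2/3}$, and by differentiating under the contour integral one obtains analogous bounds on all $(x,\xi)$-derivatives, placing $r_\lambda$ in the appropriate operator-valued symbol class (this is the content of the Proposition \ref{prop:rlambda_class} referenced in the sketch).

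With $r_\lambda$ in hand, the composition theorem (Theorem \ref{thm:compo}) yields
$$\textup{Op}_h(r_\lambda) \circ (\mathscr{L}_h^\Phi - \lambda) = \textup{Op}_h(r_\lambda \#_h \sigma_\lambda) = (\Id - \Pi_{1,\alpha}) + h\,\textup{Op}_h(\rho_\lambda),$$
the first term on the right coming from the leading product $r_\lambda \sigma_\lambda = \Id - \pi_{1,\alpha}$ and $\rho_\lambda$ collecting the subprincipal terms of the Moyal product. The Calderón-Vaillancourt theorem (Theorem \ref{thm:CadlderonVaillancourt}) bounds $\|\textup{Op}_h(r_\lambda)\| \leq Ch^{-2/3}$ and $\|\textup{Op}_h(\rho_\lambda)\| \leq Ch^{-2/3}$ in terms of finitely many $(x,\xi)$-seminorms of the respective symbols; since these seminorms involve $\partial_x^n \Phi'$ up to some finite order $n \leq N$, the prefactor $\max_{n\leq N}\|\partial_x^n\Phi'\|_\infty$ appears. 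Applying both sides to $\psi$ and using the triangle inequality gives the announced inequality. The main obstacle is purely symbolic: checking that $r_\lambda$ and the composition residue $\rho_\lambda$ really do belong to the symbol class required by Calderón-Vaillancourt, with seminorms uniform in $h$ and polynomial in the $\|\partial_x^n\Phi'\|_\infty$. This amounts to differentiating the contour integral any finite number of times in $(x,\xi)$ and organising the resulting products of resolvents $(z-\mathscr{A}_\alpha(x))^{-1}$, derivatives of $\pi_{1,\alpha}(x)$, and powers of $\xi + i\Phi'(x)$ so as to tame both their non-compact behaviour as $|\xi|\to\infty$ and their $h$-dependence.
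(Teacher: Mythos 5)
Your overall architecture — a reduced resolvent symbol $r_\lambda$ inverting the operator-valued symbol on $\textup{Ran}(\Id-\pi_{1,\alpha}(x))$, followed by the composition theorem and Calderón--Vaillancourt — is exactly the paper's strategy, and your contour-integral definition of $r_\lambda$ is a legitimate variant of the paper's holomorphic extension of $(\mathscr{A}_\alpha(x)-z)^{-1}(\Id-\pi_{1,\alpha}(x))$. The gap is that you defer, and in one place misstate, precisely the part of the argument that carries all the difficulty. First, "differentiating under the contour integral" does \emph{not} yield "analogous bounds on all $(x,\xi)$-derivatives'': each derivative of $r_\lambda$ genuinely costs an extra factor $h^{-1/3}$ (e.g.\ $\partial_\xi r_\lambda=-2(\xi+i\Phi')r_\lambda^2$, and the best one can prove is $\|(\xi+i\Phi')r_\lambda\|\lesssim h^{-1/3}$ while $\|r_\lambda\|\lesssim h^{-2/3}$, so $\|\partial_\xi r_\lambda\|\sim h^{-1}$). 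The correct statement is $h^{2/3}r_\lambda\in S_{1/3}$, and establishing it requires the quantitative gains $\|(\xi+i\Phi')r_\lambda\|\lesssim h^{-1/3}$ and $\|V'r_\lambda\|\lesssim h^{-1/3}$ (Lemma \ref{lem:h13}), which rest on the sub-solution property of $\Phi$ and the quadratic vanishing of $V$ at $0$, plus a combinatorial bookkeeping of the iterated derivatives (the $N$-atom/$N$-term induction of \S\ref{sec:class_r}). None of this is visible in your proposal, and without it you cannot even invoke the composition theorem, which requires $\delta<\tfrac12$.

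Second, even granting $h^{2/3}r_\lambda\in S_{1/3}$, your claimed remainder bound $\|\textup{Op}_h^w(\rho_\lambda)\|\leq Ch^{-2/3}$ does not follow from the symbol calculus: the naive estimate for the first Moyal correction is
\begin{equation*}
\Bigl\|\tfrac{h}{2i}\{r_\lambda,\sigma_\lambda\}\Bigr\|\lesssim h\cdot h^{-2/3}\cdot h^{-1/3}=O(1),
\end{equation*}
which would leave an $O(1)\|\psi\|$ error and destroy the $h^{1/3}\|\psi\|$ term in the statement. The paper rescues this by computing the Poisson bracket explicitly via Lemma \ref{prop:dxdxir} and exploiting its commutator structure (it reduces to $q_h[\partial_x\mathscr{A}_\alpha,h^{2/3}r_\lambda]$ plus $\partial_x\pi_{1,\alpha}$ terms, with $q_h=(\xi+i\Phi')h^{2/3}r_\lambda=O(h^{1/3})$), so that $\{h^{2/3}r_\lambda,p_h\}$ has $O(1)$ seminorms in $S_{1/3}$ and the correction is genuinely $O(h^{1/3})$. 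You also omit that the second-order remainder only maps $L^2(\R;\mathrm{D})\to\cH$ (second derivatives of the symbol hit the unbounded fibre operator), so one must additionally invoke the elliptic regularity estimate of Proposition \ref{prop:ellip_LhPhi} to convert $\|\psi\|_{L^2(\R;\mathrm{D})}$ into $\|\psi\|+h^{-2/3}\|\mathscr{L}_h^\Phi\psi\|$. As written, your proof establishes the right skeleton but not the inequality with the stated powers of $h$.
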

The proof is the object of the next section. It is completed in \S\ref{sec:proof_2D}.

\section{Elliptic estimate on the range of \texorpdfstring{$\Id - \Pi_{1,\alpha}$}{I-Pi\_1,alpha}}\label{sec:2D}
Recall the Riesz projector $\pi_1$ and the unitary operator $\mathcal{U}_\alpha(x)$ from \S \ref{sec:Pialpha}. For $x \in \R$, let  
$$\pi_{1,\alpha}(x) := \alpha(x)^{-2/3}\mathcal{U}_\alpha(x) \pi_1 \mathcal{U}_\alpha(x)^* = \frac{1}{2\pi i} \int_{\gamma} (\alpha(x)z - \mathscr{A}_\alpha(x))^{-1}\,dz.$$
Observe that by Proposition \ref{prop:u1} and rescaling, 
\begin{equation}
\label{eq:expression_pialpha}
\pi_{1,\alpha}(x) u = \langle u,\overline{u_{\alpha,x}}\rangle_{\R_+} u_{\alpha,x}
\end{equation}
where $u_{\alpha,x}$ is defined by \eqref{eq:def_ux}. 

Given $\eta > 0$, let
\begin{equation}
\label{eq:def_aeta}
a_\eta := (1 - \eta) |z_2| e^{i \pi/3},
\end{equation}
\begin{equation}
\label{eq:def_Geta}
G_\eta := \Big\{ z \in \mathbb{C} \,\mid\, \textup{Re} \left[ \big((\inf \alpha)^{2/3}a_\eta-z\big)e^{-i\frac{\pi}{4}}\right]\geq  0\Big\}\,;
\end{equation}
observe that $G_\eta$ is a translated and rotated half-plane, and that provided that the condition \eqref{eq:cond_alpha} in Assumption \ref{ass:alpha} holds, $G_\eta \cap \sigma(\mathscr{A}_\alpha(x)) = \{\lambda_{1,\alpha}(x)\}$ for all $x \in \R$ and $\eta$ small enough (see the sketch of the proof in \S\ref{sec:sketch}, especially Figure \ref{fig:half_plane_intro}). 

\begin{proposition}[Resolvent estimate for $\mathscr{A}_\alpha(x)$]
\label{prop:resAx}
Suppose that $\alpha$ satisfies Assumption \ref{ass:alpha}. Then, for any $\eta > 0$, there exists $C_\eta > 0$ such that for all $x \in \R$ and $z \in G_\eta \setminus \{\lambda_{1,\alpha}(x)\}$, 
\begin{equation}
\label{eq:res_Ax}
\|(\mathscr{A}_\alpha(x)-z)^{-1} (\Id - \pi_{1,\alpha}(x))\|_{\mathscr{L}(L^2(\R_+))} \leq \frac{C_\eta}{1 + |z|}.
\end{equation}
and 
\begin{equation}
\label{eq:res_Ax_D}
\|(\mathscr{A}_\alpha(x)-z)^{-1} (\Id - \pi_{1,\alpha}(x))\|_{\mathscr{L}(L^2(\R_+),\mathrm{D})} \leq C_\eta.
\end{equation}
\end{proposition}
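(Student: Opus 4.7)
The plan is to reduce the question, via the unitary rescaling \eqref{eq:Ax_is_UxAUx*}, to an analogous resolvent bound for the model operator $\mathscr{A}$ on a single rotated half-plane, and then to combine a dissipativity argument (for large $|w|$) with analyticity plus compactness (for bounded $|w|$). The main obstacle will be producing the uniform $1/|w|$ decay on a half-plane whose boundary passes arbitrarily close to the eigenvalue $|z_1|e^{i\pi/3}$ of $\mathscr{A}$ (which is removed from the problem by $\pi_1$), while keeping the bound from blowing up there.

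First, using \eqref{eq:Ax_is_UxAUx*} and the identity $\pi_{1,\alpha}(x) = \mathcal{U}_\alpha(x)\pi_1\mathcal{U}_\alpha(x)^*$ (which follows from \eqref{eq:expression_pialpha} together with $u_{\alpha,x} = \mathcal{U}_\alpha(x) u_1$, noting that $\mathcal{U}_\alpha(x)$ commutes with complex conjugation), I rewrite
$$(\mathscr{A}_\alpha(x)-z)^{-1}(\Id-\pi_{1,\alpha}(x)) = \alpha(x)^{-2/3}\,\mathcal{U}_\alpha(x)\,(\mathscr{A}-w)^{-1}(\Id-\pi_1)\,\mathcal{U}_\alpha(x)^*\,, \qquad w := \frac{z}{\alpha(x)^{2/3}}.$$
Since $\mathcal{U}_\alpha(x)$ is unitary and $\alpha \geq \alpha_0$, it suffices to prove $\|(\mathscr{A}-w)^{-1}(\Id-\pi_1)\|_{\mathscr{L}(L^2(\R_+))} \leq \tilde C_\eta/(1+|w|)$ for $w$ in the half-plane
$$\tilde G_\eta := \{w\in\mathbb{C} : \mathrm{Re}(we^{-i\pi/4}) \leq (1-\eta)|z_2|\cos(\pi/12)\},$$
which contains the image of $G_\eta$ under $z\mapsto z/\alpha(x)^{2/3}$ uniformly in $x$.

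The next step is to locate $\sigma(\mathscr{A})$ relative to $\tilde G_\eta$. For $n \geq 2$, $\mathrm{Re}(|z_n|e^{i\pi/3}e^{-i\pi/4}) = |z_n|\cos(\pi/12) \geq |z_2|\cos(\pi/12)$, so these eigenvalues lie strictly outside $\tilde G_\eta$. The only remaining pole of $(\mathscr{A}-w)^{-1}$ in a neighborhood of $\tilde G_\eta$ is at $|z_1|e^{i\pi/3}$, and it is cancelled by $\Id-\pi_1$ since $\pi_1$ is the associated (algebraically simple) Riesz projector. Hence $w \mapsto (\mathscr{A}-w)^{-1}(\Id-\pi_1)$ is analytic (in operator norm) on an open neighborhood of $\tilde G_\eta$, and in particular uniformly bounded on any compact subset of the form $\tilde G_\eta \cap \overline{D(0,M)}$.

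For the large-$|w|$ regime, I would exploit dissipativity: for $f \in \mathrm{D}$, integration by parts using the Dirichlet condition yields $\langle \mathscr{A}f,f\rangle_{\R_+} = \|D_y f\|_{\R_+}^2 + i\|y^{1/2}f\|_{\R_+}^2$, so the numerical range of $\mathscr{A}$ lies in the closed first quadrant $Q$, giving $\|(\mathscr{A}-w)^{-1}\|_{\mathscr{L}(L^2(\R_+))} \leq 1/\dist(w,Q)$ for $w \notin Q$. A direct geometric check shows that for $w \in \tilde G_\eta$ with $|w| \geq M_\eta$, one has $\dist(w,Q) \geq |w|/2$; combined with the compact bound from the previous paragraph (and the observation that $1/(1+|w|)$ is bounded below on $\tilde G_\eta \cap \overline{D(0,M_\eta)}$), this yields \eqref{eq:res_Ax}. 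Finally, \eqref{eq:res_Ax_D} follows from \eqref{eq:res_Ax} via the algebraic identity
$$\mathscr{A}_\alpha(x)(\mathscr{A}_\alpha(x)-z)^{-1}(\Id-\pi_{1,\alpha}(x)) = (\Id-\pi_{1,\alpha}(x)) + z\,(\mathscr{A}_\alpha(x)-z)^{-1}(\Id-\pi_{1,\alpha}(x))$$
together with the uniform (in $x$) equivalence of the graph norms of $\mathscr{A}$ and $\mathscr{A}_\alpha(x)$ on $\mathrm{D}$, itself a consequence of $\|yf\|_{\R_+} \leq C(\|\mathscr{A}_\alpha(x) f\|_{\R_+} + \|f\|_{\R_+})$, which I would prove by pairing $\mathscr{A}_\alpha(x) f$ with $yf$ and using $\alpha \geq \alpha_0$.
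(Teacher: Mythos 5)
Your proposal is correct and follows essentially the same route as the paper: reduce to the model operator $\mathscr{A}$ on the fixed half-plane $F_\eta$ via the unitary rescaling, obtain the $1/|w|$ decay for large $|w|$ from the quadratic form (your numerical-range bound is the same computation as the paper's Lemma \ref{lem:estimees_faciles_A}, packaged via $\dist(w,Q)$ instead of a choice of rotation angle per sector), treat the remaining compact region by holomorphy of $(\mathscr{A}-w)^{-1}(\Id-\pi_1)$ across the simple eigenvalue, and deduce the $\mathrm{D}$-bound algebraically from the $L^2$-bound. The only cosmetic deviations are that you remove the singularity at $|z_1|e^{i\pi/3}$ by the Laurent expansion rather than the paper's explicit contour-integral formula, and you establish the uniform graph-norm comparison for $\mathscr{A}_\alpha(x)$ directly instead of transporting the $\mathrm{D}$-estimate through $\mathcal{U}_\alpha(x)$; both are equivalent and correct.
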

Introducing the set 
$$F_\eta := \big\{z \in \mathbb{C} \,\mid\, \Re \left[e^{-i\frac\pi4}(a_\eta - z)\right] \geq 0\big\}\,,$$
Proposition \ref{prop:resAx} is a consequence of the following result and a scaling argument:
\begin{proposition}\label{prop:resA}
For any $\eta\in(0,1)$, there exist $C_\eta'>0$ such that for all $z \in F_\eta \setminus \{z_1 e^{i \frac{\pi}{3}}\}$, 
	\begin{align}
	\|(\mathscr{A}-z)^{-1}(\Id - \pi_1)\|_{\mathscr{L}(L^2(\R_+))} \leq \frac{C_\eta'}{1 + |z|}\,,
		\label{eq:estimA(I-Pi)}
	\end{align}
	\begin{equation}
		\|(\mathscr{A}-z)^{-1}(\Id - \pi_1)\|_{\mathscr{L}(L^2(\R_+),\mathrm{D})} \leq C_\eta'\,.
			\label{eq:estimA(I-Pi)D}
		\end{equation}
\end{proposition}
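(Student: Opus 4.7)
The plan rests on two complementary facts: (a) the half-plane $F_\eta$ contains no eigenvalue of $\mathscr{A}$ other than $|z_1|e^{i\pi/3}$, which is precisely the one removed by $\Id-\pi_1$; (b) as $|z|\to\infty$ inside $F_\eta$, the point $z$ is uniformly far from the numerical range of $\mathscr{A}$, which lies in the closed first quadrant. Before exploiting (b), I note that since $|z_1|e^{i\pi/3}$ is an algebraically simple eigenvalue with Riesz projector $\pi_1$, the identity
\[(\mathscr{A}-z)^{-1}(\Id-\pi_1)=(\mathscr{A}-z)^{-1}-\big(|z_1|e^{i\pi/3}-z\big)^{-1}\pi_1\]
shows that the left-hand side extends as an $\mathscr{L}(L^2(\R_+))$-valued holomorphic function on an open neighborhood of $F_\eta$, the short calculation $\Re[e^{-i\pi/4}(a_\eta-|z_n|e^{i\pi/3})]=((1-\eta)|z_2|-|z_n|)\cos(\pi/12)<0$ for $n\geq 2$ certifying that no other eigenvalue enters $F_\eta$.

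Next, I will integrate by parts (using the Dirichlet condition at $y=0$ and $yf\in L^2(\R_+)$) to write $\langle \mathscr{A}u,u\rangle_{\R_+}=\|u'\|_{\R_+}^2+i\int_0^\infty y|u|^2\,dy$, which places the numerical range $W(\mathscr{A})$ inside $\{\Re w\geq 0,\ \Im w\geq 0\}$ and therefore yields the standard bound
\[\|(\mathscr{A}-z)^{-1}\|_{\mathscr{L}(L^2(\R_+))}\leq \frac{1}{\mathrm{dist}(z,W(\mathscr{A}))}\qquad\text{for all } z\in\rho(\mathscr{A})\setminus\overline{W(\mathscr{A})}.\]
The geometric heart of the argument is the observation that $F_\eta=\{z:\Re[e^{-i\pi/4}z]\leq (1-\eta)|z_2|\cos(\pi/12)\}$, whereas $\Re[e^{-i\pi/4}w]\geq 0$ for every $w$ in the closed first quadrant; a direct trigonometric computation then produces $R_\eta>0$ and $c_\eta>0$ such that $\mathrm{dist}(z,W(\mathscr{A}))\geq c_\eta|z|$ for every $z\in F_\eta$ with $|z|\geq R_\eta$. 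This immediately yields \eqref{eq:estimA(I-Pi)} on the far region, while on the compact complement $F_\eta\cap\{|z|\leq R_\eta\}$ the holomorphy established above delivers a uniform $L^2$-operator-norm bound by continuity, completing the proof of \eqref{eq:estimA(I-Pi)}.

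For the $\mathrm{D}$-valued estimate \eqref{eq:estimA(I-Pi)D}, I will invoke $\mathscr{A}(\mathscr{A}-z)^{-1}=\Id+z(\mathscr{A}-z)^{-1}$ together with the commutation of $\pi_1$ with $\mathscr{A}$ to obtain
\[\mathscr{A}(\mathscr{A}-z)^{-1}(\Id-\pi_1)=(\Id-\pi_1)+z(\mathscr{A}-z)^{-1}(\Id-\pi_1).\]
Substituting \eqref{eq:estimA(I-Pi)} into the second summand and using $|z|/(1+|z|)\leq 1$ bounds the left-hand side in $\mathscr{L}(L^2(\R_+))$ uniformly in $z\in F_\eta$; combined with \eqref{eq:estimA(I-Pi)} and the definition $\|\cdot\|_\mathrm{D}^2=\|\cdot\|_{\R_+}^2+\|\mathscr{A}\cdot\|_{\R_+}^2$, this produces \eqref{eq:estimA(I-Pi)D}. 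The only point requiring genuine care is the geometric step producing $c_\eta$ in terms of $\eta$; the a priori delicate region near the retained eigenvalue $|z_1|e^{i\pi/3}$ is cleanly absorbed by the holomorphy observation in the first paragraph and causes no difficulty.
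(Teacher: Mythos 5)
Your proposal is correct and follows essentially the same route as the paper's proof: a large-$|z|$ resolvent bound derived from the fact that the numerical range of $\mathscr{A}$ lies in the closed first quadrant (the paper phrases this as a sectorial estimate with a piecewise choice of rotation angle, you phrase it as $\dist(z,W(\mathscr{A}))\gtrsim |z|$ on $F_\eta$ far from the origin — the same computation), a compactness/continuity argument on the bounded part of $F_\eta$, and the identity $\mathscr{A}(\mathscr{A}-z)^{-1}=\Id+z(\mathscr{A}-z)^{-1}$ for the $\mathrm{D}$-estimate. The only (harmless) variation is your treatment of the removable singularity at $|z_1|e^{i\frac{\pi}{3}}$: you subtract the principal part $(|z_1|e^{i\frac{\pi}{3}}-z)^{-1}\pi_1$ explicitly using the algebraic simplicity of the eigenvalue, whereas the paper bounds $(\mathscr{A}-z)^{-1}(\Id-\pi_1)$ near that point by a contour integral over a small circle; these are two equivalent ways of exhibiting the same holomorphic extension.
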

\begin{proof}[Proof of Proposition \ref{prop:resAx} from Proposition \ref{prop:resA}]
The definitions of $F_\eta$ and $G_\eta$ imply that for all $x \in \R$, 
$$z \in G_\eta \setminus \{\lambda_{1,\alpha}(x)\} \implies \alpha(x)^{-2/3} z \in F_\eta \setminus \{|z_1| e^{i\pi/3}\}.$$
Thus, given $z \in G_\eta \setminus \{\lambda_{1,\alpha}(x)\}$ and letting $\zeta := \alpha(x)^{-\frac23}z$, Proposition \ref{prop:resA} ensures that
$$\| (\mathscr{A} - \zeta)^{-1} (\Id - \pi_1)\|_{\mathscr{L}(L^2(\R_+))} \leq C_\eta',$$
and the $L^2$ estimate \eqref{eq:res_Ax} follows by writing (with $\mathcal{U}_\alpha(x)$ defined by \eqref{eq:defUx})
\begin{align*}
&\| (\mathscr{A}_\alpha(x) - z)^{-1} (\Id - \pi_{1,\alpha}(x))\|_{\mathscr{L}(L^2(\R_+),\mathrm{D})}\\
&\qquad\qquad = \big\| \mathcal{U}_\alpha(x) \big(\alpha(x)^{\frac23}\mathscr{A} - \alpha(x)^{\frac23} \zeta \big)^{-1}\alpha(x)^{-\frac23}(\Id - \pi_1) \mathcal{U}_\alpha(x)^*\big\|_{\mathscr{L}(L^2(\R_+),\mathrm{D})}\\
& \qquad\qquad \leq C\|(\mathscr{A} - \zeta)^{-1} (\Id - \pi_1)\|_{\mathscr{L}(L^2(\R_+))}
\end{align*}
since $\alpha^{-1}$ is bounded (by Assumption \ref{ass:alpha}) and $\mathcal{U}_\alpha(x)$ is unitary. The estimate \eqref{eq:res_Ax_D} is obtained similarly from \eqref{eq:estimA(I-Pi)D}, using that $\mathcal{U}_\alpha(x)$ maps $\mathrm{D}$ to itself continuously thanks to Assumption \ref{ass:alpha}. 
\end{proof}
To prove Proposition \ref{prop:resA}, we start by the following lemma. 
\begin{lemma}
\label{lem:estimees_faciles_A}
	For any $\theta_0 > 0$, there exists $C > 0$ such that the estimate
	\begin{equation}
	\label{eq:estim_AL2theta}
	\|(\mathscr{A}-z)^{-1}\|_{\mathscr{L}(L^2(\R_+))} \leq \frac{C}{|z|}
	\end{equation}
	holds for any complex number $z \neq 0$ in the sector $\frac{\pi}{2} + \theta_0 \leq \arg(z) \leq 2\pi - \theta_0$. 
\end{lemma}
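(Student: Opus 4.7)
The plan is to rely on the numerical range of $\mathscr{A}$ together with the explicit description of its spectrum. First, I would compute that for any $f \in \mathrm{D}$, an integration by parts (valid thanks to the Dirichlet condition at $y = 0$ and the decay of elements of $\mathrm{D}$) gives
\begin{equation*}
\langle \mathscr{A} f, f\rangle_{\R_+} = \|D_y f\|_{\R_+}^2 + i \|\sqrt{y}\, f\|_{\R_+}^2.
\end{equation*}
In particular both the real and imaginary parts are non-negative, so the numerical range $W(\mathscr{A})$ is contained in the closed first quadrant $Q := \{w \in \mathbb{C} : \Re w \geq 0,\ \Im w \geq 0\}$.

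Next, I would estimate the distance from the sector $S_{\theta_0} := \{z \neq 0 : \pi/2 + \theta_0 \leq \arg z \leq 2\pi - \theta_0\}$ to $Q$. A direct case analysis (splitting the sector into its second-, third-, and fourth-quadrant parts, and computing the distance to the positive imaginary axis, the origin, and the positive real axis respectively) yields
\begin{equation*}
\dist(z, Q) \geq |z|\sin\theta_0 \qquad \text{for all } z \in S_{\theta_0}.
\end{equation*}
Combining this with the standard Cauchy--Schwarz bound $\|(\mathscr{A}-z) f\|_{\R_+} \|f\|_{\R_+} \geq |\langle (\mathscr{A}-z) f, f\rangle_{\R_+}| \geq \dist(z, W(\mathscr{A})) \|f\|_{\R_+}^2$, I get
\begin{equation*}
\|(\mathscr{A}-z) f\|_{\R_+} \geq |z|\sin\theta_0\, \|f\|_{\R_+}, \qquad f \in \mathrm{D},\ z \in S_{\theta_0}.
\end{equation*}

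Finally, to turn this a priori estimate into a resolvent bound, I use that $\sigma(\mathscr{A}) = \{|z_n|e^{i\pi/3} : n \geq 1\}$ is recalled just before the lemma; this set is disjoint from $S_{\theta_0}$ since $\pi/3 < \pi/2 + \theta_0$, so $z \in \rho(\mathscr{A})$ and $(\mathscr{A}-z)^{-1}$ is a bounded operator on $L^2(\R_+)$. Applying the previous inequality to $f = (\mathscr{A}-z)^{-1} g$ gives
\begin{equation*}
\|(\mathscr{A}-z)^{-1} g\|_{\R_+} \leq \frac{1}{|z|\sin\theta_0}\, \|g\|_{\R_+},
\end{equation*}
which is the claim with $C = 1/\sin\theta_0$. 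There is no significant obstacle here: the argument is a clean application of the numerical-range technique, the only subtlety being the justification of the integration by parts for functions in $\mathrm{D}$, which follows from the $H^2 \cap H^1_0$ regularity together with the $yf \in L^2(\R_+)$ condition built into the domain.
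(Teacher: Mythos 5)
Your proof is correct and is essentially the paper's argument: both rest on the identity $\langle \mathscr{A}f,f\rangle = \|D_y f\|^2 + i\|\sqrt{y}f\|^2$, i.e.\ on the numerical range lying in the closed first quadrant, combined with the lower bound $\dist(z,W(\mathscr{A}))\gtrsim_{\theta_0}|z|$ on the sector and the known spectrum to guarantee invertibility. The paper packages the distance bound by choosing a piecewise rotation angle $\theta=f(\varphi)$ and minimizing $|\cos(\varphi-f(\varphi))|^{-1}$, whereas you compute $\dist(z,Q)\geq|z|\sin\theta_0$ directly by quadrants; this is only a cosmetic difference.
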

\begin{proof}
For all $v \in \mathrm{D}$
\begin{equation}
\label{eq:theta_estim}
\Re (e^{-i\theta} \langle(\mathscr{A} -z)v,v\rangle) = \cos(\theta) \|D_y v\|^2 + \sin(\theta) \|\sqrt{y} v\|^2 - \textup{Re}(e^{-i\theta}z) \|v\|^2.
\end{equation}
In particular, since $\theta \in [0,\frac\pi2]$, 
$$\Re (e^{-i\theta} \langle(\mathscr{A}-z) v,v\rangle) \geq -\Re (e^{-i\theta}z) \|v\|^2.$$
Applying this to $v := (\mathscr{A}-z)^{-1}u$, using the Cauchy-Schwarz inequality and assuming $\Re(e^{-i\theta}z)< 0$, we deduce that
\begin{equation}
	\label{eq:estim_AL2}
	\|(\mathscr{A}-z)^{-1}\|_{\mathscr{L}(L^2(\R_+))} \leq \frac{1}{|\textup{Re}(e^{-i\theta} z)|}\,,
\end{equation}
Next, let $f : [\pi/2 + \theta_0,2\pi - \theta_0] \to [0,\frac{\pi}{2}]$ be such that $\cos(\varphi - f(\varphi)) < 0$ for all $\varphi \in [\pi/2+\theta_0,2\pi - \theta_0]$. For example, one can take 
$$f(\varphi) = \begin{cases}
0 & \textup{if } \varphi \in [\pi/2 + \theta,\pi]\\
\pi/4 & \textup{if } \varphi \in [\pi,3\pi/2)\\
\pi/2 & \textup{if } \varphi \in  [3\pi/2,2\pi - \theta_0].
\end{cases}$$
Then   
$$M := \min_{\varphi \in [\pi/2+\theta_0,2\pi - \theta_0]} \frac{1}{|\cos(\varphi-f(\varphi))|} > 0.$$
For $z = |z| e^{i\varphi}$ with $\varphi \in [\pi/2 + \theta_0,2\pi - \theta_0]$, we can apply \eqref{eq:estim_AL2} with $\theta := f(\varphi)$ to obtain
$$\|(\mathscr{A} - z)^{-1}\|_{\mathscr{L}(L^2(\R_+))} \leq \frac{M}{|z|}\,,$$
concluding the proof. 
\end{proof}

\begin{proof}[Proof of Proposition \ref{prop:resA}]
For $z \in F_\eta \setminus \{|z_1| e^{i\pi/3}\}$, let 
$$F(z) :=  \|(\mathscr{A}-z)^{-1}(\Id - \pi_1)\|_{\mathscr{L}(L^2(\R_+))}.$$ 
Given $\theta_0 \in (0, \frac{\pi}{4})$, Lemma \ref{lem:estimees_faciles_A} ensures that there exists $C_{\theta_0} > 0$ such that 
$$F(z) \leq \frac{C_{\theta_0}}{|z|}$$
for all $z \neq 0$ in the sector $\mathcal{S}(\theta_0) := \big\{ z\in \mathbb{C} \,\mid\,\frac{\pi}{2} + \theta_0 < \arg(z) <  2\pi - \theta_0\big\}$. This implies that for all $z \in \Omega := \mathcal{S}(\theta_0) \setminus \overline{D(0,1)}$,  
$$F(z) \leq \frac{2C_{\theta_0}}{1+ |z|}.$$
Therefore, it remains to show that there exists $C > 0$ such that 
$$\sup_{z \in K \setminus \{|z_1| e^{i\pi/3}\}} (1 + |z|)F(z) < +\infty$$ 
where 
$K := F_\eta \setminus \Omega$
is a compact set (see Figure \ref{fig:Feta}).
To this end, for $\delta > 0$ small enough, we write 
$$K = D(|z_1| e^{i\pi/3},\delta) \cup (K \setminus D(|z_1| e^{i\pi/3},\delta)).$$
It is immediate that $(1 + |z|)F(z)$ is bounded on $K \setminus D(|z_1| e^{i\pi/3},\delta)$ since this is a compact subset of $\rho(\mathscr{A})$. Finally, for $z \in D(|z_1| e^{i \frac\pi 3},\delta) \setminus \{|z_1|e^{i\frac\pi 3}\}$, we write
$$(\mathscr{A}-z)^{-1}(\Id - \Pi_1) = \frac{-1}{2\pi i} \int_{\mathscr{C}(|z_1| e^{i \frac\pi 3},2\delta)} (\zeta - z)^{-1} (\zeta - \mathscr{A})^{-1}\,d\zeta.$$
Thus, 
\begin{align*}
\|(\mathscr{A}-z)^{-1}(\Id - \Pi_1)\| 
&\leq \frac{1}{2\pi} \int_{\mathscr{C}(|z_1| e^{i \frac\pi 3},2\delta)} \delta^{-1} \|(\zeta - \mathscr{A})^{-1}\|d\zeta\\
& \leq 2\max_{\zeta \in \mathscr{C}(|z_1| e^{i \frac\pi 3},2\delta)} \|(\zeta - \mathscr{A})^{-1}\| < \infty
\end{align*}
since $\mathscr{C}(|z_1|e^{i\frac{\pi}{3}},2\delta)$ is again a compact subset of $\rho(\mathscr{A})$. This implies that $(1 + |z|) F(z)$ is also bounded on $D(|z_1| e^{i\frac\pi3},\delta) \setminus \{|z_1| e^{i\pi/3}\}$, completing the proof of the $L^2$ estimate \eqref{eq:estimA(I-Pi)}. 

In turn, the estimate \eqref{eq:estimA(I-Pi)D} in the $\mathrm{D}$ norm follows from the $L^2$ estimate \eqref{eq:estimA(I-Pi)} since
\begin{align*}
\|(\mathscr{A}-z)^{-1}(\Id - \pi_1)u\|_{\mathrm{D}}& = \|(\mathscr{A}-z)^{-1}(\Id - \pi_1)u\|_{L^2(\R_+)} + \|\mathscr{A}(\mathscr{A}-z)^{-1}(\Id - \pi_1)u\|_{L^2(\R_+)}\\
& \leq   (1 + |z|)\|(\mathscr{A}-z)^{-1}(\Id - \pi_1)u\|_{L^2(\R_+)} + \|u\|_{L^2(\R_+)}.\qedhere
\end{align*}
\end{proof}
\begin{figure}[htbp]
\centering
\begin{tikzpicture}[scale=0.65]
\draw (-3,0)--(6,0);
\draw (0,-3)--(0,6);
\coordinate (z1) at ({2.33811*cos(60)},{2.33811*sin(60)});
\coordinate (z2) at ({4.08795*cos(60)},{4.08795*sin(60)});
\coordinate (a1) at ($(z2)-(0.3,0.3) + 3.5*(-1,1)$);
\coordinate (a2) at ($(z2)-(0.3,0.3) - 5*(-1,1)$);
\coordinate (b1) at ({-0*cos(22.5)},{-0*sin(22.5)});
\coordinate (b2) at ({sin(10)},{-cos(10)});
\coordinate (B1) at ($(b1)-(b2)$);
\coordinate (B2) at ($(b1)-7*(b2)$);
\coordinate (c1) at ({-0*cos(80)},{-0*sin(80)});
\coordinate (c2) at ({sin(80)},{-cos(80)});
\coordinate (C1) at ($(c1)+7*(c2)$);
\coordinate (C2) at ($(c1)+(c2)$);
\coordinate (d1) at ({-1*cos(45)},{-1*sin(45)});
\coordinate (d2) at ({sin(45)},{-cos(45)});
\coordinate (D1) at ($(d1)+5*(d2)$);
\coordinate (D2) at ($(d1)-5*(d2)$);
\draw[purple] (B1)--(B2);
\draw[purple] (C1)--(C2);
\coordinate (aeta) at ($(z2)-(0.3,0.3)$);
\node[red] at ($(z1)$) {$+$};
\node[red] at ($(z1)+(0.4,-0.45)$) {$|z_1| e^{i \frac{\pi}{3}}$};
\node[red] at ($(z2)$) {$+$};
\node[red] at ($(z2)+(0.6,0.7)$) {$|z_2| e^{i \frac{\pi}{3}}$};
\draw[blue,dashed] (a1)--(a2);
\node[blue] at (-2,6) {$F_\eta$};
\node[purple] at (-2,2) {$\Omega$};
\node[purple] at (-1,7.5) {$\arg(z) = \frac{\pi}{2} + \theta_0$};
\node[purple] at (8.5,-0.5) {$\arg(z) = 2\pi - \theta_0$};
\node[blue,blue] at (aeta) {$+$};
\begin{scope}[on background layer]
      \fill[fill=orange!20!white,on background layer] (0,0)--($(b1)-6.17*(b2)$)--($(c1)+6.17*(c2)$)--cycle;
      \fill [orange!20!white] (0,0) circle (1);
   \end{scope}
	
\node[blue,blue] at ($(aeta)-(0.3,0.3)$) {$a_\eta$};
\node[orange] at (2.5,0.5) {$K$};
\draw [purple,domain=100:350] plot ({cos(\x)}, {sin(\x)});
\end{tikzpicture}
\caption{The regions of the complex plane involved in the proof of Proposition \ref{prop:resA}. The region $F_\eta$ is the half-plane lying to the south-west of the blue dashed line. The subset of $F_\eta$ which lies to the left and bottom of the solid purple line is the one denoted by $\Omega$ in this proof, and the orange shaded area is the compact set denoted by $K$.}
\label{fig:Feta}
\end{figure}
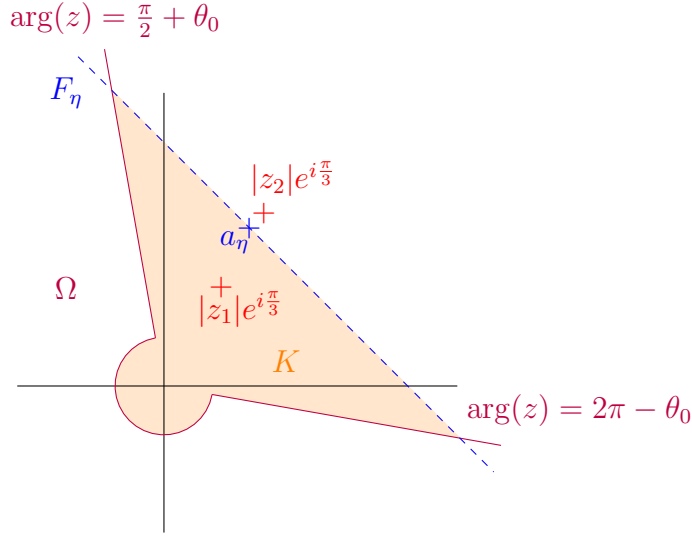

\subsection{Recap of semiclassical pseudodifferential operators with operator-valued symbols}
\label{sec:recap_pseudo}

This paragraph uses the results about semiclassical pseudodifferential calculus with operator-valued symbols from \cite[Chapter 2]{keraval2018formules}, \cite{keraval2024reduction} (in french), see also \cite[Appendix B]{gerard1991mathematical}. For the reader's convenience, we recall here the material needed for our purposes. The idea is to consider a family of operators $p(x,\xi)$ (considered as symbols), indexed by the variable $(x,\xi) \in \mathbb{R}^{2d}$, and to construct an operator $P = \textup{Op}_h^{w}(p)$ by a Weyl quantisation analogous to the one used with scalar-valued symbols.

\begin{definition}[{Ordered family of Hilbert spaces, \cite[Definition 2.1.1]{keraval2018formules}}]
\label{def:ordered}
A family $(\mathscr{H}_X)_{X \in \mathbb{R}^{2d}}$ of Hilbert spaces is an {\em ordered family of Hilbert spaces} if it satisfies the following properties:
\begin{enumerate}
\item For all $X \in \mathbb{R}^{2d}$, $\mathscr{H}_X = \mathscr{H}:= \mathscr{H}_{(0,0)}$.
\item There exists $C, N > 0$ such that for all $X,Y \in \R^{2d}$, $a \in \mathcal{A}$,
$$\|a\|_{\mathscr{H}_X} \leq C \langle X - Y \rangle ^N \|a\|_{\mathscr{H}_Y}.$$
\end{enumerate} 
\end{definition}
\begin{definition}[{Symbol class $S_\delta(\mathscr{H}_X,\mathscr{J}_X)$, \cite[Definition 2.1.2]{keraval2018formules}}]
Let $(\mathscr{H}_X)_{X \in \R^{2d}},(\mathscr{J}_X)_{\in \R^{2d}}$ be ordered families of Hilbert spaces. For $\delta \in [0,\frac12)$, the symbol class $S_\delta(\mathscr{H}_X,\mathscr{J}_X)$ is the set of families $(p_h)_{h \in (0,h_0)}$ of elements of $C^\infty(\R^{2d},\mathscr{L}(\mathscr{H},\mathscr{J}))$ such that for all $\beta \in \mathbb{N}^{2d}$, there exists $C_\beta > 0$ such that
$$\forall h \in (0,h_0)\,,\,\,\forall X \in \R^{2d}\,, \quad \|\partial^\beta p_h(X)\|_{\mathscr{L}(\mathscr{H}_X,\mathscr{J}_X)} \leq C_\alpha h^{-\delta|\beta|}.$$
In what follows, we denote
$$|(p_h)_{h \in (0,h_0)}|_{S_\delta(\mathscr{H}_X,\mathscr{J}_X),\beta} := \sup_{h \in (0,h_0)}\sup_{X \in \R^{2d}} h^{\delta |\beta|}\|\partial^\beta p_h(X)\|_{\mathscr{L}(\mathscr{H}_X,\mathscr{J}_X)},$$
or simply $|p_h|_{\delta,\beta}$ for short, when this will not lead to confusion.
\end{definition}
\begin{definition}[{Weyl quantization of operator-valued symbols, \cite[Definition 2.1.7]{keraval2018formules}}]
\label{def:Weyl}
Let $\mathscr{H}_X,\mathscr{J}_X$ be two ordered families of Hilbert spaces, let $\delta \in [0,\frac12)$ and let $(p_h)_{h \in (0,h_0)} \in S_\delta(\mathscr{H}_X,\mathscr{J}_X)$. Then the {\em Weyl quantization} $\textup{Op}^{w}_h(p_h)$ of $(p_h)_{h \in (0,h_0)}$ is defined by  
\begin{equation}
\label{eq:val_quantiz}
\big(\textup{Op}^{w}_h(p_h) u\big)(x) := \frac{1}{(2\pi h)^d} \iint_{\R^{2d}} e^{\frac{i}{h}(x-y)\cdot \xi} p_h\left(\tfrac{x+y}{2},\xi\right)u(y)\,dyd\xi\,, \quad x \in \mathbb{R}^d
\end{equation}
for $u \in \mathscr{S}(\R^d, \mathscr{H})$, where the integral is defined as a Bochner integral. 
\end{definition}
Just as in the scalar case, direct manipulations of the definition show that for $p(x,\xi) = p(x)$ and $q(x,\xi)= \xi^\alpha$  with $\alpha \in \mathbb{N}^d$,
\begin{equation}
\label{eq:Weyl_elem}
(\textup{Op}_h^w(p)u)(x) = p(x)u(x)\,, \quad \textup{Op}_h^w (q) u = h^{|\alpha|}D_\xi^\alpha u.
\end{equation}

We will use the following two main results which are analogous to the ones for semiclassical pseudodifferential operators with scalar-valued symbols:
\begin{theorem}[{Calder\'{o}n-Vaillancourt \cite[Theorem 2.1.16]{keraval2018formules}}]
\label{thm:CadlderonVaillancourt}
Suppose that for all $X \in \R^{2d}$, $\mathscr{H}_X \equiv \mathscr{H}$, $\mathscr{J}_X \equiv \mathscr{J}$. Let $\delta \in [0,\frac12)$, let $(p_h)_{h \in (0,h_0)} \in S_\delta(\mathscr{H},\mathscr{J})$. Then, the Weyl quantization  $\textup{Op}^{w}_h(p_h)$ in \eqref{eq:val_quantiz} extends to a unique linear continuous operator 
$$ \textup{Op}^{w}_h(p_h) : L^2(\R^d,\mathscr{H}) \to L^2(\R^d,\mathscr{J})$$
and there exists $C > 0$ and $M > 0$ such that for all $h \in (0,h_0)$ 
$$\|\textup{Op}^w_h(p_h)\|_{\mathscr{L}(L^2(\R^d,\mathscr{H}),L^2(\R^d,\mathscr{J}))} \leq C \sum_{\beta \leq Md} h^{|\beta|/2} \sup_{X \in \R^{2d}} \|\partial^\beta p_h(X)\|_{\mathscr{L}(\mathscr{H},\mathscr{J})}$$ 
\end{theorem}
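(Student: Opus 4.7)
The plan is to reduce to the case $h = 1$ via a semiclassical dilation and then apply the Cotlar--Stein almost orthogonality lemma to a phase-space decomposition of the symbol. The operator-valued nature of the symbols causes no essential difficulty because Cotlar--Stein is valid in any Hilbert space and most of the classical estimates go through by replacing absolute values with operator norms in $\mathscr{L}(\mathscr{H},\mathscr{J})$.

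First, I would introduce the unitary dilation $T_h : L^2(\R^d,\mathscr{K}) \to L^2(\R^d,\mathscr{K})$ defined by $(T_h f)(x) = h^{-d/4} f(x/\sqrt{h})$ (valid for $\mathscr{K} = \mathscr{H}$ or $\mathscr{J}$). A direct change of variables in \eqref{eq:val_quantiz} shows
$$T_h^{-1} \textup{Op}_h^w(p_h) T_h = \textup{Op}_1^w(\tilde p_h), \qquad \tilde p_h(X) := p_h(\sqrt{h}\,X),$$
and the seminorms transform as $\sup_X \|\partial^\beta \tilde p_h(X)\| = h^{|\beta|/2} \sup_X \|\partial^\beta p_h(X)\|$, which are uniformly bounded in $h$ because $(p_h) \in S_\delta$ with $\delta < \tfrac12$. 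This reduces the theorem to the case $h = 1$, with upper bound $\sum_{|\beta|\le Md} \sup_X \|\partial^\beta \tilde p_h(X)\|$. I would then fix $\chi \in C_c^\infty(\R^{2d})$ supported in a unit ball such that $\sum_{\alpha \in \mathbb{Z}^{2d}} \chi(\cdot - \alpha) \equiv 1$, decompose $\tilde p_h = \sum_\alpha q_\alpha$ with $q_\alpha(X) := \chi(X - \alpha)\, \tilde p_h(X)$, and set $Q_\alpha := \textup{Op}_1^w(q_\alpha)$.

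The proof then reduces to two estimates. First, uniformly in $\alpha$, $\|Q_\alpha\|_{\mathscr{L}(L^2(\R^d,\mathscr{H}),L^2(\R^d,\mathscr{J}))}$ is bounded by a finite number of $S_0$ seminorms of $\tilde p_h$: the Schwartz kernel of $Q_\alpha$ is a rapidly decaying $\mathscr{L}(\mathscr{H},\mathscr{J})$-valued function (via integration by parts in $\xi$, using the compact $\xi$-support of $q_\alpha$), and Schur's test with operator-norm entries yields the bound. Second, the cross-term estimates
$$\|Q_\alpha^* Q_\beta\|,\ \|Q_\alpha Q_\beta^*\| \le C_N \langle \alpha - \beta\rangle^{-N}, \qquad N \ge 0,$$
follow from representing the compositions as oscillatory integrals with operator-valued amplitudes whose supports are centered at $\alpha$ and $\beta$, and applying the non-stationary phase method: the relevant phase function is scalar and the usual integration by parts argument yields any polynomial decay. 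Combining these via Cotlar--Stein (with $N > 4d$) gives the claimed estimate.

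The main technical obstacle I anticipate is the rigorous justification of the composition formula for pseudodifferential operators with operator-valued symbols. Unlike the scalar case, the Moyal product of two such symbols is non-commutative and must be defined via Bochner integrals taking values in the appropriate $\mathscr{L}(\mathscr{H},\mathscr{J})$ space, so one must be careful with the order of factors in the amplitude of $Q_\alpha^* Q_\beta$. Once these technicalities are in place, however, the non-stationary phase estimate is essentially insensitive to the operator-valued nature of the amplitude, since the phase itself is scalar, and the scalar Calder\'on--Vaillancourt proof then adapts with minimal changes.
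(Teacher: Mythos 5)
The paper does not prove this statement: it is quoted verbatim as an external result from Keraval's thesis \cite[Theorem 2.1.16]{keraval2018formules}, so there is no in-paper argument to compare against. Judged on its own, your outline is the standard and correct route: the rescaling $X\mapsto \sqrt{h}X$ reduces to $h=1$ and converts the $S_\delta$ bounds ($\delta<\tfrac12$) into uniform $S_0$ bounds with the seminorms $h^{|\beta|/2}\sup_X\|\partial^\beta p_h\|$, and the unit-scale partition of phase space followed by Cotlar--Stein goes through because Cotlar--Stein holds between arbitrary Hilbert spaces, the phases are scalar, and Schur's test applies to kernels measured in the $\mathscr{L}(\mathscr{H},\mathscr{J})$ norm. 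Two small remarks. First, the obstacle you single out -- justifying a Moyal/composition formula for operator-valued symbols -- is not actually needed here: $Q_\alpha^*Q_\beta$ and $Q_\alpha Q_\beta^*$ are compositions of explicit integral operators with compactly supported, rapidly decaying operator-valued kernels, so their kernels are given directly by Fubini (for Bochner integrals) and estimated by non-stationary phase; no symbolic calculus is required, only care with the order of the non-commuting factors $q_\alpha(Z)^*q_\beta(W)$, which is harmless since the integrations by parts act on the scalar phase. Second, you should record explicitly that $(\textup{Op}_1^w(q))^*=\textup{Op}_1^w(q^*)$ with $q^*(X)\in\mathscr{L}(\mathscr{J},\mathscr{H})$ the pointwise adjoint, that the uniqueness of the extension follows from the density of $\mathscr{S}(\R^d,\mathscr{H})$ in $L^2(\R^d,\mathscr{H})$, and that the convergence of $\sum_\alpha Q_\alpha$ to $\textup{Op}_1^w(\tilde p_h)$ on Schwartz functions needs a (routine) verification. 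With these points filled in, the sketch is a complete proof.
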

\begin{theorem}[{Composition, \cite[Theorem 2.1.12]{keraval2018formules}}]
\label{thm:compo}
Let $(\mathscr{H}_X)_{X \in \R^{2d}}$, $(\mathscr{J}_X)_{X \in \R^{2d}}$, $(\mathscr{M}_X)_{X \in \R^{2d}}$ be sorted families of Hilbert spaces, let $\delta \in [0,\frac12)$ and let $(a_h)_{h \in (0,h_0)} \in S_\delta(\mathscr{H}_X,\mathscr{J}_X)$ and $(b_h)_{h \in (0,h_0)} \in S_\delta(\mathscr{J}_X,\mathscr{M}_X)$. Then there exists a unique symbol family $(c_h)_{h \in (0,h_0)} \in S_\delta(\mathscr{J}_X,\mathscr{M}_X)$ such that for all $h \in (0,h_0)$, 
$$\textup{Op}_h^w (a_h) \textup{Op}_h^w (b_h) = \textup{Op}_h^w (c_h).$$
For all $h$, we write 
$$a_h \# b_h := c_h.$$
The map $\# : S_{\delta}(\mathscr{H}_X,\mathscr{J}_X) \times S_\delta(\mathscr{J}_X,\mathscr{M}_X) \to S_\delta(\mathscr{H}_X,\mathscr{M}_X)$ is continuous, in the sense that for all $\beta \in \mathbb{N}^{2d}$, there exists $C_\beta, M_\beta > 0$ such that 
$$|a_h \# b_h|_{\delta,\beta} \leq C_\beta \sup_{|\beta'|\leq M_\beta} |a_h|_{\delta,\beta'} |b_h|_{\delta,\beta'}$$
For any $N \in \mathbb{N}^*$, $a_h \# b_h$ satisfies
$$a_h\# b_h = \sum_{k = 0}^N \frac{1}{k!} \left(\frac{ih}{2}\right)^{k} \left[\sigma(D_X,D_Y)^k(a_h(X)b_h(Y))\right]\big|_{Y = X} + h^{(N+1)(1-2\delta)}r_{N+1}$$
for some $r_{N+1} \in S_\delta(\mathscr{H}_X,\mathscr{M}_X)$, where, writing $X = (x,\xi)$ and $Y = (y,\eta)$, and denoting by $D_x,D_\xi,D_y,D_\eta$ the operators of differentiation corresponding to these variables, 
$$\sigma(D_X,D_Y)u(X,Y) = (D_\xi \cdot D_y - D_x \cdot D_\eta) u(X,Y).$$
\end{theorem}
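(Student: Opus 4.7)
\smallskip

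\noindent\textbf{Proof plan for Theorem~\ref{thm:compo}.} The plan is to adapt the standard proof of the Weyl composition formula for scalar symbols (see e.g.\ H\"ormander's treatise or Zworski's book) to operator-valued symbols, with the non-commutativity of the target values handled by preserving the order of the factors throughout. As a first step, assuming that $a_h, b_h$ belong to a subclass of symbols for which all integrals converge absolutely (Schwartz in $X$ with values in $\mathscr{L}(\mathscr{J},\mathscr{M})$ and $\mathscr{L}(\mathscr{H},\mathscr{J})$), I would compute $\textup{Op}_h^w(a_h)\textup{Op}_h^w(b_h)u$ for $u \in \mathscr{S}(\R^d,\mathscr{H})$ directly from \eqref{eq:val_quantiz}, with all integrals interpreted in the Bochner sense using the ordered-Hilbert-space bounds of Definition~\ref{def:ordered}. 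After composing the two oscillatory kernels and performing the change of variables centered at the midpoint, one obtains that the result equals $\textup{Op}_h^w(c_h)u$ with
\[
c_h(X) \;=\; \frac{1}{(\pi h)^{2d}} \iint_{\R^{2d}\times \R^{2d}} e^{-\tfrac{2i}{h}\sigma(Y,Z)}\, a_h(X+Y)\, b_h(X+Z)\, dY\,dZ,
\]
where $X=(x,\xi)$ and $\sigma$ is the standard symplectic form on $\R^{2d}$; here the product $a_h(X+Y)b_h(X+Z)$ is an element of $\mathscr{L}(\mathscr{H},\mathscr{M})$ and the order matters.

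\smallskip

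\noindent The next step is to recognize the oscillatory factor as the formal symbol of the symplectic Fourier multiplier $\exp\!\bigl(\tfrac{ih}{2}\sigma(D_X,D_Y)\bigr)$, acting (in the Bochner sense) on the function $(X,Y)\mapsto a_h(X)b_h(Y)$, evaluated on the diagonal $Y=X$. Taylor-expanding this exponential to order $N$ yields the claimed asymptotic expansion
\[
a_h\#b_h \;=\; \sum_{k=0}^{N} \frac{1}{k!}\Bigl(\frac{ih}{2}\Bigr)^{k} \bigl[\sigma(D_X,D_Y)^k (a_h(X)b_h(Y))\bigr]\big|_{Y=X} \;+\; h^{(N+1)(1-2\delta)}r_{N+1,h},
\]
with an explicit integral remainder of Taylor type. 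The seminorm estimates on each term follow by bringing the derivatives onto $a_h$ and $b_h$ (non-commutatively) and invoking the symbol bounds $|a_h|_{\delta,\beta'}|b_h|_{\delta,\beta'}$, together with the Leibniz rule for the operator product.

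\smallskip

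\noindent To handle the general class $S_\delta$ (where symbols are merely bounded with their derivatives in a weighted sense, not Schwartz) and to obtain uniform control on $r_{N+1,h}$, I would use the standard integration-by-parts trick on the oscillatory integral: write $e^{-\tfrac{2i}{h}\sigma(Y,Z)}$ as $L^N e^{-\tfrac{2i}{h}\sigma(Y,Z)}$ for a suitable first-order differential operator $L$ whose transpose absorbs powers of $\langle (Y,Z)\rangle$, then transfer derivatives onto the symbol factors $a_h(X+Y)b_h(X+Z)$. Each such derivative on $a_h$ or $b_h$ costs at most $h^{-\delta}$, while the integration-by-parts gain is a power of $h$; choosing $N$ large enough relative to $\delta<1/2$ yields absolute convergence and the bound $|r_{N+1,h}|_{\delta,\beta}\leq C_\beta \sup_{|\beta'|\leq M_\beta}|a_h|_{\delta,\beta'}|b_h|_{\delta,\beta'}$. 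By density/approximation (mollifying $a_h,b_h$ and passing to the limit using these uniform bounds), the formula extends to all of $S_\delta$, and the continuity of $\#$ follows directly from the seminorm estimates. Uniqueness of $c_h$ is a consequence of the injectivity of the Weyl quantization on $S_\delta$, which in turn follows from the fact that the Weyl symbol of an operator is determined by its Schwartz kernel via an inverse Fourier transform.

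\smallskip

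\noindent The main obstacle I anticipate is the control of the Bochner-valued oscillatory integrals in the operator-norm topologies $\mathscr{L}(\mathscr{H}_X,\mathscr{M}_X)$, which vary with $X$. The delicate point is that the ordered-space inequality in Definition~\ref{def:ordered} introduces polynomial weights $\langle X-Y\rangle^N$ that must be dominated by the integration-by-parts gains; this forces taking $N$ (the number of integrations by parts) sufficiently large in terms of the power $M_\beta$ arising in the seminorm, and motivates the restriction $\delta<1/2$. Non-commutativity of $\mathscr{L}(\mathscr{H},\mathscr{J})\times\mathscr{L}(\mathscr{J},\mathscr{M})\to \mathscr{L}(\mathscr{H},\mathscr{M})$ is otherwise harmless, as long as one is careful never to swap the order of $a_h$ and $b_h$ in the Taylor expansion of the exponential of $\sigma(D_X,D_Y)$.
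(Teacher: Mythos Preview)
The paper does not prove Theorem~\ref{thm:compo}: it is stated as a recap from \cite[Theorem 2.1.12]{keraval2018formules} in \S\ref{sec:recap_pseudo}, with no proof or sketch given. There is therefore nothing in the paper to compare your proposal against.

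That said, your plan is a faithful outline of the standard scalar-case argument adapted to operator-valued symbols, and the points you flag (preserving the order of $a_h$ and $b_h$, handling the $X$-dependent norms via the polynomial growth condition in Definition~\ref{def:ordered}, the role of $\delta<\tfrac12$ in ensuring that integration-by-parts gains beat the $h^{-\delta}$ losses) are exactly the ones that distinguish the operator-valued setting from the scalar one. If you want to see the details carried out, they are in the cited reference.
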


For our purposes, only the following particular case of the composition formula will be required: 
$$a_h \# b_h = a_hb_h + \frac{h}{2i} \{a_h,b_h\} + h^{2(1-2\delta)} r_2\,,$$
where $r_2 \in S_\delta(\mathscr{H}_X,\mathscr{J}_X)$, and  
where $\{f,g\} = \partial_x f \partial_\xi g - \partial_\xi f \partial_x g$ is the Poisson bracket of $f$ and $g$.

\subsection{Construction of the approximate inverse}
\label{sec:construction_inverse}

We now use the previous theory to construct an approximate inverse of $\mathscr{L}_h^\Phi - \lambda$ by quantizing the inverse of its principal (operator-valued) symbol.

In view of the expression of $\mathscr{L}_h^\Phi$, its operator symbol is given by
$$p_h(x,\xi) = h^{2/3} \mathscr{A}_\alpha(x) + (\xi + i \Phi')^2 + i V(x).$$
(here and in what follows, we omit to signal the dependence of the symbol with respect to $\Phi$ in the notation). More precisely, we apply the setting of \S\ref{sec:recap_pseudo} with $d=1$ and we consider the following ordered families of Hilbert spaces.
\begin{definition}[The ordered families $\mathscr{H}_X$, $\mathscr{J}_X$, $\mathscr{M}_X$]
Let $(\mathscr{H}_X)_{X \in \R^2}$, $(\mathscr{J}_X)_{X \in \R^2}$, $(\mathscr{M}_X)_{X \in \R^2}$  be defined by $\mathscr{H} = \mathrm{D}$, $\mathscr{J} = \mathscr{M} =  L^2(\R_+)$, with the norms 
$$\|u\|_{\mathscr{H}_X} := \langle \xi \rangle^{2}\|u\|_{\mathrm{D}}\,, \quad  \|u\|_{\mathscr{J}_X}^2 := \|u\|_{L^2(\R_+)}^2\,, \quad \|u\|^2_{\mathscr{M}_X} = \langle \xi \rangle^{2} \|u\|^2_{L^2(\R_+)}.$$ 
\end{definition}
These families are ordered in the sense of Definition \ref{def:ordered}. In what follows, we abuse notation by writing $\mathrm{D}_{\langle \xi \rangle^2}$, $L^2(\R_+)$ and $L^2_{\langle \xi \rangle^2}(\R_+)$ instead of $\mathscr{H}_X$, $\mathscr{J}_X$ and $\mathscr{M}_X$, respectively. The following results follow immediately from the definitions and \eqref{eq:Weyl_elem}.

\begin{proposition}[Symbols of $\mathscr{L}_h^\phi$ and $\Pi_{1,\alpha}$]
\label{prop:symbols_Lh_Pi}
The symbol $(p_h)_{h \in (0,1)}$ belongs to $S_0(\mathrm{D}_{\langle \xi\rangle^2}, L^2(\R_+))$ and for any $\beta \in\mathbb{N}^2$, there exists $C_\beta > 0$ such that the estimates
\begin{equation}
\label{eq:class_estim_ph}
\forall \beta \in \mathbb{N}^2\,, \exists C_\beta\,:\quad |p_h|_{0,\beta} \leq C_\beta \max_{n \leq |\beta|} \|\partial_x^n \Phi'\|_\infty
\end{equation}
hold for any $\mu > 0$ and any $\mu$-subsolution $\Phi$. Moreover,
\begin{equation}
\label{eq:quantiz_Lphi}
\forall h > 0\,, \quad \mathscr{L}_h^\Phi = \textup{Op}^w_h(p_h).
\end{equation}
The symbol $x \mapsto \pi_{1,\alpha}(x)$ (independent of $\xi$ and $h$) belongs to $S_0(L^2(\R_+),L^2(\R_+))$ and
$$\forall h > 0\,, \quad \Pi_{1,\alpha} = \textup{Op}_h^w(\pi_{1,\alpha}).$$
\end{proposition}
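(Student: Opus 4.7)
The plan is to verify the four assertions by direct inspection, exploiting the simple structure of both symbols: $p_h$ is polynomial in $\xi$ of degree two and linear in $y$ (after expanding $\mathscr{A}_\alpha(x)$), and $\pi_{1,\alpha}$ is a rank-one symbol independent of $\xi$ and $h$.

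First, for the symbol class bound on $p_h$, I would decompose
\[
p_h(x,\xi)u = h^{2/3}\mathscr{A}_\alpha(x)u + (\xi+i\Phi'(x))^2 u + iV(x)u
\]
and estimate each term from $\mathrm{D}_{\langle\xi\rangle^2}$ into $L^2(\R_+)$. The unitary equivalence \eqref{eq:Ax_is_UxAUx*}, together with the fact that $\mathcal{U}_\alpha(x)$ preserves $\mathrm{D}$ with norms uniformly bounded thanks to Assumption \ref{ass:alpha}, yields $\|\mathscr{A}_\alpha(x)u\|_{L^2(\R_+)} \leq C\|u\|_{\mathrm{D}}$; the polynomial part is controlled by $(|\xi|+\|\Phi'\|_\infty)^2\|u\|_{L^2(\R_+)} \lesssim \langle\xi\rangle^2(1+\|\Phi'\|_\infty^2)\|u\|_{\mathrm{D}}$, while $iV(x)u$ is trivially bounded by $\|V\|_\infty \|u\|_{\mathrm{D}}$. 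For derivatives, I observe that $\partial_\xi^k p_h \equiv 0$ for $k\geq 3$ and $\partial_\xi p_h=2(\xi+i\Phi')$, whereas $\partial_x^\ell p_h$ produces only multiplicative factors involving $\alpha^{(\ell)}$, $V^{(\ell)}$ and $\Phi^{(j+1)}$ (with $j\leq \ell$). Bounding these factors in $L^\infty$ using Assumptions \ref{ass:V} and \ref{ass:alpha} and the definition of a $\mu$-subsolution gives the claimed estimate $|p_h|_{0,\beta} \leq C_\beta \max_{n\leq|\beta|}\|\partial_x^n\Phi'\|_\infty$.

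Second, to establish $\mathscr{L}_h^\Phi = \textup{Op}_h^w(p_h)$, I would expand
\[
(hD_x+i\Phi'(x))^2 = (hD_x)^2 + i\bigl[(hD_x)\Phi'(x)+\Phi'(x)(hD_x)\bigr] - \Phi'(x)^2
\]
and apply \eqref{eq:Weyl_elem} termwise, using that Weyl quantization symmetrizes, so that $\textup{Op}_h^w(2\xi\Phi'(x))$ reproduces the self-adjoint cross term. The $\xi$-independent pieces $h^{2/3}\mathscr{A}_\alpha(x)$ and $iV(x)$ quantize to pointwise operator-valued multiplication (see below for the general principle for $\xi$-independent symbols), giving exactly the definition \eqref{eq:def_Lphih} of $\mathscr{L}_h^\Phi$.

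Third, the symbol estimates for $\pi_{1,\alpha}$ reduce to the explicit formula \eqref{eq:expression_pialpha}, which by Cauchy-Schwarz and unitarity of $\mathcal{U}_\alpha(x)$ yields $\|\pi_{1,\alpha}(x)\|_{\mathscr{L}(L^2(\R_+))} \leq \|u_{\alpha,x}\|_{\R_+}^2 = \|u_1\|_{\R_+}^2$, uniformly in $x$. Differentiating $\pi_{1,\alpha}(x)u$ in $x$ via the Leibniz rule and invoking \eqref{eq:Dxux} controls all higher derivatives uniformly in $x$ (and trivially in $h$ and $\xi$, on which $\pi_{1,\alpha}$ does not depend). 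For the quantization identity $\Pi_{1,\alpha} = \textup{Op}_h^w(\pi_{1,\alpha})$, since $\pi_{1,\alpha}$ is $\xi$-independent, the $\xi$-integral in Definition \ref{def:Weyl} produces a factor $(2\pi h)\delta(x-y)$, so $\textup{Op}_h^w(\pi_{1,\alpha})\psi$ evaluated at $x$ is the pointwise action $\pi_{1,\alpha}(x)(\psi(x,\cdot))$. This is exactly $(j_\alpha j_\alpha^T \psi)(x,\cdot)$ by \eqref{eq:expression_jT}.

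All four claims are thus essentially routine. The only mild subtlety is to track the dependence on $\|\Phi'\|_\infty$ and on derivatives of $\Phi'$ cleanly through the symbol seminorms: here the fact that $\Phi$ is only required to be a $\mu$-subsolution (and may in particular be unbounded when applied to $\Phi_A$ as in the proof of Theorem \ref{thm:main} with $A\to\infty$) means that the constant $C_\beta$ must not absorb $\|\Phi'\|_\infty$, hence the explicit multiplicative factor $\max_{n\leq|\beta|}\|\partial_x^n\Phi'\|_\infty$ in the stated bound.
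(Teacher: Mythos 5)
Your proposal is correct and follows essentially the same route as the paper's (much terser) proof: the class bounds come from the unitary-conjugation formula \eqref{eq:Ax_is_UxAUx*} and the boundedness of $\alpha$, $V$, $\Phi'$ and their derivatives, the bounds for $\pi_{1,\alpha}$ from \eqref{eq:expression_pialpha} and \eqref{eq:Dxux}, and the quantization identities from \eqref{eq:Weyl_elem}. Your extra details (the vanishing of $\partial_\xi^k p_h$ for $k\geq 3$, the Weyl symmetrization of the cross term $2i\xi\Phi'$, and the remark that the factor $\max_{n\leq|\beta|}\|\partial_x^n\Phi'\|_\infty$ must be kept explicit because it is not absorbed into $C_\beta$) are all accurate elaborations of what the paper leaves implicit.
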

\begin{proof}
Since $\mathscr{A}_\alpha(x) := \alpha(x) \mathcal{U}_\alpha(x) \mathscr{A} \mathcal{U}^*_\alpha(x)$ and $\pi_{1,\alpha}(x) = \alpha(x)^{-1} \mathcal{U}_\alpha(x) \pi_1 \mathcal{U}_\alpha^*(x)$, the class estimate for $p_h$ and $\pi_{1,\alpha}$ follow from the infinite differentiability with bounded derivatives of the maps $x \mapsto \alpha(x)$, $x \to \alpha^{-1}(x)$ from $\R$ to $\R$ and $x \mapsto \mathcal{U}_\alpha(x)$, $x \to \mathcal{U}^*_\alpha(x)$ from $L^2(\R_+)$ to $\mathrm{D}$ 
The fact that $\pi_{1,\alpha}$ belongs to $S_0(L^2(\R_+),L^2(\R_+))$ follows from the expression \eqref{eq:expression_pialpha} and the property \eqref{eq:Dxux}. The quantization formulas follow from \eqref{eq:Weyl_elem}.
\end{proof}

Observe that for all $\lambda \in \mathbb{C}$, 
\begin{equation}
\label{eq:z(lambda)}
p_h(x,\xi) - \lambda = h^{2/3}(\mathscr{A}_\alpha(x) - z_\lambda(x,\xi)) \quad \textup{where} \quad z_\lambda(x,\xi) := \frac{\lambda - [(\xi + i \Phi')^2 + iV]}{h^{2/3}}.
\end{equation}
\begin{lemma}
\label{lem:yotei}
There exists $\eta \in (0,1)$ and for all $\mu > 0$, there exists $C,c > 0$ such that the following holds. For all $R > 0$, there exists $h_0 > 0$ such that for all $h \in (0,h_0)$ and any $\mu$-subsolution $\Phi$,
\begin{equation}
\label{eq:implication_yotei}
\lambda \in D(\lambda_{1,\alpha}(0)h^{\frac23},Rh) \implies z_\lambda(x,\xi) \in G_\eta \textup{ for all } (x,\xi) \in \R^2,
\end{equation}
where $G_\eta$ is the region defined in \eqref{eq:def_Geta}.
\end{lemma}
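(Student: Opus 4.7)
The plan is to compute $\textup{Re}[((\inf\alpha)^{2/3}a_\eta - z_\lambda(x,\xi))e^{-i\pi/4}]$ explicitly, show that it is bounded below by a positive quantity up to an $O(h^{1/3})$ error, and then absorb the error into a strict gap provided by Assumption \ref{ass:alpha}(iii).

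First I would expand the numerator of $z_\lambda$: writing $(\xi + i\Phi')^2 + iV = \xi^2 - \Phi'^2 + i(V + 2\xi\Phi')$ so that
\[
z_\lambda(x,\xi) = h^{-2/3}\Bigl[\lambda - \xi^2 + \Phi'^2 - i(V + 2\xi\Phi')\Bigr].
\]
Using $\textup{Re}(e^{-i\pi/4}) = \tfrac{1}{\sqrt 2}$ and $\textup{Re}(-ie^{-i\pi/4}) = -\tfrac{1}{\sqrt 2}$, a short calculation gives
\[
\textup{Re}\bigl[z_\lambda e^{-i\pi/4}\bigr]
= h^{-2/3}\!\left[\textup{Re}(\lambda e^{-i\pi/4}) - \frac{(\xi+\Phi')^2}{\sqrt 2} - \frac{V - 2\Phi'^2}{\sqrt 2}\right].
\]
The key algebraic point is the completion of the square $-\xi^2 + \Phi'^2 - V - 2\xi\Phi' = -(\xi+\Phi')^2 - (V - 2\Phi'^2)$, which makes the $\mu$-subsolution hypothesis $V - 2\Phi'^2 \geq \mu V \geq 0$ enter naturally. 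Discarding these two non-positive terms yields the upper bound
\[
\textup{Re}\bigl[z_\lambda e^{-i\pi/4}\bigr] \leq h^{-2/3}\textup{Re}(\lambda e^{-i\pi/4}).
\]

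Next I would bound the right-hand side using the disk condition. Since $\lambda_{1,\alpha}(0)e^{-i\pi/4} = \alpha(0)^{2/3}|z_1| e^{i\pi/12}$ has real part $\alpha(0)^{2/3}|z_1|\cos(\pi/12)$, and $|\lambda - \lambda_{1,\alpha}(0)h^{2/3}| \leq Rh$, we get
\[
h^{-2/3}\textup{Re}(\lambda e^{-i\pi/4}) \leq \alpha(0)^{2/3}|z_1|\cos(\pi/12) + Rh^{1/3} \leq (\sup\alpha)^{2/3}|z_1|\cos(\pi/12) + Rh^{1/3}.
\]
On the other hand, $\textup{Re}\bigl[(\inf\alpha)^{2/3}a_\eta \, e^{-i\pi/4}\bigr] = (\inf\alpha)^{2/3}(1-\eta)|z_2|\cos(\pi/12)$. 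Thus
\[
\textup{Re}\bigl[((\inf\alpha)^{2/3}a_\eta - z_\lambda)e^{-i\pi/4}\bigr] \geq \cos(\pi/12)\bigl[(\inf\alpha)^{2/3}(1-\eta)|z_2| - (\sup\alpha)^{2/3}|z_1|\bigr] - Rh^{1/3}.
\]

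Finally, by Assumption \ref{ass:alpha}(iii), the bracketed quantity at $\eta=0$ is strictly positive; by continuity I can fix $\eta \in (0,1)$ small enough so that it remains bounded below by some $2c > 0$. Then choosing $h_0 > 0$ such that $Rh_0^{1/3}\cos(\pi/12)^{-1} \leq c$ makes the whole expression nonnegative for all $h \in (0,h_0)$, which is exactly $z_\lambda(x,\xi) \in G_\eta$.

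The main thing to watch for is simply the sign bookkeeping in the first computation: the combination of the square $(\xi+\Phi')^2$ with the subsolution term $V - 2\Phi'^2$ is what makes the ``good'' $(x,\xi)$-dependent contributions drop out uniformly, leaving only the explicit disk-plus-gap estimate to handle. Notice that $\eta$ depends only on $\alpha$ (not on $\mu$ or $R$), and the dependence on $\mu$ is trivial (any $\mu > 0$ works, since only $V - 2\Phi'^2 \geq 0$ is actually used).
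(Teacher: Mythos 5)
Your proposal is correct and follows essentially the same route as the paper: complete the square to write $\textup{Re}\bigl(e^{-i\pi/4}[(\xi+i\Phi')^2+iV]\bigr)=\cos(\pi/4)\bigl((\xi+\Phi')^2+V-2\Phi'^2\bigr)\geq 0$ via the $\mu$-subsolution hypothesis, discard these terms to reduce to the disk estimate $\textup{Re}(e^{-i\pi/4}z_\lambda)\leq \textup{Re}(e^{-i\pi/4}\lambda_{1,\alpha}(0))+Rh^{1/3}$, and close the gap using Assumption \ref{ass:alpha}(iii) for $\eta$ and $h$ small. Your concluding observation that $\eta$ depends only on $\alpha$ is also consistent with the quantifier order in the statement.
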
  
\begin{proof}
We start by observing that for all $(x,\xi) \in \R^2$, 
\begin{align}
\nonumber
\textup{Re} \left(e^{-i\frac{\pi}{4}} [(\xi + i \Phi')^2 + iV]\right) 
&= \cos(\pi/4) \left(\xi^2 - \Phi'^2 + 2\xi \Phi' + V\right)\\
& = \cos(\pi/4) \left((\xi + \Phi')^2 + V - 2 \Phi'^2\right) \geq 0
\end{align}
since $\Phi$ is a $\mu$-subsolution.

Recalling the definition of $a_\eta$ from \eqref{eq:def_aeta}, for $\lambda \in D(h^{2/3}\lambda_{1,\alpha}(0),Rh)$ and $(x,\xi) \in \R^2$, we deduce that
\begin{align*}
\textup{Re}(e^{-i\pi/4}z_\lambda(x,\xi)) 
&\leq \Re \left[e^{-i\pi/4} \lambda_{1,\alpha}(0)\right] + Rh^{1/3} \\
&\leq \cos(\pi/12) \left(|z_1| (\sup \alpha)^{2/3} + \frac{Rh^{1/3}}{\cos(\pi/12)}\right)\\
&\leq \cos(\pi/12) (1-\eta)|z_2| (\inf \alpha)^{2/3} = \textup{Re}(e^{-i\pi/4}a_\eta)
\end{align*} 
for $h$ and $\eta$ small enough, since $|z_1| \sup(\alpha)^{2/3} < |z_2| \inf(\alpha)^{2/3}$ by Assumption \ref{ass:alpha}. 
\end{proof}

\begin{proposition}
\label{prop:def_rlambda}
Given $\mu \in (0,1)$, there exists $C,c > 0$ such that the following holds. For all $R > 0$, there exists $h_0 > 0$ such that for any $\mu$-subsolution $\Phi$, $h \in (0,h_0)$, $(x,\xi) \in \R^2$ and $\lambda \in D(h^{\frac23}\lambda_{1,\alpha}(0),Rh)$, there exists a unique $r_\lambda(x,\xi) \in \mathscr{L}(L^2(\R_+))$ such that the following properties hold:
\begin{itemize}
\item[(i)]$r_\lambda(x,\xi) (p(x,\xi) - \lambda) =(p(x,\xi) - \lambda) r_\lambda(x,\xi)  = \Id - \pi_{1,\alpha}(x)$.
\item[(ii)] The map $\lambda \to r_\lambda(x,\xi)$ is continuous.
\item[(iii)] $r_\lambda(x,\xi) (\Id - \pi_{1,\alpha}(x)) =  (\Id - \pi_{1,\alpha}(x)) r_\lambda(x,\xi)= r_\lambda(x,\xi)$. 
\end{itemize}
Moreover, for every $\lambda \in D(h^{2/3}\lambda_{1,\alpha}(0),Rh)$, $r_\lambda \in C^\infty(\R^2;\mathscr{L}(L^2(\R_+)))$ and 
\begin{equation}
\label{eq:eric_prisonnier}
\langle \xi \rangle^2\|r_\lambda(x,\xi)\|_{\mathscr{L}(L^2(\R_+))} \leq Ch^{-2/3}\,, \quad \|y\cdot r_\lambda(x,\xi)\|_{\mathscr{L}(L^2(\R_+))} \leq C h^{-2/3}
\end{equation}

\end{proposition}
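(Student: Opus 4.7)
My plan is to define $r_\lambda(x,\xi)$ as the (appropriately rescaled) inverse of $p(x,\xi) - \lambda = h^{2/3}(\mathscr{A}_\alpha(x) - z_\lambda(x,\xi))$ restricted to the invariant subspace $\textup{Ran}(\Id - \pi_{1,\alpha}(x))$. Since $\pi_{1,\alpha}(x)$ is the Riesz spectral projector of $\mathscr{A}_\alpha(x)$ associated with $\lambda_{1,\alpha}(x)$, it commutes with $\mathscr{A}_\alpha(x)$, and $\mathscr{A}_\alpha(x)$ leaves $\textup{Ran}(\Id - \pi_{1,\alpha}(x))$ invariant with spectrum $\{\lambda_{n,\alpha}(x) : n \geq 2\}$ on this subspace. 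Lemma \ref{lem:yotei} guarantees that $z_\lambda(x,\xi) \in G_\eta$ for all $(x,\xi)\in\R^2$ when $h \leq h_0$, and the choice of $\eta$ together with Assumption \ref{ass:alpha}(iii) ensures that $G_\eta$ contains none of the $\lambda_{n,\alpha}(x)$ with $n \geq 2$. Consequently, $\mathscr{A}_\alpha(x) - z_\lambda(x,\xi)$ is invertible on $\textup{Ran}(\Id - \pi_{1,\alpha}(x))$, and I set
\[
r_\lambda(x,\xi) := h^{-2/3}\bigl(\mathscr{A}_\alpha(x)|_{\textup{Ran}(\Id - \pi_{1,\alpha}(x))} - z_\lambda(x,\xi)\bigr)^{-1}(\Id - \pi_{1,\alpha}(x)).
\]

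Properties (i), (ii) and (iii) then follow directly from this construction and the commutation of $\pi_{1,\alpha}(x)$ with $\mathscr{A}_\alpha(x)$; uniqueness is immediate, since any operator satisfying (i) and (iii) must vanish on $\textup{Ran}(\pi_{1,\alpha}(x))$ and invert $p(x,\xi)-\lambda$ on $\textup{Ran}(\Id - \pi_{1,\alpha}(x))$. For the $C^\infty$ smoothness in $(x,\xi)$, I would use the Dunford--Riesz representation
\[
r_\lambda(x,\xi) = -\frac{h^{-2/3}}{2\pi i}\oint_\Gamma \frac{(\mathscr{A}_\alpha(x)-\zeta)^{-1}(\Id - \pi_{1,\alpha}(x))}{\zeta - z_\lambda(x,\xi)}\,d\zeta,
\]
with $\Gamma\subset G_\eta$ a contour enclosing $z_\lambda(x,\xi)$ but avoiding $\lambda_{1,\alpha}(x)$ (chosen locally uniformly in $(x,\xi)$). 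Differentiating under the integral sign and using the smoothness (with bounded derivatives) of $x \mapsto \mathscr{A}_\alpha(x)$ and $x \mapsto \pi_{1,\alpha}(x)$ (inherited from $x \mapsto \mathcal{U}_\alpha(x)$) yields the required regularity; continuity in $\lambda$ is immediate from the same representation.

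To prove \eqref{eq:eric_prisonnier}, I would invoke Proposition \ref{prop:resAx}, which gives
\[
\|r_\lambda(x,\xi)\|_{\mathscr{L}(L^2(\R_+))} \leq \frac{C_\eta h^{-2/3}}{1+|z_\lambda(x,\xi)|},\qquad \|r_\lambda(x,\xi)\|_{\mathscr{L}(L^2(\R_+),\mathrm{D})} \leq C_\eta h^{-2/3}.
\]
The bound on $\|y\cdot r_\lambda(x,\xi)\|_{\mathscr{L}(L^2(\R_+))}$ then follows from the continuous embedding $\mathrm{D} \hookrightarrow L^2(\R_+, y^2\,dy)$ provided by the elliptic regularity estimates of Appendix \ref{app:er}. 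For the first inequality in \eqref{eq:eric_prisonnier}, the key observation, already exploited in the proof of Lemma \ref{lem:yotei}, is that the $\mu$-subsolution property implies
\[
\textup{Re}\bigl(e^{-i\pi/4}[(\xi+i\Phi')^2+iV]\bigr) = \cos(\pi/4)\bigl((\xi+\Phi')^2 + V - 2\Phi'^2\bigr) \geq \cos(\pi/4)(\xi+\Phi')^2.
\]
Combined with $\lambda = O(h^{2/3})$ and the elementary bound $|z|\geq|\textup{Re}(e^{-i\pi/4}z)|$, this yields $|z_\lambda(x,\xi)|\geq c h^{-2/3}(\xi+\Phi')^2 - C$. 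Splitting the regime of $\xi$ into a bounded region (where $\langle\xi\rangle$ is uniformly controlled by $\|\Phi'\|_\infty$) and a complementary region (where $(\xi+\Phi')^2 \gtrsim \langle\xi\rangle^2$), one concludes $\langle\xi\rangle^2\|r_\lambda(x,\xi)\|_{\mathscr{L}(L^2(\R_+))} \leq Ch^{-2/3}$ for $h$ small enough.

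The main technical input is the resolvent estimate of Proposition \ref{prop:resAx}, and the principal point to address is the near-singularity at $z_\lambda = \lambda_{1,\alpha}(x)$: this is bypassed by projecting onto the complementary spectral subspace, on which the object of interest is holomorphic in $z$ throughout $G_\eta$. The only remaining subtlety is the passage from a resolvent bound in terms of $|z_\lambda|$ to one in terms of $\langle\xi\rangle$, which is precisely what the quadratic lower bound on $|z_\lambda|$ produced by the $\mu$-subsolution hypothesis delivers.
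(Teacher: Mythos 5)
Your proposal is correct and follows essentially the same route as the paper: both invert $h^{2/3}(\mathscr{A}_\alpha(x)-z_\lambda)$ on the range of $\Id-\pi_{1,\alpha}(x)$ (the paper phrases this as a holomorphic extension of $(\mathscr{A}_\alpha(x)-z)^{-1}(\Id-\pi_{1,\alpha}(x))$ through $\lambda_{1,\alpha}(x)$ rather than as a restricted resolvent), and both rely on Lemma \ref{lem:yotei} and Proposition \ref{prop:resAx} for the uniform resolvent bounds, on Proposition \ref{prop:ellip_A_yD2y} for the $y$-weight, and on the $\mu$-subsolution property to get $|z_\lambda|\gtrsim h^{-2/3}\xi^2-Ch^{-2/3}$ and hence the $\langle\xi\rangle^2$ gain. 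The only slip is the sign in your Dunford representation (with the counter-clockwise orientation the residue at $\zeta=z_\lambda$ gives $+\frac{h^{-2/3}}{2\pi i}\oint_\Gamma$, not $-$), which is harmless for the smoothness argument.
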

\begin{proof}
Given $x \in \R$, we denote $E(x):= (\Id- \pi_{1,\alpha}(x))(\mathrm{D})$ endowed with the norm $\|\cdot\|_{\mathrm{D}}$. 
Since $\mathscr{A}_\alpha(x)$ commutes with $\pi_{1,\alpha}(x)$ for all $x \in \R$, we may define for each $x \in \R$ and $z \in \mathbb{C}$ an operator $L_x(z) \in \mathscr{L}(E(x))$ by
$$L_x(z) u := h^{2/3}(\mathscr{A}_\alpha(x) - z).$$
The application $z \mapsto L_x(z)$ defines a function $L_x : \mathbb{C} \to \mathscr{L}(E(x))$ which is holomorphic. Moreover, with $z_\lambda(x,\xi)$ as in \eqref{eq:z(lambda)}, 
\begin{equation}
\label{eq:lienLph}
L_x(z_\lambda(x,\xi)) = (p_h(x,\xi) - \lambda)|_{E(x)}.
\end{equation}

Next, let
$$\Omega := D(\lambda_{1,\alpha}(0),Rh^{1/3}) - \frac{[(\xi + i \Phi')^2 + iV]}{h^{2/3}} \subset \mathbb{C}.$$
By Lemma \ref{lem:yotei}, for $h$ small enough, $\Omega \subset G_\eta$,  and in particular $\Omega \cap \sigma(\mathscr{A}_\alpha(x)) \subset \{\lambda_{1,\alpha}(x)\}$.
Hence, for $z \in \Omega \setminus \{\lambda_{1,\alpha}(x)\}$, we can define another map $F_x(z) \in \mathscr{L}(E(x))$ by
$$F_x(z):= (\mathscr{A}_\alpha(x) - z)^{-1} (\Id - \Pi_{1,\alpha}(x)).$$
For each $x$, the map $z \mapsto F_x(z)$ is holomorphic on $\Omega \setminus \{\lambda_{1,\alpha}(x)\}$ and bounded uniformly for $z$ in a neighborhood of $\lambda_{1,\alpha}(x)$ thanks Proposition \ref{prop:resAx} and the fact that $\Omega \subset G_\eta$. Therefore, it admits a holomorphic extension to the whole $\Omega$ that we denote by $\widetilde{F}_x$. Moreover the combination of Proposition \ref{prop:resAx} and Lemma \ref{lem:yotei} implies that
\begin{equation}
\label{eq:estim_prolongee}
(1 + |z_\lambda(x,\xi)|)\|\widetilde{F}_x(z_\lambda(x,\xi))\|_{\mathscr{L}(L^2(\R_+))} \leq C
\end{equation}
\begin{equation}
\label{eq:estim_prolongeeD}
\|\widetilde{F}_x(z_\lambda(x,\xi))\|_{\mathscr{L}(L^2(\R_+),\mathrm{D})} \leq C
\end{equation}
Using the fact that $\Phi$ is a $\mu$-subsolution, and Assumption \ref{ass:V}, we notice that $|z_\lambda(x,\xi)| \geq ch^{-2/3}|\xi|^2 - Ch^{-2/3}$ (since $|\Phi'| \leq \frac{1-\mu}{2}\sqrt{V}$); thus \eqref{eq:estim_prolongee} implies  
\begin{equation}
\label{eq:estim_prolongee_bis}
\langle \xi \rangle^2 \|\widetilde{F}_x(z_\lambda(x,\xi))\|_{\mathscr{L}(L^2(\R_+))} \leq C.
\end{equation}
On the other hand, using the elliptic estimate of Proposition \ref{prop:ellip_A_yD2y}, \eqref{eq:estim_prolongeeD} implies that
\begin{equation}
\label{eq:estim_prolongeeDbis}
\|y \cdot \widetilde{F}_x(z_\lambda(x,\xi))\|_{\mathscr{L}(L^2(\R_+))} \leq C
\end{equation}

For all $z \in \Omega \setminus \{\lambda_{1,\alpha}(x)\}$, $F_x(z) L_x(z) = L_x(z) F_x(z) = h^{2/3}\Id_{E(x)}$ and thus by continuity,
\begin{equation}
\label{eq:FxLxLxFx}
\forall z \in \Omega\,, \quad \widetilde{F}_x(z) L_x(z) = \widetilde{F}_x(z) L_x(z) = h^{2/3}\Id_{E(x)}
\end{equation}
Hence, if we define $r_\lambda(x,\xi) \in \mathscr{L}(L^2(\R_+),\mathrm{D})$ by
$$r_\lambda(x,\xi) := h^{-2/3}\widetilde{F}_x(z_\lambda(x,\xi))(\Id - \pi_{1,\alpha}(x))\,,$$
then by \eqref{eq:FxLxLxFx} and \eqref{eq:lienLph}, 
$$r_\lambda(x,\xi) (p_h(x,\xi) - \lambda) =  (p_h(x,\xi) - \lambda) r_\lambda(x,\xi) = \Id - \pi_{1,\alpha}(x)$$
which proves (i). The continuity (ii) is immediate. The property (iii) follows from the definition of $r_\lambda(x,\xi)$ if $\lambda \neq \lambda_{1,\alpha}(x)$ and by continuity for $\lambda = \lambda_{1,\alpha}(x)$. The uniqueness is immediate for $\lambda \neq h^{2/3}\lambda_{1,\alpha}(x) - [(\xi + i \phi')^2 + iV]$ since in this case the operator $p(x,\xi) - \lambda$ is invertible, and follows for $\lambda$ in the whole disk $D(\lambda_{1,\alpha}(0)h^{\frac23},Rh)$ by continuity. Since $\widetilde{F}_\lambda$ is holomorphic on $\Omega$ and $(x,\xi) \mapsto z_\lambda(x,\xi)$ is infinitely differentiable, we deduce by composition that $(x,\xi) \mapsto r_\lambda(x,\xi)$ is infinitely differentiable. Finally, the estimates in \eqref{eq:eric_prisonnier} follow from the boundedness of $\Id - \pi_{1,\alpha}(x)$, \eqref{eq:estim_prolongee_bis} and \eqref{eq:estim_prolongeeDbis}. 
\end{proof}

\subsection{Class estimates for \texorpdfstring{$r_\lambda$}{r\_lambda}}
\label{sec:class_r}
To exploit the symbol $r_\lambda$ constructed above, we need to ensure that it belongs to a suitable symbol class. To this end, we prove the following result.
\begin{proposition}
\label{prop:rlambda_class}
For all $R > 0$ and $\mu > 0$, there exists $h_0 > 0$ such that for all $h \in (0,h_0)$ and all $\lambda \in D(\lambda_{1,\alpha}(0)h^{\frac23},Rh)$, the symbol $(h^{2/3}r_\lambda)_{h \in (0,h_0)}$ belongs to $S_{\frac13}(L^2(\R_+),L_{\langle \xi \rangle^2}^2(\R_+))$. More precisely, for $\mu > 0$ and $\beta \in \mathbb{N}^2$, there exists $C > 0$ such that the estimate
$$|h^{2/3}r_\lambda|_{\frac13,\beta} \leq C \max_{n \leq |\beta|} \|\partial^n_x \Phi'\|_\infty$$ 
holds for all $\lambda \in D(\lambda_{1,\alpha}(0)h^{2/3},Rh)$ and $\Phi$ any $\mu$-subsolution.
\end{proposition}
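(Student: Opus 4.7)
The plan is to deduce the $S_{1/3}$ estimate from the operator-valued resolvent bound of Proposition~\ref{prop:resA} via the chain rule, after rewriting $r_\lambda$ in a form that isolates the scalar dependence. Using the intertwining relation \eqref{eq:Ax_is_UxAUx*} (and the fact that $\pi_{1,\alpha}(x)$ corresponds to $\pi_1$ under the unitary conjugation by $\mathcal{U}_\alpha(x)$), one can write
\[
h^{2/3} r_\lambda(x,\xi) = \alpha(x)^{-2/3}\, \mathcal{U}_\alpha(x)\, \widetilde{G}\bigl(\tilde{z}_\lambda(x,\xi)\bigr)\, \mathcal{U}_\alpha(x)^{*}, \qquad \tilde{z}_\lambda := \alpha(x)^{-2/3} z_\lambda(x,\xi),
\]
where $\widetilde{G}(z) := (\mathscr{A}-z)^{-1}(\Id - \pi_1)$ is the universal resolvent, extended holomorphically across $|z_1|e^{i\pi/3}$. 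Since $x \mapsto \alpha(x)^{-2/3}$ and $x \mapsto \mathcal{U}_\alpha(x)$ are smooth with bounded derivatives in the relevant operator norms (Assumption~\ref{ass:alpha}), the Leibniz rule reduces the proof to bounding $\langle\xi\rangle\,\|\partial^\beta \widetilde{G}(\tilde{z}_\lambda(x,\xi))\|_{\mathscr{L}(L^2(\R_+))}$ by $C_\beta\, h^{-|\beta|/3} \max_{n \leq |\beta|}\|\partial_x^n\Phi'\|_\infty$, the factor $\langle\xi\rangle$ accounting for the target-space weight.

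To handle this, I expand $\partial^\beta \widetilde{G}(\tilde{z}_\lambda)$ using Faà di Bruno's formula as a finite sum of terms
\[
\widetilde{G}^{(j)}(\tilde{z}_\lambda) \prod_{i=1}^j \partial^{b_i}\tilde{z}_\lambda, \qquad b_i \in \mathbb{N}^2 \setminus \{0\}, \quad \sum_i b_i = \beta.
\]
The commutation of $\pi_1$ with the resolvent on its range yields the identity $\widetilde{G}^{(j)}(z) = j!\, \widetilde{G}(z)^{j+1}$, so Proposition~\ref{prop:resA} gives
$\|\widetilde{G}^{(j)}(z)\|_{\mathscr{L}(L^2(\R_+))} \leq j!\,(C'_\eta)^{j+1}(1+|z|)^{-(j+1)}$.
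Moreover, $\tilde{z}_\lambda$ is a polynomial of degree two in $\xi$ with smooth $x$-dependent coefficients, and a direct calculation yields $|\partial^b \tilde{z}_\lambda(x,\xi)| \leq C_b\, h^{-2/3}\, \langle\xi\rangle^{\max(0,2-b_\xi)}$, with constants depending polynomially on $\max_{n\leq |b|}\|\partial_x^n\Phi'\|_\infty$.

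The third and crucial ingredient is a uniform lower bound
\[
1 + |\tilde{z}_\lambda(x,\xi)| \gtrsim 1 + h^{-2/3}\bigl(\xi^2 + V(x)\bigr),
\]
obtained by a case analysis analogous to Lemma~\ref{lem:yotei}: for $\xi^2 + V$ large, the real or imaginary part of $\tilde{z}_\lambda$ already has the required magnitude; for $(\xi, V)$ near zero, $\tilde{z}_\lambda$ stays within $O(h^{1/3})$ of $|z_1|e^{i\pi/3}\cdot(\alpha(0)/\alpha(x))^{2/3}$ and is thus bounded below. The $\mu$-subsolution inequality $2\Phi'^2 \leq (1-\mu)V$ is essential here, ensuring that $\Phi'$-contributions to $\tilde{z}_\lambda$ are dominated by the $V$-term. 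Plugging the three estimates into each Faà di Bruno term yields a bound
\[
C_\beta\, \max_n\|\partial_x^n\Phi'\|_\infty^{|\beta|}\; h^{-2j/3}\,\frac{\langle\xi\rangle^{N}}{\bigl(1 + h^{-2/3}\xi^2\bigr)^{j+1}},
\]
with $N = \sum_i \max(0,\,2 - (b_i)_\xi) \leq 2j$. The desired estimate $C\, h^{-|\beta|/3}\langle\xi\rangle^{-1}$ then follows by pairing each of the $j$ factors $h^{-2/3}\langle\xi\rangle$ arising from differentiating $\tilde{z}_\lambda$ with one resolvent decay $(1+h^{-2/3}\xi^2)^{-1}$ to produce $h^{-1/3}$ (a computation reducing to elementary analysis of $u \mapsto u/(1+u^2)$), while the remaining factor $(1+h^{-2/3}\xi^2)^{-1}$ yields the target weight $\langle\xi\rangle^{-1}$.

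The main obstacle is precisely this last bookkeeping step: the naive cost $h^{-2/3}$ per derivative of $\tilde{z}_\lambda$ exceeds the $h^{-1/3}$ allowed by the class $S_{1/3}$, and each missing $h^{1/3}$ must be systematically extracted from the resolvent decay. The $\mu$-subsolution hypothesis is essential throughout, because without it the $\Phi'$-dependent pieces of $\partial^b \tilde{z}_\lambda$ could not be absorbed into the $V$- and $\xi^2$-dependent lower bound for $|\tilde{z}_\lambda|$, and the whole scheme would collapse.
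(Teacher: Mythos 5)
Your route is genuinely different from the paper's: instead of the inductive ``$N$-atom / $N$-term'' algebra built on the resolvent identities $\partial_x r_\lambda=-r_\lambda(\partial_xp)r_\lambda-\dots$ (Lemma \ref{prop:dxdxir} and \S\ref{sec:class_r}), you conjugate by $\mathcal{U}_\alpha(x)$ to reduce to the universal reduced resolvent $\widetilde G(z)=(\mathscr{A}-z)^{-1}(\Id-\pi_1)$ and apply Fa\`a di Bruno in the scalar variable $\tilde z_\lambda$. Several ingredients are correct: the conjugation formula for $h^{2/3}r_\lambda$, the identity $\widetilde G^{(j)}=j!\,\widetilde G^{j+1}$ (valid because $\pi_1$ commutes with the resolvent, and extending across the removable singularity), and the lower bound $1+|\tilde z_\lambda|\gtrsim 1+h^{-2/3}(\xi^2+V)$, which is the analogue of \eqref{eq:arnaque}.

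However, there is a genuine gap in the quantitative step that is supposed to produce the gain from $h^{-2/3}$ to $h^{-1/3}$ per derivative. Your stated bound $|\partial^{b}\tilde z_\lambda|\leq C_b\,h^{-2/3}\langle\xi\rangle^{\max(0,2-b_\xi)}$ is too crude, and with it the claimed pairing fails: for a single $\xi$-derivative one would need
$\sup_{\xi}\,h^{-2/3}\langle\xi\rangle\,(1+h^{-2/3}\xi^2)^{-1}\lesssim h^{-1/3}$, but at $\xi=0$ the left-hand side equals $h^{-2/3}$ (the constant $1$ hidden in $\langle\xi\rangle$ receives no help from the resolvent decay). The same failure occurs for a single $x$-derivative, where your bound gives $h^{-2/3}\langle\xi\rangle^2$ and the quotient by $(1+h^{-2/3}(\xi^2+V))$ is still of size $h^{-2/3}$ at $(x,\xi)=(0,0)$. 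The point you are missing is that the first derivatives of $\tilde z_\lambda$ \emph{vanish at the bottom of the well}: $\partial_\xi\tilde z_\lambda=-2\alpha^{-2/3}h^{-2/3}(\xi+i\Phi')$ with $|\xi+i\Phi'|\lesssim(\xi^2+V)^{1/2}$ because $2\Phi'^2\leq(1-\mu)V$, and the dangerous pieces of $\partial_x\tilde z_\lambda$ are $h^{-2/3}V'$ and $h^{-2/3}(\xi+i\Phi')\Phi''$, controlled via $|V'|\lesssim\sqrt{V}$ (nondegenerate minimum plus Assumption \ref{ass:V}(iii)). Only with these refined bounds does the quotient by $1+h^{-2/3}(\xi^2+V)$ yield $h^{-1/3}$, via $u\mapsto u/(1+u^2)$ applied to $u=h^{-1/3}(\xi^2+V)^{1/2}$. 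This is exactly the content of the paper's Lemma \ref{lem:h13} (the estimates $\|(\xi+i\Phi')r_\lambda\|\lesssim h^{-1/3}$ and $\|V'r_\lambda\|\lesssim h^{-1/3}$, proved by the case analysis $\xi^2+V\lessgtr Ch^{2/3}$), and without it your scheme only yields the class $S_{2/3}$, which is useless for the symbolic calculus. A secondary point to be careful about: $x\mapsto\mathcal{U}_\alpha(x)^*$ is not differentiable as a map into $\mathscr{L}(L^2(\R_+))$ (its derivative involves $y\partial_y$), so the Leibniz differentiation of $\mathcal{U}_\alpha\widetilde G(\tilde z_\lambda)\mathcal{U}_\alpha^*$ needs a justification; the paper sidesteps this by exploiting the rank-one structure of $\pi_{1,\alpha}(x)$ and the smoothness of $x\mapsto u_{\alpha,x}$ in $L^2$ (see \eqref{eq:Dxux}).
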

Let us first summarize our method. The starting point is the following formulas for the derivatives of $r_\lambda$.
\begin{lemma}[First derivatives of $r_\lambda$]
\label{prop:dxdxir}
Let $\lambda \in D(\lambda_{1,\alpha}(0)h^{2/3},Rh)$ and let $r_\lambda$ be defined by Proposition \ref{prop:def_rlambda}. Then 
\begin{align}
\label{eq:dxr}
\partial_x r_\lambda & = -r_\lambda (\partial_x p) r_\lambda - r_\lambda (\partial_x \pi_{1,\alpha}) - (\partial_x \pi_{1,\alpha}) r_\lambda.\\
\label{eq:dxir}
\partial_\xi r_\lambda & = -r_\lambda(\partial_\xi p) r_\lambda.
\end{align}
\end{lemma}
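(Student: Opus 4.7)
The plan is to obtain both formulas by differentiating the defining identities of $r_\lambda$ in Proposition \ref{prop:def_rlambda}, namely
\[
r_\lambda (p_h - \lambda) = (p_h - \lambda) r_\lambda = \Id - \pi_{1,\alpha}, \qquad r_\lambda \pi_{1,\alpha} = \pi_{1,\alpha} r_\lambda = 0,
\]
together with the smoothness statement already established in Proposition \ref{prop:def_rlambda}. The idea is that the two extra terms involving $\partial_x \pi_{1,\alpha}$ in formula \eqref{eq:dxr} come precisely from correcting the fact that $r_\lambda$ is an inverse only modulo the projection $\pi_{1,\alpha}(x)$, which itself depends on $x$.

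For the $\xi$-derivative, I would start from $r_\lambda(p_h - \lambda) = \Id - \pi_{1,\alpha}$ and differentiate in $\xi$. Since $\pi_{1,\alpha}(x)$ does not depend on $\xi$, this yields $(\partial_\xi r_\lambda)(p_h-\lambda) = - r_\lambda (\partial_\xi p_h)$. Multiplying on the right by $r_\lambda$ gives $(\partial_\xi r_\lambda)(\Id - \pi_{1,\alpha}) = -r_\lambda(\partial_\xi p_h) r_\lambda$. To conclude, I use the identity $r_\lambda = r_\lambda(\Id - \pi_{1,\alpha})$ (from Proposition \ref{prop:def_rlambda}(iii)), which upon $\xi$-differentiation gives $\partial_\xi r_\lambda = (\partial_\xi r_\lambda)(\Id - \pi_{1,\alpha})$ since $\partial_\xi \pi_{1,\alpha} = 0$. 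Combining, $\partial_\xi r_\lambda = -r_\lambda(\partial_\xi p_h) r_\lambda$, which is \eqref{eq:dxir}.

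For the $x$-derivative, I apply the same strategy but now $\pi_{1,\alpha}(x)$ does depend on $x$. Differentiating $r_\lambda(p_h-\lambda) = \Id - \pi_{1,\alpha}$ in $x$ gives
\[
(\partial_x r_\lambda)(p_h - \lambda) = -r_\lambda (\partial_x p_h) - \partial_x \pi_{1,\alpha},
\]
and multiplying on the right by $r_\lambda$ yields
\[
(\partial_x r_\lambda)(\Id - \pi_{1,\alpha}) = -r_\lambda (\partial_x p_h) r_\lambda - (\partial_x \pi_{1,\alpha}) r_\lambda.
\]
To recover $\partial_x r_\lambda$ itself, I again use $r_\lambda = r_\lambda(\Id - \pi_{1,\alpha})$, but this time $x$-differentiation produces an extra term:
\[
\partial_x r_\lambda = (\partial_x r_\lambda)(\Id - \pi_{1,\alpha}) - r_\lambda (\partial_x \pi_{1,\alpha}).
\]
Substituting the previous identity into this expression gives exactly \eqref{eq:dxr}. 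There is no serious obstacle here; the only point that requires a brief justification is the differentiability of $r_\lambda$ in $(x,\xi)$, but this is already contained in Proposition \ref{prop:def_rlambda} via the composition of the holomorphic extension $\widetilde{F}_x$ with the smooth map $(x,\xi) \mapsto z_\lambda(x,\xi)$, so the formal computation above can be carried out rigorously in $\mathscr{L}(L^2(\R_+))$.
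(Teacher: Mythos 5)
Your proof is correct and is exactly the argument the paper intends: the paper's proof consists of the single sentence ``differentiate relation (i) of Proposition \ref{prop:def_rlambda} and use relation (iii)'', and your computation carries this out in detail, with the two $\partial_x\pi_{1,\alpha}$ terms arising, as you say, from the $x$-dependence of the projection in relations (i) and (iii). No gaps.
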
 
\begin{proof}
This is obtained by differentiating the relation (i) of Proposition \ref{prop:def_rlambda} and using the relation (iii) of this proposition. 
\end{proof}
The key point is that although $\|r_\lambda\|_{L^2 \to L^2_{\langle \xi \rangle^2}}$ is $O(h^{-2/3})$, the terms $\|(\partial_xp) r_\lambda\|_{\mathscr{L}(L^2)}$ and $\|(\partial_\xi p)r_\lambda\|_{\mathscr{L}(L^2)}$, and more generally, all terms of the form $\|(\partial^\beta p)r_\lambda\|_{\mathscr{L}(L^2)}$ for $|\beta|\geq 1$, satisfy the better bounds $O(h^{-|\beta|/3})$. Finally, differentiating more times $r_\lambda$ leads to linear combinations of terms of the form $r_\lambda (\partial^{\beta_1} p) r_\lambda (\partial^{\beta_2} p) r_\lambda \ldots (\partial^{\beta_J} p) r_\lambda$, or ``better terms". 

To formalize these ideas, we start by capturing precisely the structure of the $\beta$-th derivatives of $r_\lambda$. 

\subsubsection*{Structure of the derivatives of $r_\lambda$}
The idea is that $\partial^\beta r_\lambda$ is an $N$-term in the sense of Definition \ref{def:Nterm} below, with $N$-terms constructed from $N$-atoms that we define now.

\begin{definition}[$N$-atom]
\label{def:Natom}
Given $N \in \mathbb{N}$ and $h_0 > 0$, we say that a family $(a_N(h))_{h \in (0,h_0)}$ of elements of $C^\infty(\R^2; \mathscr{L}(L^2(\R_+))$ is an $N$-atom if either
\begin{itemize}
\item[(i)] $N = 0$ and $a_N(h) = f_h$ for all $h \in (0,h_0)$, where $(f_h)_{h \in (0,h_0)} \in S_0(L^2(\R_+),L^2(\R_+))$ is independent of $\Phi$.
\item[(ii)] $N \geq 1$ and for all $h \in (0,h_0)$, 
$$a_N(h) = f_{1,h}\cdot  (\partial^\beta p) \cdot f_{2,h} \cdot r_\lambda \cdot f_{3,h}$$ where $\beta \in \mathbb{N}^2$ 
satisfies $|\beta| = N$, and where $(f_{i,h})_{h \in (0,h_0)} \in S_0(L^2(\R_+),L^2(\R_+))$ are independent of $\Phi$, for ${i = 1,2,3}$. 
\item[(iii)] $N \geq 1$ and there exists $i,j\in \{1,\ldots,N-1\}$ satisfying $i + j \leq N$ and 
$$a_N(h) = a_i(h) \cdot  a_j(h)$$
where $(a_i(h))_{h \in (0,h_0)}$ is an $i$-atom and $(a_j(h))_{h \in (0,h_0)}$ is a $j$-atom. 
\end{itemize}
\end{definition}
In what follows, when it will not lead to confusion, we omit the dependence in $h$ from the notation. Observe that an $N$-atom $a_N$ is also an $M$-atom for any $M \geq N$ (writing it as $1 \cdot a_N$ in  case (iii), using that $1$ is a $0$-atom by case (i)). 
\begin{definition}[$N$-term]
\label{def:Nterm}
Given $N \in \mathbb{N}$ and $h_0 > 0$, we say that a family $(t_N(h))_{h \in (0,h_0)}$ is an $N$-term if it is of the form 
$$t_N(h) = \sum_{j = 1}^J f_{j,h} \cdot r_\lambda \cdot a_{N,j}(h)$$
for all $h \in (0,h_0)$, where $J \in \mathbb{N}$, $(f_{j,h})_{h \in (0,h_0)} \in S_0(L^2(\R_+),L^2(\R_+))$ are independent of $\Phi$ and $a_{N,j}$ are $N$-atoms. 
\end{definition}
Similarly, we may omit the $h$-dependence from the notation of $N$-terms. Observe that if $t_N$ is an $N$-term, it is also an $M$-term for every $M \geq N$. 

\begin{proposition}
\label{prop:fdxdxiatom}
If $a_N$ is an $N$-atom and $f \in S_0(L^2(\R_+),L^2(\R_+))$, then $f \cdot a_N$ is an $N$-atom. Moreover, $\partial_x a_N$ and $\partial_\xi a_N$ are linear combinations of a finite number of $(N+1)$-atoms.
\end{proposition}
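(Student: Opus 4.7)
I would prove both assertions simultaneously by strong induction on $N$, exploiting the recursive structure of Definition \ref{def:Natom}. The base case $N=0$ is immediate: a $0$-atom is a $\Phi$-independent element of $S_0(L^2(\R_+),L^2(\R_+))$, and this class is closed under pointwise composition (with a $\Phi$-independent symbol in $S_0$, as is always the case in applications) and under partial differentiation in $(x,\xi)$; consequently $f\cdot a_0$ is a $0$-atom and $\partial_x a_0,\partial_\xi a_0$ are $0$-atoms, hence $1$-atoms after the natural inclusion.

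\textbf{Inductive step.} For the multiplication claim, a type-(ii) atom $a_N = f_{1,h}(\partial^\beta p) f_{2,h} r_\lambda f_{3,h}$ satisfies $f\cdot a_N = (f f_{1,h})(\partial^\beta p) f_{2,h} r_\lambda f_{3,h}$ by associativity, hence is again a type-(ii) $N$-atom; a type-(iii) atom $a_N = a_i\cdot a_j$ with $i,j<N$ is handled by applying the inductive hypothesis to $a_i$ and then invoking case (iii). For the differentiation claim on a type-(ii) atom, the Leibniz rule produces five terms. Four of these are unproblematic: a derivative falling on some $f_{j,h}$ preserves the type-(ii) template and the value of $|\beta|$, hence yields an $N$-atom; a derivative on $(\partial^\beta p)$ raises $|\beta|$ to $N+1$, yielding an $(N+1)$-atom of type (ii) directly. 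The only delicate term is $f_{1,h}(\partial^\beta p) f_{2,h}(\partial_x r_\lambda) f_{3,h}$, for which I substitute
$$\partial_x r_\lambda = -r_\lambda(\partial_x p) r_\lambda - r_\lambda(\partial_x \pi_{1,\alpha}) - (\partial_x \pi_{1,\alpha}) r_\lambda$$
from Lemma \ref{prop:dxdxir}. Since $\partial_x\pi_{1,\alpha}\in S_0$ is $\Phi$-independent (Proposition \ref{prop:symbols_Lh_Pi}), the last two contributions are absorbed into the adjacent $f_{j,h}$ and remain type-(ii) $N$-atoms. The first contribution
$$-f_{1,h}(\partial^\beta p) f_{2,h}\, r_\lambda(\partial_x p) r_\lambda\, f_{3,h}$$
fails case (ii) because it carries two copies of $r_\lambda$, but it factors as the product of the type-(ii) $N$-atom $f_{1,h}(\partial^\beta p) f_{2,h} r_\lambda\cdot \Id$ and the type-(ii) $1$-atom $\Id\cdot(\partial_x p)\cdot \Id\cdot r_\lambda\cdot f_{3,h}$; by case (iii) of Definition \ref{def:Natom} with $i=N$ and $j=1$ (so $i+j=N+1$), this is an $(N+1)$-atom. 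A type-(iii) atom is handled by Leibniz plus the inductive hypothesis and a final application of case (iii). The $\partial_\xi$-case is analogous and easier, since $\partial_\xi\pi_{1,\alpha}=0$.

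\textbf{Main obstacle.} The only genuine subtlety is that differentiating $r_\lambda$ produces a term containing two occurrences of $r_\lambda$, which no longer matches the single-$r_\lambda$ template of case (ii) of Definition \ref{def:Natom}. Case (iii) is precisely designed to close the class of atoms under such products; the bookkeeping heart of the proof is to recognise this factorisation, to insert identity symbols where needed so that each factor genuinely fits case (ii), and to check that the indices of the two sub-atoms sum to at most $N+1$.
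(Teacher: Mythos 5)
Your proof is correct and follows essentially the same route as the paper's: induction on $N$ through the three cases of Definition \ref{def:Natom}, with the key step being the substitution of the formula for $\partial_x r_\lambda$ from Lemma \ref{prop:dxdxir} and the recognition that the resulting two-$r_\lambda$ term factors, via case (iii), as a product of an $N$-atom and a $1$-atom. Your write-up is in fact slightly more careful than the paper's (explicit tracking of the trailing factor $f_{3,h}$ and of the $\partial_\xi$ case), and no gap remains.
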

\begin{proof}
The first statement is easily shown by definition and induction on $N$. For the second statement, we only consider the case of the $x$-derivative (the other case is similar, but simpler). We proceed by induction on $N$ and consider the three cases (i)-(iii) of Definition \ref{def:Natom} separately. 
\begin{enumerate}[1.]
\item In case (i), the result is obvious. Hence, suppose that the result holds for all $j$-atoms with $j \leq N-1$ and let $a_{N}$ be an $N$-atom.
\item In case (ii), the only difficulty is to show that
$$b_{N+1} := f_1\cdot \partial^\beta p \cdot f_2 \cdot (\partial_x r_\lambda) \cdot f_3$$
is a linear combination of $(N+1)$-atoms. Using the expression of the derivatives of $r_\lambda$ from Proposition \ref{prop:dxdxir}, this term reads
\begin{align*}
b_N &= f_1\cdot \partial^\beta p \cdot f_2 \cdot \big(-r_\lambda (\partial_x p)r_\lambda - \partial_x \pi_{1,\alpha} r_\lambda - r_\lambda \partial_x \pi_{1,\alpha} \big) \cdot f_3 \\
& = c_N \cdot c_1 - f_1 \cdot \partial^\beta p \cdot \widetilde{f}_2 \cdot r_\lambda \cdot f_3 - f_1 \cdot \partial^\beta p \cdot f_2 \cdot r_\lambda \cdot \widetilde{f}_3 
\end{align*}
where $c_N = -f \cdot (\partial^\beta p) \cdot g \cdot r_\lambda$, $c_1 = \partial_x p\cdot r_\lambda$, $\widetilde{f}_2 = f_2 \cdot \partial_x \pi_{1,\alpha}$ and $\widetilde{f}_3 = \partial_x \pi_{1,\alpha}\cdot h$. Thus $b_{N+1}$ is indeed a linear combination of $(N+1)$-atoms. 
\item In case (iii), the result follows by writing $\partial_x a_N = \partial_x a_i \cdot a_j + a_i \cdot \partial_x a_j$ and applying the induction hypothesis. 
\end{enumerate} 
This concludes the proof.
\end{proof}
\begin{corollary}
\label{cor:dxdxiNterm}
If $t_N$ is an $N$-term, then $\partial_x t_N$ and $\partial_\xi t_N$ are $(N+1)$-terms.   
\end{corollary}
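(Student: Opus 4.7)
The plan is to differentiate each summand of $t_N = \sum_{j=1}^J f_j \cdot r_\lambda \cdot a_{N,j}$ using the Leibniz rule and check that each of the three resulting pieces is an $(N+1)$-term (for $\partial_x$; the case of $\partial_\xi$ is analogous and simpler). Since the class of $(N+1)$-terms is closed under finite sums by Definition \ref{def:Nterm}, this will conclude the argument. Throughout, we repeatedly use the elementary facts that every $N$-atom is also an $(N+1)$-atom and every $N$-term is also an $(N+1)$-term (write $a_N = 1 \cdot a_N$ and invoke case (iii) of Definition \ref{def:Natom} with the $0$-atom $1$), as well as the observation that $\pi_{1,\alpha}$ and its derivatives lie in $S_0(L^2(\R_+),L^2(\R_+))$ and are independent of $\Phi$.

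Fix $j \in \{1,\ldots,J\}$ and write $f := f_j$, $a := a_{N,j}$. The Leibniz rule gives
\[
\partial_x(f \cdot r_\lambda \cdot a) = (\partial_x f) \cdot r_\lambda \cdot a + f \cdot (\partial_x r_\lambda) \cdot a + f \cdot r_\lambda \cdot (\partial_x a).
\]
The first piece is an $N$-term (and therefore an $(N+1)$-term) because $\partial_x f \in S_0$ is $\Phi$-independent. The third piece: by Proposition \ref{prop:fdxdxiatom}, $\partial_x a$ is a finite linear combination of $(N+1)$-atoms, so $f \cdot r_\lambda \cdot \partial_x a$ is a finite linear combination of $(N+1)$-terms by Definition~\ref{def:Nterm}.

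The middle piece is the only one needing the formula from Lemma \ref{prop:dxdxir}:
\[
f \cdot (\partial_x r_\lambda) \cdot a = -f \cdot r_\lambda (\partial_x p) r_\lambda \cdot a - f \cdot r_\lambda (\partial_x \pi_{1,\alpha}) \cdot a - f \cdot (\partial_x \pi_{1,\alpha}) r_\lambda \cdot a.
\]
For the first of these sub-pieces, note that $(\partial_x p) \cdot r_\lambda$ is a $1$-atom (case (ii) of Definition \ref{def:Natom} with $\beta = (1,0)$ and $f_1 = f_2 = f_3 = 1$), hence $(\partial_x p) \cdot r_\lambda \cdot a$ is an $(N+1)$-atom by case (iii) (with $i = 1$, $j = N$, $i+j = N+1$), and $-f \cdot r_\lambda \cdot [(\partial_x p) \cdot r_\lambda \cdot a]$ is an $(N+1)$-term. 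For the second sub-piece, the first statement of Proposition \ref{prop:fdxdxiatom} says that $(\partial_x \pi_{1,\alpha}) \cdot a$ is an $N$-atom, so $-f \cdot r_\lambda \cdot [(\partial_x \pi_{1,\alpha}) \cdot a]$ is an $N$-term. For the third sub-piece, the symbol $\tilde f := -f \cdot \partial_x \pi_{1,\alpha}$ lies in $S_0$ and is independent of $\Phi$, and by Proposition \ref{prop:fdxdxiatom} the product $\tilde f \cdot r_\lambda \cdot a = \tilde f \cdot r_\lambda \cdot a$ is an $N$-term. Summing over $j$ shows $\partial_x t_N$ is an $(N+1)$-term.

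For $\partial_\xi t_N$, the argument is identical but simpler: the formula $\partial_\xi r_\lambda = -r_\lambda (\partial_\xi p) r_\lambda$ has no projection terms (since $\pi_{1,\alpha}$ does not depend on $\xi$), and one writes $(\partial_\xi p) \cdot r_\lambda$ as a $1$-atom exactly as above. No part of this argument is delicate; the only bookkeeping subtlety is to treat the ``absorption'' $\tilde f = f \cdot \partial_x \pi_{1,\alpha} \in S_0$ correctly so that the symbol remains $\Phi$-independent, which is the main reason for requiring the factors $f_i$ in Definitions \ref{def:Natom}--\ref{def:Nterm} to be $\Phi$-independent in the first place.
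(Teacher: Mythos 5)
Your proof is correct and follows essentially the same route as the paper: apply the Leibniz rule to each summand $f_j \cdot r_\lambda \cdot a_{N,j}$, substitute the formula for $\partial_x r_\lambda$ from Lemma \ref{prop:dxdxir}, and identify each resulting piece as an $(N+1)$-term via Proposition \ref{prop:fdxdxiatom} and the closure properties of atoms and terms. The only cosmetic difference is that you spell out the absorption of $\partial_x\pi_{1,\alpha}$ into the $\Phi$-independent $S_0$ factors, which the paper leaves implicit.
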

\begin{proof}
We only treat the $x$-derivative (the case of the $\xi$-derivative is similar and simpler). Given an $N$-term $t_N$, we write 
\begin{align*}
\partial_x t_N &= \sum_{j = 1}^J (\partial_x f_j) \cdot r_\lambda \cdot a_{N,j} - \sum_{j = 1}^J f_j r_\lambda \overbrace{ \underbrace{[(\partial_x p) \cdot r_\lambda]}_{\textup{$1$-atom}}\cdot \underbrace{a_{N,j}}_{\textup{$N$-atom}}}^{(N+1)\textup{-atom}} \\
& \qquad - \sum_{j = 1}^J (f_j \partial_x \pi_{1,\alpha})\cdot r_\lambda \cdot  a_{N,j} - \sum_{j = 1}^J f_j \cdot r_\lambda \cdot \underbrace{[(\partial_x\pi_{1,\alpha})a_{N,j}]}_{N\textup{-atom}} + \sum_{j = 1}^J f_j \cdot r_\lambda \cdot \underbrace{(\partial_x a_{N,j})}_{\sum(N+1)\textup{-atoms}},
\end{align*}
and the conclusion follows by Proposition \ref{prop:fdxdxiatom}.
\end{proof} 
\begin{corollary}
\label{cor:rlambda_Nterm}
For any $\beta \in \mathbb{N}^2$, $\partial^\beta r_\lambda$ is a $|\beta|$-term.
\end{corollary}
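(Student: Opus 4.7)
The plan is to proceed by induction on $|\beta|$, using Corollary \ref{cor:dxdxiNterm} as the inductive step.

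For the base case $|\beta| = 0$, I need to check that $r_\lambda$ itself is a $0$-term. Writing $r_\lambda = 1 \cdot r_\lambda \cdot 1$ and noting that the constant symbol $1 \in S_0(L^2(\R_+),L^2(\R_+))$ is independent of $\Phi$, this expression matches Definition \ref{def:Nterm} with $J = 1$, $f_{1,h} = 1$, and $a_{0,1} = 1$ (a $0$-atom by case (i) of Definition \ref{def:Natom}). Hence $r_\lambda$ is a $0$-term.

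For the inductive step, suppose that $\partial^\beta r_\lambda$ is a $|\beta|$-term for every multi-index $\beta$ with $|\beta| \leq N$. Let $\beta' \in \mathbb{N}^2$ with $|\beta'| = N+1$. Writing $\beta' = \beta + e_i$ for some $i \in \{1,2\}$ and $|\beta| = N$, we have $\partial^{\beta'} r_\lambda = \partial_{x_i} (\partial^\beta r_\lambda)$ where $x_1 = x$ and $x_2 = \xi$. By the induction hypothesis, $\partial^\beta r_\lambda$ is an $N$-term, and Corollary \ref{cor:dxdxiNterm} then asserts that $\partial_x (\partial^\beta r_\lambda)$ and $\partial_\xi (\partial^\beta r_\lambda)$ are $(N+1)$-terms. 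This yields that $\partial^{\beta'} r_\lambda$ is an $(N+1)$-term, completing the induction.

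There is no real obstacle here, as all the work has been done in Proposition \ref{prop:fdxdxiatom} and Corollary \ref{cor:dxdxiNterm}; the statement is essentially a clean packaging of those results.
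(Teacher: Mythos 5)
Your proof is correct and follows the same route as the paper, which also argues by induction from Corollary \ref{cor:dxdxiNterm} after noting that $r_\lambda$ is a $0$-term. Your explicit verification of the base case (writing $r_\lambda = 1\cdot r_\lambda \cdot 1$ with $1$ a $0$-atom) is a detail the paper leaves implicit, but there is no difference in substance.
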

\begin{proof}
This follows by an immediate induction from Corollary \ref{cor:dxdxiNterm}, since $r_\lambda$ is a $0$-term.
\end{proof}

\subsubsection*{Estimates of $N$-atoms and proof of Proposition \ref{prop:rlambda_class}}
To proceed, we now estimate the norms of $N$-atoms.
\begin{proposition}
\label{prop:estim_atom}
If $(a_N(h))_{h \in (0,h_0)}$ is an $N$-atom, then there exists $C > 0$ such that for all $h \in (0,h_0)$ and for any $\mu$-subsolution $\Phi$,  
$$\|a_N(h)\|_{\mathscr{L}(L^2(\R_+))} \leq C \max_{n \leq N} \|\partial_x^{n} \Phi'\|_\infty h^{-N/3}.$$
\end{proposition}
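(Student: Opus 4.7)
My plan is to proceed by strong induction on $N$, following the recursive structure of Definition \ref{def:Natom}. The base case $N=0$ is immediate: by definition (i), a $0$-atom is a family $f_h$ in $S_0(L^2(\R_+),L^2(\R_+))$ that is independent of $\Phi$, so $\|f_h\|_{\mathscr{L}(L^2(\R_+))}$ is uniformly controlled by the $\beta=0$ seminorm of the symbol class. For the inductive step, case (iii), in which $a_N = a_i \cdot a_j$ with $i,j < N$ and $i+j \leq N$, follows directly from submultiplicativity of the operator norm and the inductive hypothesis applied to $a_i$ and $a_j$; the resulting powers of $h$ combine to $h^{-(i+j)/3} \leq h^{-N/3}$, and any extra factors of $\max_{n\leq N}\|\partial_x^n\Phi'\|_\infty$ that accumulate can be handled using the fact that $\|\Phi'\|_\infty \leq (\tfrac{1-\mu}{2}\|V\|_\infty)^{1/2}$ is uniformly bounded for every $\mu$-subsolution.

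The heart of the argument is case (ii), where $a_N = f_1 \cdot (\partial^\beta p) \cdot f_2 \cdot r_\lambda \cdot f_3$ with $|\beta|=N \geq 1$ and $f_1,f_2,f_3 \in S_0$ independent of $\Phi$. Since the $f_i$ contribute uniformly bounded factors, it suffices to estimate $\|(\partial^\beta p) \cdot r_\lambda\|_{\mathscr{L}(L^2(\R_+))}$ pointwise in $(x,\xi)$. I would first expand $\partial^\beta p$ explicitly: recalling $p(x,\xi) = h^{2/3} \mathscr{A}_\alpha(x) + (\xi + i\Phi'(x))^2 + iV(x)$ is at most quadratic in $\xi$, the derivative $\partial^\beta p$ decomposes as a linear combination of (a) multiplication by $y$, with prefactor $h^{2/3}\alpha^{(k)}(x)$; (b) polynomials of degree $\leq 2 - \beta_2$ in $(\xi + i\Phi')$ multiplied by products of derivatives $\Phi^{(j)}$; (c) scalar terms $V^{(k)}(x)$.

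Each piece is then paired with $r_\lambda$ using the sharpened pointwise resolvent bound
\[
\|r_\lambda(x,\xi)\|_{\mathscr{L}(L^2(\R_+))} \leq \frac{C}{h^{2/3} + (\xi + \Phi'(x))^2 + V(x)},
\]
which follows from Proposition \ref{prop:def_rlambda} combined with the computation in Lemma \ref{lem:yotei}, together with the estimate $\|y \cdot r_\lambda\|_{\mathscr{L}(L^2(\R_+))} \leq C h^{-2/3}$. This yields $\|h^{2/3}\alpha^{(k)} y \cdot r_\lambda\| = O(1)$; the gain $\||\xi|^k r_\lambda\| \leq C h^{(k-2)/3}$ for $k = 0,1,2$ by optimizing the $(\xi + \Phi')^2$ denominator against $h^{2/3}$; and $\|V^{(k)} r_\lambda\| \leq C h^{-2/3}$ for $k \geq 2$, which is compatible with $h^{-N/3}$ as soon as $N \geq 2$. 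Term by term, these estimates produce the required $h^{-N/3}$ scaling, contributing at most a single factor of $\max_{n\leq N}\|\partial_x^n\Phi'\|_\infty$ after absorbing bounded constants ($\|\alpha^{(k)}\|_\infty$, $\|V^{(k)}\|_\infty$ for $k\geq 2$, and $\|\Phi'\|_\infty$) into $C$.

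The main obstacle is the treatment of the term $V'(x)\cdot r_\lambda$ that appears in $\partial_x p$ when $N = 1$: the naive bound $\|V'\|_\infty \cdot \|r_\lambda\| = O(h^{-2/3})$ is too crude to reach the target $h^{-1/3}$. The resolution is to exploit the classical inequality $|V'(x)|^2 \leq 2\|V''\|_\infty V(x)$, valid for every smooth non-negative function $V$, which combined with the refined resolvent bound above gives
\[
|V'(x)|\cdot\|r_\lambda(x,\xi)\| \leq \frac{C\sqrt{V(x)}}{h^{2/3} + V(x)} \leq C h^{-1/3},
\]
the last inequality following by maximizing $\sqrt{t}/(h^{2/3}+t)$ at $t = h^{2/3}$. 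This is the one place where both the vanishing of $V$ at the well and its quadratic growth there are needed simultaneously, and without it the induction fails at its most delicate scaling.
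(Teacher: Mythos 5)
Your proof is correct and follows the same overall strategy as the paper: reduce to the single estimate $\|(\partial^\beta p)\cdot f\cdot r_\lambda\|_{\mathscr{L}(L^2(\R_+))}\leq C\max_{n\leq|\beta|}\|\partial_x^n\Phi'\|_\infty\, h^{-|\beta|/3}$, run an induction over the recursive structure of $N$-atoms, and handle the heart of the matter by expanding $\partial^\beta p$ term by term and pairing each piece with the appropriate resolvent gain. Where you diverge is in how the two delicate $|\beta|=1$ gains are obtained. The paper proves $\|(\xi+i\Phi')r_\lambda\|\lesssim h^{-1/3}$ and $\|V'r_\lambda\|\lesssim h^{-1/3}$ separately (Lemma \ref{lem:h13}) by case splits: $\xi^2+V\lessgtr Ch^{2/3}$ for the first, and $|x|\lessgtr Ch^{1/3}$ for the second, the latter using $V'/V\sim 2/x$ near the well together with the lower bound $V\geq ch^{2/3}$ away from it (which invokes Assumption \ref{ass:V}(iii)). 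You instead package the resolvent information into the single pointwise bound $\|r_\lambda\|\lesssim (h^{2/3}+(\xi+\Phi')^2+V)^{-1}$ — equivalent to the paper's pair of estimates $\|r_\lambda\|\lesssim h^{-2/3}$ and \eqref{eq:arnaque}, since $(\xi+\Phi')^2+V$ and $\xi^2+V$ are comparable for a $\mu$-subsolution — and dispose of the $V'$ term via the Glaeser--Landau inequality $|V'|^2\leq 2\|V''\|_\infty V$ followed by maximizing $\sqrt{t}/(h^{2/3}+t)$. This is cleaner: it avoids the case split entirely and does not use the behavior of $V$ at infinity for this step.

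Two small caveats, both of which the paper shares. First, in the composite case $a_N=a_i\cdot a_j$ the induction produces a \emph{product} $\max_{n\leq i}\|\partial_x^n\Phi'\|_\infty\cdot\max_{n\leq j}\|\partial_x^n\Phi'\|_\infty$, and your proposed absorption using only the uniform bound on $\|\Phi'\|_\infty$ does not control the higher derivatives; the honest conclusion is a polynomial (not linear) dependence on the seminorms of $\Phi'$ (the same issue already occurs inside $\partial^\beta p$, which contains e.g. $\Phi''^2$). This is harmless for every application in the paper, where $\Phi=\Phi_A$ has uniformly bounded derivatives, but it is worth stating the estimate with that polynomial dependence. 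Second, in case (ii) the factor $f_2$ sits between $\partial^\beta p$ and $r_\lambda$, so for the one non-scalar piece $h^{2/3}\alpha^{(\beta_1)}(x)\,iy$ one needs $\|y\,f_2\,r_\lambda\|$ rather than $\|f_2\|\,\|y\,r_\lambda\|$; the paper elides this in exactly the same way, so I do not count it against you, but a careful write-up should track which $f_2$'s actually occur.
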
 
The proof relies on the following lemma
\begin{lemma}
\label{lem:h13}
For all $R > 0$ and $\mu > 0$, there exists $h_0 > 0$ and $C > 0$ such that the estimates
\begin{equation}
\label{eq:xi+iphi'}
\|(\xi + i\Phi')r_\lambda(x,\xi)\|_{\mathscr{L}(L^2)} \leq C h^{-1/3}\,
\end{equation}
\begin{equation}
\label{eq:V'}
\|V'(x) r_\lambda(x,\xi)\|_{\mathscr{L}(L^2)} \leq Ch^{-1/3}
\end{equation}
holds for all $h \in (0,h_0)$, $(x,\xi) \in \R^2$, $\lambda \in D(h^{2/3}\lambda_{1,\alpha}(0),Rh)$ and any $\mu$-subsolution $\Phi$. 
\end{lemma}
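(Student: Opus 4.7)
Both estimates will be established pointwise in $(x,\xi) \in \R^2$ by writing $\|(\xi + i\Phi'(x))r_\lambda(x,\xi)\|_{\mathscr{L}(L^2(\R_+))} = |\xi + i\Phi'(x)|\cdot\|r_\lambda(x,\xi)\|_{\mathscr{L}(L^2(\R_+))}$, and analogously for $V'$, then combining two ingredients: a decay estimate on $\|r_\lambda(x,\xi)\|$ in terms of $|z_\lambda(x,\xi)|$, and pointwise bounds on $|\xi + i\Phi'(x)|$ and $|V'(x)|$ by $Ch^{1/3}\sqrt{1 + |z_\lambda(x,\xi)|}$.

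For the first ingredient, I would use the formula $r_\lambda = h^{-2/3}\widetilde{F}_x(z_\lambda)(\Id - \pi_{1,\alpha}(x))$ from the proof of Proposition \ref{prop:def_rlambda} together with the $L^2$-resolvent estimate \eqref{eq:res_Ax} and the containment $z_\lambda(x,\xi) \in G_\eta$ provided by Lemma \ref{lem:yotei}. These combine to give
$$\|r_\lambda(x,\xi)\|_{\mathscr{L}(L^2(\R_+))} \leq \frac{Ch^{-2/3}}{1 + |z_\lambda(x,\xi)|}.$$

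For the second ingredient, the key computation is the algebraic identity
$$\cos(\tfrac{\pi}{4})\bigl[(\xi+\Phi'(x))^2 + V(x) - 2\Phi'(x)^2\bigr] = h^{2/3}\Re\bigl(e^{-i\pi/4}(-z_\lambda(x,\xi))\bigr) + \Re\bigl(e^{-i\pi/4}\lambda\bigr),$$
which is obtained by expanding $(\xi+i\Phi'(x))^2 + iV(x) = (\xi+\Phi'(x))^2 - 2\Phi'(x)^2 + i(V(x) + 2\xi\Phi'(x))$ in the definition \eqref{eq:z(lambda)} of $z_\lambda$ and taking the real part of $e^{-i\pi/4}(\cdot)$. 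Since $\Phi$ is a $\mu$-subsolution, $V - 2\Phi'^2 \geq \mu V$; since $\lambda \in D(\lambda_{1,\alpha}(0)h^{2/3},Rh)$, one has $|\Re(e^{-i\pi/4}\lambda)| \leq Ch^{2/3}$; and trivially $\Re(e^{-i\pi/4}(-z_\lambda)) \leq |z_\lambda|$. Together these yield
$$(\xi + \Phi'(x))^2 + \mu V(x) \leq Ch^{2/3}(1 + |z_\lambda(x,\xi)|).$$

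The first estimate then follows from the elementary inequality $\xi^2 + \Phi'(x)^2 \leq C_\mu((\xi + \Phi'(x))^2 + V(x))$, which is a consequence of $\Phi'(x)^2 \leq \frac{(1-\mu)^2}{2}V(x)$ (itself built into the subsolution property): this gives $|\xi + i\Phi'(x)|^2 \leq Ch^{2/3}(1+|z_\lambda|)$, and multiplying by the bound on $\|r_\lambda\|$ closes the argument. For the second estimate, I would invoke the classical Glaeser-type inequality $|V'(x)|^2 \leq 2\|V''\|_\infty V(x)$, valid for any non-negative $C^2$ function with bounded second derivative, and again use the bound on $V$ just displayed to conclude $|V'(x)|^2 \leq Ch^{2/3}(1+|z_\lambda|)$. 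The only nontrivial step is the algebraic identity above; the remainder is direct bookkeeping.
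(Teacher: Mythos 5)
Your proof is correct, but it follows a genuinely different route from the paper's. The paper splits into cases: for \eqref{eq:xi+iphi'} it treats $\xi^2+V\leq Ch^{2/3}$ directly via $\|r_\lambda\|\leq Ch^{-2/3}$ and, in the complementary regime, re-derives the coercivity bound \eqref{eq:arnaque}, namely $(\xi^2+V)\|r_\lambda u\|\leq C\|u\|$, from two numerical-range inequalities for $p-\lambda$; for \eqref{eq:V'} it splits on $|x|\lessgtr Ch^{1/3}$ and uses the quadratic behaviour of $V$ at the origin together with $\liminf_{|x|\to\infty}V>0$. You instead work uniformly in $(x,\xi)$: you extract the decay $\|r_\lambda\|\leq Ch^{-2/3}(1+|z_\lambda|)^{-1}$ directly from \eqref{eq:estim_prolongee} in the proof of Proposition \ref{prop:def_rlambda} (itself a consequence of \eqref{eq:res_Ax} and Lemma \ref{lem:yotei}), and pair it with the identity $\cos(\tfrac\pi4)[(\xi+\Phi')^2+V-2\Phi'^2]=h^{2/3}\Re(e^{-i\pi/4}(-z_\lambda))+\Re(e^{-i\pi/4}\lambda)$ — the same computation as in Lemma \ref{lem:yotei} — to get $(\xi+\Phi')^2+\mu V\leq Ch^{2/3}(1+|z_\lambda|)$; your two ingredients combined are essentially \eqref{eq:arnaque}, reached through the resolvent estimate rather than through quadratic forms. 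The most notable divergence is your use of the Glaeser inequality $|V'|^2\leq 2\|V''\|_\infty V$ for \eqref{eq:V'}: this needs only $V\geq 0$ and $\|V''\|_\infty<\infty$, so it is cleaner and more robust than the paper's two-case argument, which leans on the non-degeneracy of the minimum and the behaviour of $V$ at infinity. Two cosmetic points: the subsolution property gives $\Phi'^2\leq\frac{1-\mu}{2}V$, not $\frac{(1-\mu)^2}{2}V$ (harmless, since any bound $\Phi'^2\leq CV$ suffices for $\xi^2+\Phi'^2\leq C_\mu((\xi+\Phi')^2+V)$); and you should pass from $(\xi+\Phi')^2+\mu V$ to $(\xi+\Phi')^2+V$ by dividing by $\mu$ before invoking that elementary inequality, which you implicitly do.
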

\begin{proof}[Proof of Proposition \ref{prop:estim_atom}]
We claim that for $\beta \in \mathbb{N}^2$ with $\beta \neq 0$ and $f \in S_0(L^2(\R_+),L^2(\R_+))$, 
\begin{equation}
\label{eq:estim_betap}
\|(\partial^\beta p) \cdot f \cdot r_\lambda\|_{\mathscr{L}(L^2(\R_+))}\leq C \max_{n \leq |\beta|}\|\partial^{n}_x\Phi'\|h^{-|\beta|/3}.
\end{equation}
If this is true, then the result of the lemma is easily obtained by induction on $N$. To show \eqref{eq:estim_betap}, we write $\partial^\beta = \partial^{\beta_1}_x \partial^{\beta_2}_\xi$ and consider the following cases separately
\begin{enumerate}[(i)]
\item If $\beta_2 \geq 2$, then $|\beta|\geq 2$, $\|\partial^\beta p\|_{\mathscr{L}(L^2(\R_+))} \leq C \max_{n \leq |\beta|}\|\partial_x^n \Phi'\|_\infty$, and thus 
\begin{align*}
\|(\partial^\beta p) \cdot f \cdot r_\lambda \| &\leq C  \max_{n \leq |\beta|}\|\partial_x^n \Phi'\|_\infty \|r_\lambda\|_{\mathscr{L}(L^2(\R_+))}\\
& \leq C \max_{n \leq |\beta|}\|\partial_x^n \Phi'\|_\infty h^{-\frac23}\leq C  \max_{n \leq |\beta|}\|\partial_x^n \Phi'\|_\infty h^{-|\beta|/3}.
\end{align*}
by the estimate \eqref{eq:eric_prisonnier} of Proposition \ref{prop:def_rlambda}.
\item If $\beta_2 = 1$ and $\beta_1 = 0$, then $\partial^\beta p(x,\xi) = 2(\xi + i\Phi')$, which is a scalar, and thus
$$\|\partial^\beta p \cdot f \cdot r_\lambda\| = \|f \cdot [(\xi + i\Phi') r_\lambda]\| \leq Ch^{-\frac13} = Ch^{-\frac{|\beta|}{3}}$$
by Lemma \ref{lem:h13}. 
\item If $\beta_2 = 1$ and $\beta_1 \geq 1$, then $|\beta|\geq 2$,  $\partial^\beta p  = \partial^{\beta_1}_x \Phi'$, and thus we conclude as in case (i). 
\item If $\beta_2 = 0$ and $\beta_1= 1$, then $\partial^\beta p = h^{\frac23}\partial_x \mathscr{A}_\alpha + 2(\xi + i\Phi') \Phi'' + i V'$ and thus 
$$\|\partial^\beta p \cdot f \cdot r_\lambda\| \leq C\|f\|_{\mathscr{L}(L^2(\R_+)} \left( h^{2/3} \|r_\lambda\|_{\mathscr{L}(L^2(\R_+))} + C\|\Phi''\|_\infty \|(\xi + i \phi') r_\lambda\| + \|V' r_\lambda\|\right)$$
and we conclude using Lemma \ref{lem:h13} and the estimate \eqref{eq:eric_prisonnier}. 
\item In the remaining case where $\beta_2= 0$ and $\beta _1 \geq 2$, We can write 
$$\partial^\beta p = (\xi + i\Phi')(\partial_x^{\beta_1}\Phi') + g(x)$$
where $\|g\|_{\mathscr{L}(L^2(\R_+))} \leq C \max_{n \leq |\beta|} \|\partial_x^n \Phi'\|_{\infty}$, and thus 
\begin{align*}
\|\partial^\beta p \cdot f \cdot r_\lambda\| 
&\leq C \max_{n \leq |\beta|} \|\partial_x^n \Phi'\|_\infty \left( \|r_\lambda\|_{\mathscr{L}(L^2(\R_+))} + \|(\xi + i\Phi')r_\lambda\|_{\mathscr{L}(L^2(\R_+))}\right) \\
&\leq C \max_{n \leq |\beta|} \|\partial_x^n \Phi'\|_\infty h^{-2/3} \leq C \max_{n \leq |\beta|} \|\partial_x^n \Phi'\|_\infty h^{-|\beta|/3},
\end{align*}
again by Lemma \ref{lem:h13} and the estimate \eqref{eq:eric_prisonnier}. \qedhere
\end{enumerate}
\end{proof} 

\begin{proof}[Proof of Lemma \ref{lem:h13}]
~
We start by showing the estimate \eqref{eq:xi+iphi'}. 
\begin{enumerate}[1.]
\item When $\xi^2 + V \leq C h^{2/3}$, the estimate follows by writing
$$\|(\xi + i\Phi')r_\lambda(x,\xi)\|_{L^2 \to L^2}= \sqrt{\xi^2 + \Phi'^2} \|r_\lambda\|_{L^2 \to L^2},$$
using that $\|r_\lambda\| \leq Ch^{-2/3}$ by Proposition \ref{prop:def_rlambda} and the fact that $\Phi'^2 \leq \frac{(1 - \mu)}{2}V$. 
\item For $\xi^2 + V \geq Ch^{2/3}$, by adding the estimates 
\[\Re\left[(p(x,\xi)-\lambda)e^{-i\frac{\pi}{4}}\right]\geq cV(x)-C_0h^{\frac23}\,,\quad\Re (p(x,\xi)-\lambda)\geq \xi^2-\Phi'^2-C_0h^{\frac23}.\]
one obtains 
\begin{equation}
\label{eq:arnaque}
(\xi^2 + V)\|r_\lambda u\| \leq C \|u\|.
\end{equation}
for $h$ small enough, using that $\Phi$ is a $\mu$-subsolution. ¨The result follows by writing 
$$\|(\xi + i \phi')r_\lambda\| \leq C \frac{(\xi^2 + V)\|r_\lambda\|}{(\xi^2 + V)^{1/2}}.$$
\end{enumerate}
We now turn to the estimate \eqref{eq:V'}.
\begin{enumerate}[1.]
\item If $|x| \leq Ch^{1/3}$, then $|V'| \leq Ch^{1/3}$ since $V'(x) \sim V''(0)x$ as $x \to 0$. Thus the result follows immediately from the estimate $\|r_\lambda\|_{\mathscr{L}(L^2(\R_+))} \leq Ch^{-2/3}$ of Proposition \ref{prop:def_rlambda} in this case.
\item If $|x| \geq Ch^{1/3}$ then, since  (by Assumption \ref{ass:V})  (i) $V(x)$ is quadratic at $x = 0$, (ii) $V(x) > 0$ for $x \neq 0$ and (iii) $V$ is bounded below at infinity, there exists $c$ such that $V(x) \geq ch^{2/3}$ for all $x \in \R$. Moreover, one has $|V'(x)|/V(x) \leq Ch^{-1/3}$ for all $x \in \R$ (since $V'/V \sim \frac12x^{-1}$ as $x \to 0$). Thus, using again the estimate \eqref{eq:arnaque} above,
$$\|V'r_\lambda\| \leq \frac{|V'(x)|}{V(x)} V(x)\|r_\lambda\| \leq h^{-1/3}(\xi^2 + V) \|r_\lambda\| \leq Ch^{-1/3}\,,$$
concluding the proof. \qedhere 
\end{enumerate}
\end{proof}

\begin{proof}[Proof of Proposition \ref{prop:rlambda_class}]
By Corollary \ref{cor:rlambda_Nterm}, $\partial^\beta r_\lambda$ can be written in the form 
$$\partial^\beta r_\lambda = \sum_{j = 1}^J f_{j} \cdot r_\lambda \cdot a_{N,j}$$
where $f_j \in S_0(L^2(\R_+),L^2(\R_+))$ and $a_{N,j}$ are $N$-atoms with $N = |\beta|$ in the sense of Definition \ref{def:Natom}. Therefore by Proposition \ref{prop:estim_atom} and the first estimate in \eqref{eq:eric_prisonnier} in Proposition \ref{prop:def_rlambda}, 
\begin{align*}
\|r_\lambda\|_{\mathscr{L}(L^2(\R_+),L^2_{\langle \xi \rangle^2}(\R_+)} 
&\leq \sum_{j = 1}^J \|f_j\|_{\mathscr{L}(L^2_{\langle \xi\rangle^2}(\R_+))} \|r_\lambda\|_{\mathscr{L}(L^2(\R_+),L^2_{\langle \xi \rangle^2}(\R_+))} \|a_{N,j}\|_{\mathscr{L}(L^2(\R_+))}\\
&\leq C \max_{n \leq N} \|\partial_x^n \Phi'\|_\infty h^{-2/3} h^{-N/3},
\end{align*}
since $\|f_j\|_{\mathscr{L}(L^2_{\langle \xi\rangle^2}(\R_+))} = \|f_j\|_{\mathscr{L}(L^2(\R_+))}$. This concludes the proof.
\end{proof}

\subsection{Proof of Proposition \ref{prop:ellip2}}
\label{sec:proof_2D}

By the quantization formulas of Proposition \ref{prop:symbols_Lh_Pi} the composition formula, the continuity of $\#$ and the class estimates of Proposition \ref{prop:rlambda_class}, we can write 
\begin{equation}
\label{eq:composition}
(\Id - \Pi_{1,\alpha}) = h^{-2/3}R_\lambda (\mathscr{L}^\Phi_h - \lambda) - ih^{1/3}K_1 + h^{2/3}K_2
\end{equation}
where
$$R_\lambda := \textup{Op}_h^w(h^{2/3}r_\lambda)\,, \quad K_1 = \textup{Op}_h^w(\{(h^{2/3}r_\lambda),p_h\})\,, \quad K_2 = \textup{Op}_h^w(\mu_h)$$
with 
\begin{equation}
\label{eq:class_mu}
(\mu_h)_{(0,h_0)} \in S_{1/3}(\mathrm{D}_{\langle \xi\rangle^2},L_{\langle \xi \rangle^2}^2(\R_+)) = S_{1/3}(\mathrm{D},L^2(\R_+))
\end{equation}
and where for any $\beta \in\mathbb{N}^2$, there exists $C_\beta$ and $M_\beta$ such that
$$|\mu_h|_{\frac13,\beta} \leq C_\beta \max_{n \leq M_\beta} \|\partial_x^n \Phi'\|_\infty.$$
The last equality in \eqref{eq:class_mu} eliminates the dependence in $\xi$ from the spaces, and thus allows us to apply the Calder\'{o}n-Vaillancourt theorem (Theorem \ref{thm:CadlderonVaillancourt}) to $\textup{Op}_h^w(\mu_h)$. Namely, there exists $N > 0$ such that 
$$\|K_2 u\|_{\mathcal{H}} \leq C\max_{n \leq N} \|\partial_x^N \Phi'\| \|u\|_{L^2(\R;\mathrm{D})},$$
where $\|u\|^2_{L^2(\R,\mathrm{D})} := \int_{\R} \|u(x,\cdot)\|_{\mathrm{D}}^2\,dx$. 
Moreover, since 
$$S_{1/3}(L^2(\R_+),L^2_{\langle \xi \rangle^2}(\R_+)) \subset S_{1/3}(L^2(\R_+),L^2(\R_+))$$ 
holds with continuous inclusion, the Calder\'{o}n-Vaillancourt theorem gives for some $N > 0$, 
$$\|R_\lambda\|_{\mathscr{L}(\cH)} \leq C \max_{n \leq N} \|\partial_x^n \Phi'\|_\infty.$$
Using these estimates in \eqref{eq:composition} and the elliptic regularity result of Proposition \ref{prop:ellip_LhPhi} leads to
\begin{align*}
\|(\Id - \Pi_{1,\alpha})\psi\|_{\cH} &\leq C \max_{n \leq N} \|\partial_x^n \Phi'\|_\infty  \left(h^{-2/3} \|(\mathscr{L}_h^\Phi - \lambda)\psi\| + h^{2/3}\|\psi\|_{L^2(\R;\mathrm{D})} + C h^{1/3} \|K_1 u\|_{\cH}\right)\\
& \leq C \max_{n \leq N} \|\partial_x^n \Phi'\|_\infty \left(h^{-2/3} \|(\mathscr{L}_h^\Phi - \lambda)\psi\| + \|\mathscr{L}_h^\Phi\psi\|_{\cH} + h^{2/3}\|\psi\|_{\cH} + Ch^{1/3}\|K_1 u\| \right)\\
& \leq C \max_{n \leq N} \|\partial_x^n \Phi'\|_\infty \left(h^{-2/3} \|(\mathscr{L}_h^\Phi - \lambda)\psi\| + h^{2/3}\|\psi\|_{\cH} + Ch^{1/3}\|K_1 u\| \right)
\end{align*}
since $\lambda = O(h^{2/3})$. Therefore, to conclude, it remains to show that there exists $N > 0$ such that 
\begin{equation}
\label{eq:estim_crochet_poisson}
\|K_1 u\|_{\cH} \leq C \max_{n \leq N} \|\partial_x^n \Phi'\|_{\infty} \|u\|_{\cH}.
\end{equation}
To this end, we compute its symbol using Proposition \ref{prop:dxdxir}:
\begin{align*}
\{(h^{2/3}r_\lambda),p_h\} 
&= h^{2/3} \left(\partial_x r_\lambda \partial_\xi p_h - \partial_\xi r_\lambda \partial_x p_h \right) \\
&= -h^{2/3} \big(r_\lambda \partial_x p_h r_\lambda \partial_\xi p_h - r_\lambda \partial_\xi p_h r_\lambda \partial_x p_h + (r_\lambda \partial_x \pi_{1,\alpha} + \partial_x \pi_{1,\alpha})\partial_{\xi}p_h \big)\\
& =  \left(q_h [\partial_x p_h,r_\lambda] + q_h \partial_x \pi_{1,\alpha} + \partial_x \pi_{1,\alpha} q_h\right)
\end{align*}
where $q_h = (\xi + i\Phi') (h^{2/3}r_\lambda) \in S_{1/3}(L^2(\R_+),L^2(\R_+))$ (since $(h^{2/3}r_\lambda)$ ``gains'' $\langle \xi \rangle^{2}$). Moreover, 
$$[\partial_x p_h,r_\lambda] = [h^{2/3} \partial_x \mathscr{A}_\alpha(x),r_\lambda] = [\partial_x \mathscr{A}_\alpha(x),(h^{2/3}r_\lambda)] \in S_{1/3}(L^2(\R_+),L^2(\R_+)).$$
Therefore, $\{(h^{2/3}r_\lambda),p_h\} \in S_{1/3}(L^2(\R_+),L^2(\R_+))$ and it is also straightforward to check that for any $\beta \in \mathbb{N}^2$, there exists $C_\beta$ and $M_\beta$ such that 
\begin{equation}
\label{eq:semi_normes_crochet}
|\{(h^{2/3}r_\lambda),p_h\}|_{\frac13,\beta} \leq C_\beta \max_{n \leq M_\beta}\|\partial_x^n \Phi'\|_\infty
\end{equation}
using Proposition \ref{prop:rlambda_class}. The estimate \eqref{eq:estim_crochet_poisson} follows from \eqref{eq:semi_normes_crochet} and the Calder\'{o}n-Vaillancourt theorem, concluding the proof. \qed`

\section{Sharpness of Theorem \ref{thm:main}}

\label{sec:sharp}

In this section, we show that Theorem \ref{thm:main} is sharp. More precisely we show that Theorem \ref{thm:main} does {\em not} hold for $\mu = 0$.  In particular, under the same assumptions, one cannot obtain a $O(h^{s})$ localization scale for $s < \frac12$.

To show this, we produce a counter-example. Let $\alpha \equiv 1$ and let $V \in C^\infty(\R)$ satisfy Assumption \ref{ass:V} and furthermore $V_{(-\frac12,\frac12)^c} \equiv 1$.  Let 
$\psi$ the function obtained by setting $\mu = 0$ in \eqref{eq:def_phieta}, i.e.,
$$\phi = \frac{1}{\sqrt{2}} \left|\int_{0}^x \sqrt{V(s)}\,ds\right|.$$
We prove that there exists $R > 0$ and $h_0 > 0$ small enough such that for every $h \in (0,h_0)$, there exists $\lambda \in D(|z_1|e^{i\pi/3} h^{2/3},Rh)$ and $\psi \in \cH^2$ satisfying $(\mathscr{L} - \lambda) \psi = 0$ but $e^{\phi/h}\psi \notin \cH$.  

We construct $\psi$ as the following tensorized function:
$$\psi(x,y) = f(x)u_{\rm Ai}(y),$$
where $u_{\rm Ai}(y)$ is a Dirichlet eigenfunction of $D_y^2 + i y$  associated to $|z_1|e^{i \pi/3}$ (one can take $u_{\rm Ai}(y) = \textup{Ai}(ye^{i\frac\pi6}+z_1)$) and $f$ is an eigenfunction of $(hD_x)^2 + iV$ associated to a complex number $\mu \in D(0,Rh)$; such a pair $(f,\mu)$ exists for $R$ large enough and $h$ small enough, with $\mu$ furthermore satisfying $\Re(\mu) > 0$, by \cite[Proposition 3.6]{AFHR2025}. It is immediate  by construction that $\mathscr{L}_h \psi = (|z_1|e^{i \pi/3} h^{2/3} + \mu)\psi$. To show that $e^{\phi(x)/h}\psi \notin \cH$, it suffices to show that the ``amplitude'' $a(x) := e^{\frac{\phi(x)(1 +i)}h}f(x)$ does not belong to $L^2(\R)$ (by the Fubini theorem, and since $|a(x)|^2 = |e^\frac{\phi(x)}{h}f(x)|^2$). 

Indeed, suppose by contradiction that $a\in L^2(\R)$. The point is that $a$ satisfies the differential equation
\begin{equation}
	\label{eq:Cauchy}
	\left([hD_x + i(1+i)\phi']^2 + iV - \mu\right) a (x)= 0
\end{equation}
which, noticing that for $x \geq \frac12$, $V = 1$ and $\phi'(x) = \frac{1}{\sqrt{2}} $, gives
$$-h^2 a''(x) + 2e^{i\frac{\pi}{4}}ha'(x) -\lambda a(x) = 0 \quad \textup{for } x \geq \frac12.$$
Therefore, there exist $\alpha,\beta \in \mathbb{C}$ such that 
$$a(x) = \alpha e^{\theta_1 x/h} + \beta e^{\theta_2 x/h} \quad \textup{for } x \geq \frac12,$$
where $\theta_1 = e^{i \frac\pi4} + \sqrt{\mu}$ and $\theta_2 = e^{i \frac\pi4} - \sqrt{\mu}$; from this, we deduce that the condition $a \in L^2(\R)$ is only possible if $\alpha = \beta = 0$. 

In particular, $a$ satisfies the differential equation \eqref{eq:Cauchy} together with the conditions $a(1) = a'(1) = 0$; therefore, by uniqueness of the solution to this initial value problem, it follows $a$ vanishes on $\R$. This implies $f$ also vanishes on $\R$, a contradiction.

\section*{Acknowledgments}
This work was conducted within the France 2030 framework programme, Centre Henri Lebesgue ANR-11-LABX-0020-01. N.F. thanks the Région Pays de la Loire for the Connect Talent Project HiFrAn 2022 07750 led by Clotilde Fermanian Kammerer.

\appendix
\section{Proof of Lemma \ref{lem:density}}
\label{app:density}

\begin{enumerate}[1.]
\item
Let $u \in \cHd$ and let $\varepsilon > 0$. There exists a bounded subet $\Omega \subset \R \times \R_+$ such that 
$$\int_{\Omega^c} |\Delta u|^2 + |yu|^2 + |u|^2 \,dxdy \leq \varepsilon.$$
Let $\chi \in C^\infty(\overline{\R \times\R_+})$  satisfying $\chi \equiv 1$ on $\Omega$ and such that $\widetilde{\Omega} := \supp \chi$ is bounded. Then $\|u - \chi u\|_{\cHd} \leq C_\chi\varepsilon$ where $C_\chi > 0$ depends only on $\chi$.
\item Since $\widetilde{\Omega}$ is bounded, $K := \sup_{\widetilde{\Omega}} y$ is finite, and if we find $\varphi \in C^\infty(\overline{\R \times \R_+})$ satisfying $\varphi|_{\R \times \{0\}} = 0$ and $\|\varphi - u\|_{H^2(\R \times \R_+)} \leq \frac{C\varepsilon}{1+K}$ with $C$ independent of $u$, then we are done since 
$$\|u-\chi\varphi\|_{\cHd} \leq \|u - \chi u\|_{\cHd} + \|\chi (u- \varphi)\|_{\cHd} \leq C_\chi\varepsilon + (1 + K)\|u - \varphi\|_{H^2(\R^2)} \leq (C_\chi+C)\varepsilon,$$
and the condition $\varphi|_{\R \times \{0\}} = 0$ ensures that $\chi \varphi \in \cH^2$. 
\item 
By density of $C^\infty(\overline{\R} \times \R_+) \cap H^2(\R \times \R_+)$ in $H^2(\R \times \R_+)$, we can find $\varphi_1 \in C^\infty(\overline{\R} \times \R_+) \cap H^2(\R \times \R_+)$ such that
\begin{equation}\label{step3_propdensity}
\|u - \varphi_1\|_{H^2(\R\times \R_+)} \leq \frac{\varepsilon}{1 + K}.
\end{equation}
The remaining issue is that $\varphi_1$ may not vanish on $\R \times \{0\}$. However, by continuity of the trace operator $\gamma : H^2(\R \times \R_+) \to H^{3/2}(\R)$ (see \cite[Lemma 3.35]{mclean2000strongly})
\begin{equation}
\label{eq:petite_trace}
\|\gamma \varphi_1\|_{H^{3/2}(\R)} =  \|\gamma(\varphi_1 - u)\|_{H^2(\R \times \R_+)} \leq C_{\rm tr} \|\varphi_1 - u\|_{H^2(\R \times \R_+)} \leq \frac{\varepsilon C_{\rm tr}}{1 + K}
\end{equation}
where $C_{\rm tr} = \|\gamma\|_{\mathscr{L}(H^2(\R \times \R_+),H^{3/2}(\R))}$. We now exploit the smallness of the trace $\gamma \varphi_1$ to construct a small ``correction'' $\varphi_2$ such that $\varphi_1+ \varphi_2$ vanishes on $\R \times \{0\}$. 
\item Let $\varphi_2 := -\varphi_1+ w$ where $w$ is the unique solution in $H^1_0(\R\times \R_+)$ of the variational problem
$$\int_{\R \times \R_+} \nabla w \cdot \nabla v + w v = \int_{\R \times \R_+} \nabla \varphi_1 \cdot \nabla v + \varphi_1 v \quad \textup{for all } v \in H^1_0(\R \times \R_+).$$
By standard elliptic regularity of the Dirichlet Laplacian, $w \in C^\infty(\overline{\R \times \R_+}) \cap H^2(\R \times \R_+)$ and thus $\varphi_2$ also belongs to this space by linearity. Moreover, $-\Delta w + w= -\Delta \varphi_1+ \varphi_1$, thus $\Delta \varphi_2 = \varphi_2$. Hence, by \cite[Theorem 4.18]{mclean2000strongly}, there exists $C_{\rm ell} > 0$ independent of $u$ such that
\begin{equation}
\label{step4_propdensity}
\|\varphi_2\|_{H^2(\R \times \R_+)} \leq C_{\rm ell} \left( \|\varphi_2\|_{L^2(\R \times \R_+)} + \|\gamma \varphi_2\|_{H^{3/2}}\right) =  C_{\rm ell} \left( \|\varphi_2\|_{L^2(\R \times \R_+)} + \|\gamma \varphi_1\|_{H^{3/2}}\right)
\end{equation}
since $\gamma \varphi_2 = -\gamma \varphi_1$. 
\item We put $\varphi := \varphi_1 + \varphi_2$. By what precedes, $\varphi \in C^\infty(\overline{\R \times \R_+}) \cap H^2(\R \times \R_+)$ and $\varphi|_{\R \times \{0\}} = 0$. Moreover, by \eqref{step3_propdensity}, \eqref{eq:petite_trace}, \eqref{step4_propdensity} and the triangle inequality, 
\begin{equation}
\label{step5_propdensity}
\|u - \varphi\|_{H^2(\R \times \R_+)} \leq \|u - \varphi_1\|_{H^2(\R \times \R_+)} + \|\varphi_2\|_{H^2(\R_+)} \leq \frac{(1 + C)\varepsilon}{1 + K} + C_{\rm ell}\|\varphi_2\|_{L^2(\R \times \R_+)},
\end{equation}
where $C$ does not depend on $u$. Finally, using Green's theorem, \eqref{eq:petite_trace} and \eqref{step4_propdensity}
\begin{align*}
\|\varphi_2\|^2_{L^2(\R \times \R_+)}  \leq \|\varphi_2\|^2_{H^1(\R \times \R_+)} 
&= \int_{\R} \frac{\partial \varphi_2}{\partial y}(x,0) \varphi_1(x,0)\,dx \\
&\leq \|\gamma (\partial_y \varphi_2)\|_{L^2(\R)} \|\gamma \varphi_1\|_{L^2(\R)} \leq \frac{C\varepsilon}{1+K}(\|\varphi_2\|_{L^2} + \frac{C\varepsilon}{1+K}),
\end{align*}
where $C$ is independent of $u$. This implies that $\|\varphi_2\|_{L^2(\R \times \R_+)} \leq \frac{C\varepsilon}{1+K}$ where $C$ is independent of $u$. Inserting this estimate in \eqref{step5_propdensity} concludes the proof. \qedhere
\end{enumerate}

\section{Elliptic regularity estimates}

\label{app:er}

\begin{proposition}[Basic elliptic estimate for $\mathscr{L}_h^\Phi$]
\label{prop:basic_ellip}
For all $\mu > 0$, there exists $C > 0$ such that the estimate
$$h^{1/3} \left(\|D_y \psi\| + \|\sqrt{y} \psi\|\right) + \|\sqrt{V} \psi\|_{\cH} + \|(hD_x) \psi\|_{\cH} \leq C\varepsilon^{-1}\|\mathscr{L}^\Phi_h \psi\|_{\cH} + C \varepsilon \|\psi\|$$
holds for all $h > 0$, $\varepsilon \in (0,1)$, $\psi \in \cHd$, and $\Phi$ any $\mu$-subsolution in the sense of Definition~\ref{def:etaSubsolution}. 
\end{proposition}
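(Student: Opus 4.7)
The plan is to extract the four norms on the left-hand side from the real and imaginary parts of the quadratic form $\langle \mathscr{L}_h^\Phi \psi,\psi\rangle$, so the argument is essentially a Gårding-type coercivity estimate adapted to the complex Agmon conjugation.

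First I would expand $\mathscr{L}_h^\Phi$ on $\psi \in \cHd$, using $(hD_x + i\Phi')^2 = (hD_x)^2 + 2i\Phi'(hD_x) + h\Phi'' - \Phi'^2$ and the Dirichlet boundary condition at $y = 0$ to integrate the $y$-Laplacian by parts. Setting $A := \langle \Phi'(hD_x)\psi,\psi\rangle$, a further integration by parts shows $A - \bar{A} = -ih\langle \Phi''\psi,\psi\rangle$, hence $2i A = 2i\Re(A) + h\langle \Phi''\psi,\psi\rangle$. Combining these identities and separating real and imaginary parts gives
\begin{align*}
\Re\langle \mathscr{L}_h^\Phi\psi,\psi\rangle &= h^{2/3}\|D_y\psi\|^2 + \|(hD_x)\psi\|^2 + 2h\langle \Phi''\psi,\psi\rangle - \|\Phi'\psi\|^2,\\
\Im\langle \mathscr{L}_h^\Phi\psi,\psi\rangle &= h^{2/3}\langle \alpha y\,\psi,\psi\rangle + \|\sqrt{V}\psi\|^2 + 2\Re\langle \Phi'(hD_x)\psi,\psi\rangle.
\end{align*}

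Next, I would rotate by $e^{-i\pi/4}$ and add: $\sqrt{2}\,\Re(e^{-i\pi/4}\langle \mathscr{L}_h^\Phi\psi,\psi\rangle)$ equals the sum of the two lines above. The only non-coercive contributions are then $-\|\Phi'\psi\|^2$, the cross-term $2\Re A$, and the harmless $O(h)$ remainder $2h\langle \Phi''\psi,\psi\rangle$. The cross-term is handled by a Peter--Paul inequality: for any $\delta \in (0,1)$,
$$\bigl|2\Re\langle \Phi'(hD_x)\psi,\psi\rangle\bigr| \leq \delta\|(hD_x)\psi\|^2 + \delta^{-1}\|\Phi'\psi\|^2.$$
The $\mu$-subsolution hypothesis $\Phi'^2 \leq \frac{1-\mu}{2}V$ yields $(1+\delta^{-1})\|\Phi'\psi\|^2 \leq \frac{(1+\delta^{-1})(1-\mu)}{2}\|\sqrt{V}\psi\|^2$. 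Choosing $\delta \in \bigl(\tfrac{1-\mu}{1+\mu},1\bigr)$ ensures simultaneously that $1-\delta>0$ and $1-\tfrac{(1+\delta^{-1})(1-\mu)}{2}>0$, which is the crux of the argument (the main obstacle being the proper calibration of $\delta$ to absorb both bad terms while keeping strictly positive coefficients in front of $\|(hD_x)\psi\|^2$ and $\|\sqrt{V}\psi\|^2$). Using also Assumption \ref{ass:alpha} to bound $\alpha \geq \alpha_0 > 0$, I obtain
$$\Re\bigl(e^{-i\pi/4}\langle \mathscr{L}_h^\Phi\psi,\psi\rangle\bigr) \geq c\Bigl( h^{2/3}\|D_y\psi\|^2 + h^{2/3}\|\sqrt{y}\psi\|^2 + \|(hD_x)\psi\|^2 + \|\sqrt{V}\psi\|^2 \Bigr) - C h\|\psi\|^2,$$
where $c > 0$ depends only on $\mu$ and $\alpha_0$ (and $C$ absorbs $\|\Phi''\|_\infty$).

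Finally, I would close the estimate with Cauchy--Schwarz and Young's inequality:
$$\Re\bigl(e^{-i\pi/4}\langle \mathscr{L}_h^\Phi\psi,\psi\rangle\bigr) \leq \|\mathscr{L}_h^\Phi\psi\|\,\|\psi\| \leq \tfrac{1}{2}\varepsilon^{-2}\|\mathscr{L}_h^\Phi\psi\|^2 + \tfrac{1}{2}\varepsilon^2\|\psi\|^2.$$
Combining the two displays, dividing by $c$ and taking square roots of each individual term on the left (using $\sqrt{a+b} \leq \sqrt{a}+\sqrt{b}$) yields the claimed bound, since each of $h^{1/3}\|D_y\psi\|$, $h^{1/3}\|\sqrt{y}\psi\|$, $\|(hD_x)\psi\|$, $\|\sqrt{V}\psi\|$ is controlled by $C\varepsilon^{-1}\|\mathscr{L}_h^\Phi\psi\| + C\varepsilon\|\psi\|$ (the residual $Ch\|\psi\|^2$ being absorbed into $C\varepsilon\|\psi\|$ for $\varepsilon \in (0,1)$ and $h$ bounded).
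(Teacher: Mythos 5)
Your proposal is correct in substance and follows the same basic strategy as the paper (rotate the quadratic form by $e^{-i\pi/4}$ and exploit the $\mu$-subsolution condition $\Phi'^2\leq\frac{1-\mu}{2}V$), but the two arguments diverge at one tactical point. The paper first bounds $\Re\big(e^{-i\pi/4}\langle\mathscr{L}_h^\Phi\psi,\psi\rangle\big)$ from below by $h^{2/3}(\|D_y\psi\|^2+\|\sqrt y\,\psi\|^2)+\frac{\mu}{2}\|\sqrt V\psi\|^2$ by completing the square in $\|hD_x\psi\|^2+2\Re\langle hD_x\psi,\Phi'\psi\rangle$, thereby sacrificing the $\|(hD_x)\psi\|^2$ term entirely, and then recovers $\|(hD_x)\psi\|$ in a second, separate step from $\Re\langle\mathscr{L}_h^\Phi\psi,\psi\rangle\geq\|(hD_x)\psi\|^2-\|\Phi'\psi\|^2$ combined with the already-established control of $\|\sqrt V\psi\|$. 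You instead keep all four terms in a single inequality via a Peter--Paul with $\delta\in\big(\tfrac{1-\mu}{1+\mu},1\big)$; your calibration is correct and this is a perfectly good, arguably tidier, alternative.

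Two small points. First, a sign slip: since $[hD_x,\Phi']=-ih\Phi''$, one has $A-\bar A=+ih\langle\Phi''\psi,\psi\rangle$, hence $\Re(2iA)=-h\langle\Phi''\psi,\psi\rangle$, which cancels \emph{exactly} the explicit $h\langle\Phi''\psi,\psi\rangle$ coming from your operator expansion; the real part is exactly $h^{2/3}\|D_y\psi\|^2+\|(hD_x)\psi\|^2-\|\Phi'\psi\|^2$ with no $O(h)$ remainder. (Equivalently, write $\langle(hD_x+i\Phi')^2\psi,\psi\rangle=\langle(hD_x+i\Phi')\psi,(hD_x-i\Phi')\psi\rangle$ and expand.) This matters because your fallback --- absorbing a residual $Ch\|\psi\|^2$ into $C\varepsilon\|\psi\|$ --- is not valid as stated: the proposition quantifies over all $\varepsilon\in(0,1)$ and all $h>0$ independently, so $\varepsilon$ may be much smaller than $\sqrt h$ and $\sqrt h\,\|\psi\|$ is then dominated neither by $\varepsilon\|\psi\|$ nor by $\varepsilon^{-1}\|\mathscr{L}_h^\Phi\psi\|$. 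With the cancellation there is nothing to absorb and your proof closes cleanly.
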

\begin{proof}
If $\Phi$ is a $\mu$-subsolution, then the proof of \cite[Proposition 2.2]{AFHR2025} shows that
$$\Re \langle e^{-i\frac{\pi}{4}}\mathscr{L}_h^\Phi \psi,\psi\rangle \geq h^{2/3} (\|D_y \psi\|^2 + \|\sqrt{y} \psi\|^2) + \frac{\mu}{2}\|\sqrt{V} \psi\|^2.$$
In particular, by Young's inequality, for any $\varepsilon \in (0,1)$, 
\begin{equation}
\label{eq:controlV1/2}
h^{1/3} (\|D_y \psi\| + \|\sqrt{y} \psi\|) + \|\sqrt{V} \psi\|_{\cH} \leq C\varepsilon^{-1}\|\mathscr{L}_h^\Phi \psi\|_{\cH} + C\varepsilon\|\psi\|_{\cH}.
\end{equation}
In turn, since 
$$\Re \langle \mathscr{L}_h^\Phi \psi,\psi\rangle \geq \|(hD_x) \psi\|^2 - \|\Phi' \psi\|^2$$
and since $\Phi'^2 \leq \frac{1 - \mu}{2}V$ by assumption, 
\begin{equation}
\label{eq:controlxi}
\|(hD_x) \psi\| \leq \frac{1-\mu}{2}\|\sqrt{V}\psi\|^2 + \Re \langle \mathscr{L}_h^\Phi \psi,\psi\rangle \leq C \varepsilon^{-1} \|\mathscr{L}_h^\Phi \psi\|_{\cH} + C\varepsilon\|\psi\|_{\cH}
\end{equation}
by \eqref{eq:controlV1/2} and Young's inequality. The proof is concluded by summing \eqref{eq:controlV1/2} and \eqref{eq:controlxi}. 
\end{proof}

\begin{proposition}[Elliptic regularity for the Airy operator]
\label{prop:ellip_A_yD2y}
There exists $C > 0$ such that for all $u \in \mathrm{D}$, 
$$\|yu\|_{L^2(\R_+)} + \|D_y^2 u\|_{L^2(\R_+)}\leq C\|u\|_{\mathrm{D}}.$$
\end{proposition}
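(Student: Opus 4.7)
The plan is to expand $\|\mathscr{A} u\|^2$ and extract $\|D_y^2 u\|^2 + \|yu\|^2$ modulo a cross term that can be controlled by $\|u\|$ and $\|\mathscr{A}u\|$. Writing $\mathscr{A} = D_y^2 + iy$ and using that $iy$ is a purely imaginary multiplication operator,
\[
\|\mathscr{A} u\|_{\R_+}^2 = \|D_y^2 u\|_{\R_+}^2 + \|yu\|_{\R_+}^2 - 2\,\Im\,\langle D_y^2 u,\, yu\rangle_{\R_+}.
\]
So the whole problem reduces to bounding $|\Im\,\langle D_y^2 u, yu\rangle_{\R_+}|$ by something of the form $C(\|u\|^2 + \|\mathscr{A}u\|^2)$.

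To evaluate the cross term, I would first reduce to $u \in C_c^\infty((0,+\infty))$ by a density argument (the fact that such functions are dense in $\mathrm{D}$ with the graph norm $\|\cdot\|_{\mathrm{D}}$ is standard, and is anyway a byproduct of the type of argument used in Appendix \ref{app:density}). For such $u$, two integrations by parts give, since $u(0) = 0$ and $u$ is compactly supported in $(0,+\infty)$,
\[
\langle D_y^2 u, yu\rangle_{\R_+} = \int_0^\infty u'(y)\,\overline{(yu(y))'}\,dy = \int_0^\infty u'\,\overline{u}\,dy + \int_0^\infty y\,|u'|^2\,dy.
\]
The second term is real, so only the first contributes to the imaginary part, and by Cauchy--Schwarz
\[
|\Im\,\langle D_y^2 u, yu\rangle_{\R_+}| \;\leq\; \|u'\|_{\R_+}\,\|u\|_{\R_+}.
\]

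Next, a single integration by parts gives $\|u'\|_{\R_+}^2 = \langle D_y^2 u, u\rangle_{\R_+} = \Re\,\langle \mathscr{A} u, u\rangle_{\R_+}$, since $\Re\,\langle iyu,u\rangle_{\R_+} = 0$. Hence $\|u'\|_{\R_+}^2 \leq \|\mathscr{A} u\|_{\R_+} \|u\|_{\R_+}$, and Young's inequality yields
\[
\|u'\|_{\R_+}\,\|u\|_{\R_+} \;\leq\; \tfrac{1}{2}\bigl(\|u'\|_{\R_+}^2 + \|u\|_{\R_+}^2\bigr) \;\leq\; \tfrac{1}{4}\|\mathscr{A}u\|_{\R_+}^2 + \tfrac{3}{4}\|u\|_{\R_+}^2.
\]
Plugging back into the expansion of $\|\mathscr{A} u\|^2$ gives $\|D_y^2 u\|_{\R_+}^2 + \|yu\|_{\R_+}^2 \leq \tfrac{3}{2}(\|\mathscr{A}u\|_{\R_+}^2 + \|u\|_{\R_+}^2)$, which is exactly the desired estimate. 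The final step is to extend the inequality from $C_c^\infty((0,+\infty))$ to all of $\mathrm{D}$ by density, using the closedness of the operators $D_y^2$ and $y$ on $\mathrm{D}$.

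The main (minor) obstacle is justifying the integrations by parts for general $u \in \mathrm{D}$; this is precisely why I would reduce first to $C_c^\infty((0,+\infty))$, where all boundary and decay issues are trivial, and then close by density. All other steps are routine manipulations and Young's inequality.
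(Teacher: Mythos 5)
Your computational core (expanding $\|\mathscr{A}u\|^2$ and controlling the cross term $\Im\langle D_y^2u,yu\rangle$ by integration by parts) is a legitimate alternative to the paper's argument, which instead runs Nirenberg's difference-quotient method directly on $\mathrm{D}$ precisely to avoid any density reduction. But your reduction step contains a genuine gap: $C_c^\infty((0,+\infty))$ is \emph{not} dense in $\mathrm{D}$ for the graph norm. Indeed, suppose $u_n\in C_c^\infty((0,+\infty))$ with $u_n\to u$ and $\mathscr{A}u_n\to\mathscr{A}u$ in $L^2(\R_+)$. Applying your inequality to $u_n-u_m$ (where it is already proved) shows that $(u_n)$ is Cauchy in $H^2(\R_+)$, hence $u$ lies in the $H^2$-closure of $C_c^\infty((0,+\infty))$, which is $H^2_0(\R_+)=\{v\in H^2(\R_+): v(0)=v'(0)=0\}$. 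So every $u$ in the graph-norm closure of $C_c^\infty((0,+\infty))$ satisfies $u'(0)=0$. But $\mathrm{D}$ contains functions with $u'(0)\neq 0$ --- for instance the eigenfunction $u_1(y)=\mathrm{Ai}(e^{i\pi/6}y+z_1)$, whose derivative at $0$ cannot vanish (otherwise ODE uniqueness would force $u_1\equiv 0$). This is the same trace obstruction as for the half-line Dirichlet Laplacian, for which $C_c^\infty((0,+\infty))$ is not a core. The appeal to Appendix \ref{app:density} does not help here: that lemma produces approximants that are smooth up to the boundary and vanish \emph{on} it, not approximants supported away from it.

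The good news is that your integrations by parts never actually use $u'(0)=0$ or support away from the origin: in $\langle D_y^2u,yu\rangle=\bigl[-u'\,\overline{yu}\bigr]_0^\infty+\int_0^\infty u'\,\overline{(yu)'}\,dy$ the boundary term at $y=0$ dies because of the explicit factor $y$, and in $\langle D_y^2u,u\rangle=\|u'\|^2$ it dies because $u(0)=0$. So the correct repair is to run the computation on $X:=C_c^\infty(\overline{\R_+})\cap\mathrm{D}$ (smooth up to the boundary, compactly supported, vanishing at $0$, derivative free at $0$) and then prove that $X$ is dense in $\mathrm{D}$ --- for the norm $\|u\|^2+\|u''\|^2+\|yu\|^2$, which dominates the graph norm. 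That density statement is the one-dimensional analogue of Lemma \ref{lem:density} and requires an argument of comparable length (trace correction plus cutoff); it is not free, which is presumably why the paper opts for difference quotients and bypasses density altogether. With that substitution your estimate $\|D_y^2u\|^2+\|yu\|^2\leq\frac32(\|\mathscr{A}u\|^2+\|u\|^2)$ goes through and even gives an explicit constant.
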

\begin{proof}
It is sufficient to show that 
\begin{equation}
\label{eq:ellip_reg_D2y}
\|D_y^2 u\|_{L^2(\R_+)}\leq \|u\|_{\mathrm{D}}
\end{equation}
Indeed, recalling the definition of the $\mathrm{D}$ norm from \eqref{eq:def_normeD}, one can then use the triangle inequality to obtain the other estimate 
$$\|yu\| \leq \|D^2_y u\|_{L^2(\R_+)} + \|\mathscr{A}u\|_{L^2(\R_+)} \leq C\|u\|_{\mathrm{D}}.$$

To show \eqref{eq:ellip_reg_D2y}, we use the classical method of difference quotients of Nirenberg \cite{nirenberg1955remarks}. Namely, for $u \in L^2(\R_+)$ and $h \in \R$, denote $\Delta_h u := \frac{u(y+h) - u(y)}{ih}$. 
Recall that if $u \in H^1(\R_+)$, then $\Delta_h u$ converges to $D_y u$ in $L^2$ as $h \to 0$. Given $u \in \mathrm{D}$, we write
\begin{align}\nonumber
(\mathscr{A}u,D_y^2 u) = \lim_{h \to 0} (\mathscr{A}u,\Delta_{-h} \Delta_{h} u) &= \lim_{h \to 0} (\Delta_{h} \mathscr{A}u, \Delta_h u)\\
& = \lim_{h \to 0}(\mathscr{A} \Delta_{h} u,\Delta_h u) + ([\mathscr{A},\Delta_{h}]u,\Delta_h u).
\label{eq:lim0_ellipA}
\end{align}
But on the one hand, 
$[\mathscr{A},\Delta_{h}]u = [iy,\Delta_{h}] u = iu(\cdot - h),$
so that 
\begin{equation}
\label{eq:lim1_ellipA}
\lim_{h \to 0} ([\mathscr{A},\Delta_{h}]u,\Delta_hu) = i(u,D_y u)_{L^2(\R_+)}
\end{equation}
On the other hand 
\begin{equation}
\label{eq:lim2_ellipA}
\Re (\mathscr{A} \Delta_{h}u,\Delta_h u)= \|D_y (\Delta_hu)\|_{L^2(\R_+)}^2 = \|\Delta_h (D_y u)\|^2_{L^2(\R_+)} \to_{h \to 0} \|D_y^2 u\|^2_{L^2(\R_+)}.
\end{equation}
Thus, by \eqref{eq:lim0_ellipA}-\eqref{eq:lim2_ellipA}, the Cauchy-Schwarz and the Young inequalities,  
$$\|D_y^2u\|^2 \leq C \left(\|u\|^2_{L^2(\R_+)} + \|D_y u\|^2_{L^2(\R_+)} + \|\mathscr{A}u\|^2_{L^2(\R_+)}\right)$$
and the conclusion follows since $\|D_y u\|^2_{L^2(\R_+)} = \textup{Re}(\mathscr{A}u,u) \leq \frac12( \|\mathscr{A}u\|_{L^2(\R_+)}^2 + \|u\|^2_{L^2(\R_+)}).$ 
\end{proof}
\begin{proposition}[Elliptic regularity for $\mathscr{L}_h^\Phi$]
\label{prop:ellip_LhPhi}
There exists $C > 0$ such that for all $\psi \in\cHd$ and all $h > 0$, 
\begin{equation}
\label{eq:ellip_reg_LhPhi}\left(\int_{\R}\|u(x,\cdot)\|^2_{\mathrm{D}}\,dx \right)^{1/2}\leq C\big(\|\psi\|_{\cH} + h^{-\frac23}\|\mathscr{L}_h^\Phi \psi\|_{\cH}\big).
\end{equation}
\end{proposition}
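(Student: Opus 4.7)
The plan is to first reduce the integrated $\mathrm{D}$-norm to a single estimate on $\|\mathscr{A}_\alpha u\|_\cH$, and then to establish that estimate by an $L^2$ coercivity argument for the decomposition $\mathscr{L}_h^\Phi = A + P$ with $A := h^{2/3}\mathscr{A}_\alpha$ and $P := (hD_x + i\Phi')^2 + iV$. For the reduction, I apply Proposition \ref{prop:ellip_A_yD2y} pointwise in $x$ to $\mathcal{U}_\alpha(x)^* u(x,\cdot)$; combining the identity $\mathscr{A}_\alpha(x) = \alpha(x)^{2/3}\mathcal{U}_\alpha(x)\mathscr{A}\mathcal{U}_\alpha(x)^*$ with the fact that $\alpha$ is bounded above and away from zero (Assumption \ref{ass:alpha}), one deduces $\|u(x,\cdot)\|_\mathrm{D}^2 \leq C(\|u(x,\cdot)\|_{L^2(\R_+)}^2 + \|\mathscr{A}_\alpha(x) u(x,\cdot)\|_{L^2(\R_+)}^2)$ uniformly in $x$. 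Integrating in $x$, it suffices to prove $h^{4/3}\|\mathscr{A}_\alpha u\|_\cH^2 \leq C(h^{4/3}\|u\|_\cH^2 + \|\mathscr{L}_h^\Phi u\|_\cH^2)$.

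For the coercivity estimate, I expand $\|\mathscr{L}_h^\Phi u\|_\cH^2 = \|Au\|_\cH^2 + \|Pu\|_\cH^2 + 2\Re\langle Au, Pu\rangle_\cH$; since $\|Pu\|_\cH^2 \geq 0$, it suffices to show
\begin{equation*}
2\Re\langle Au, Pu\rangle_\cH \geq -\tfrac12\|Au\|_\cH^2 - C h^{4/3}\|u\|_\cH^2 - C\|\mathscr{L}_h^\Phi u\|_\cH^2.
\end{equation*}
Splitting $\mathscr{A}_\alpha = D_y^2 + i\alpha(x)y$, the contribution from $D_y^2$ is handled by integration by parts in $y$: since $D_y$ commutes with $hD_x$, $\Phi'(x)$ and $V(x)$, one obtains $\Re\langle h^{2/3} D_y^2 u, P u\rangle = h^{2/3}(\|hD_x D_y u\|_\cH^2 - \|\Phi' D_y u\|_\cH^2)$, whose negative part is dominated by $\|\Phi'\|_\infty^2 \cdot h^{2/3}\|D_y u\|_\cH^2$ and controlled by Proposition \ref{prop:basic_ellip} with $\varepsilon = h^{2/3}$. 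The contribution from $i\alpha y$ requires integration by parts in $x$, producing the commutator $[hD_x,\alpha y] = -ih\alpha' y$ and a non-negative bulk term $h^{2/3}\int \alpha y V|u|^2$; the remaining cross products such as $\int\alpha y\Phi' \overline u (hD_x+i\Phi')u$ are bounded via the Cauchy--Schwarz inequality and again Proposition \ref{prop:basic_ellip} (with the parameter $\varepsilon$ judiciously chosen), yielding terms of the correct $h^{4/3}\|u\|_\cH^2$ order.

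The main obstacle is obtaining the correct $h$-scaling on the $\|u\|_\cH^2$ error: a naive Cauchy--Schwarz on the cross term would give a $\|u\|_\cH^2$-error with no $h$-power, which after unfolding to $\|\mathscr{A}_\alpha u\|_\cH$ would produce only the weaker bound $\|\mathscr{A}_\alpha u\|_\cH \leq C h^{-2/3}(\|\mathscr{L}_h^\Phi u\|_\cH + \|u\|_\cH)$ --- insufficient for the stated proposition, since $h^{-2/3}\|u\|_\cH$ is not controlled by $\|u\|_\cH$ as $h \to 0$. The sharp $h^{4/3}$ scaling must be extracted by exploiting the $\mu$-subsolution inequality $\Phi'^2 \leq \tfrac{1-\mu}{2}V$ (which makes the negative contribution $-\|\Phi' D_y u\|^2$ absorbable into the positive term $\int V|D_y u|^2$ available from the imaginary part of $\mathscr{L}_h^\Phi$) together with the fact that the commutators $[hD_x,\alpha]$ and $[hD_x,\Phi']$ are $O(h)$, providing the extra powers of $h$ at each integration by parts in the $x$-variable.
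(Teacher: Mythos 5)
Your reduction step (bounding $\|u(x,\cdot)\|_{\mathrm D}$ by $\|u(x,\cdot)\|+\|\mathscr A_\alpha(x)u(x,\cdot)\|$ uniformly in $x$ and integrating) is fine and matches the paper's final step. The gap is in the coercivity argument for the cross term, and it sits exactly at the point you yourself flag as ``the main obstacle''. In the expansion $\|\mathscr L_h^\Phi u\|^2=\|Au\|^2+\|Pu\|^2+2\Re\langle Au,Pu\rangle$ the admissible error in the lower bound for $2\Re\langle Au,Pu\rangle$ is $-\tfrac12\|Au\|^2-Ch^{4/3}\|u\|^2-C\|\mathscr L_h^\Phi u\|^2$, but (i) the term $-2h^{2/3}\|\Phi'D_yu\|^2$ cannot be handled by Proposition \ref{prop:basic_ellip}: that proposition gives $h^{2/3}\|D_yu\|^2\leq C\varepsilon^{-2}\|\mathscr L_h^\Phi u\|^2+C\varepsilon^2\|u\|^2$, and no choice of $\varepsilon$ makes the first coefficient $O(1)$ and the second $O(h^{4/3})$ simultaneously (your choice $\varepsilon=h^{2/3}$ leaves $Ch^{-4/3}\|\mathscr L_h^\Phi u\|^2$, which is $h^{-4/3}$ over budget and degrades the final estimate to $\|\mathscr A_\alpha u\|\lesssim\|u\|+h^{-4/3}\|\mathscr L_h^\Phi u\|$). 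Your proposed rescue --- absorbing $-\|\Phi'D_yu\|^2$ into $\int V|D_yu|^2$ via the subsolution inequality --- does not apply here, because $\langle h^{2/3}D_y^2u,iVu\rangle=-ih^{2/3}\|\sqrt V D_yu\|^2$ is purely imaginary: the compensating positive term lives in $\Im\langle Au,Pu\rangle$, which simply does not appear in the expansion of the squared norm. (ii) The $i\alpha y$ contribution produces, besides the $O(h)$ commutator and the non-negative bulk term, the cross term $4h^{2/3}\Re\langle\alpha y\Phi'u,(hD_x+i\Phi')u\rangle$, which carries no extra power of $h$; controlling it within the budget would require $\|(hD_x+i\Phi')u\|^2\lesssim h^{4/3}\|u\|^2+\|\mathscr L_h^\Phi u\|^2$, which is false (for near-eigenfunctions $\|hD_xu\|^2\sim h\|u\|^2$), and the same $\varepsilon$-tension reappears if one routes it through Proposition \ref{prop:basic_ellip}. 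Note that with $\Phi=0$ both obstructions disappear, so the difficulty is genuinely created by the conjugation.

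The paper avoids these cross terms entirely by never integrating by parts in $x$: it proves the two pieces of the $\mathrm D$-norm separately, getting $\|y\psi\|$ from the \emph{imaginary} part of the quadratic form tested against $y$-dependent weights $\varphi_\delta(y)^2\psi$ (the weight commutes with the whole $x$-part $(hD_x+i\Phi')^2+iV$, and $\Im\langle ih^{2/3}\alpha y\,\cdot,\cdot\rangle$ supplies the signed term with the correct $h^{2/3}$ prefactor), and $\|D_y^2\psi\|$ from Nirenberg difference quotients in the $y$-variable, whose commutator with $\mathscr L_h^\Phi$ is $O(h^{2/3}\|\psi\|)$. If you want to keep a global functional, you would at least need to replace $\|Au+Pu\|^2$ by a rotated pairing that makes both $\Re$ and $\Im$ parts of $\langle Au,Pu\rangle$ available, and you would still need a separate mechanism for the $\langle\alpha y\Phi'u,(hD_x+i\Phi')u\rangle$ term.
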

\begin{proof}
We start by showing that 
\begin{equation}
\label{eq:control_y}
\|\varphi(y)^2 \psi\|_{\cH} \leq C \big(\|\psi\|_{\cH} + h^{-\frac23}\|\mathscr{L}_h^\Phi \psi\|_{\cH}\big)
\end{equation}
where $\varphi : \R\to \R_+$ is smooth and satisfies $\varphi(y) = \sqrt{y}$ for $y \in [1,+\infty)$; this immediately implies the same estimate with $\varphi(y)^2$ replaced by $y$. 

To this end, let $\chi \in C^\infty_c(\R)$ be real-valued with $\chi \equiv 1$ near $0$ and let $\varphi_\delta(y) := \chi(\delta y)\varphi(y)$ and  By the dominated convergence theorem, 
$$(\mathscr{L}_h^\Phi \psi, \varphi(y)^2\psi) = \lim_{\delta \to 0} (\mathscr{L}_h^\Phi \psi, \varphi_\delta^2 \psi) =\lim_{\delta \to 0} \big( \mathscr{L}_h^\Phi(\varphi_\delta(y)\psi),\varphi_\delta(y) \psi\big) + \big([\mathscr{L}_h^\Phi,\varphi_\delta(y)]\psi,\varphi_\delta(y)\psi\big)$$ 
We observe that
$$\Im (\mathscr{L}_h^\Phi \varphi_\delta(y)\psi,\varphi_\delta\psi) = h^{2/3}(\alpha(x)\varphi_\delta(y)\psi,\varphi_\delta(y)\psi) \geq ch^{2/3}\|\varphi_\delta(y) \psi\|^2_{\cH}$$
since $\alpha$ is bounded below. Moreover, since $\varphi'$ is bounded, $\|[\mathscr{L}_h^\Phi ,\varphi_\delta] \psi\| \leq Ch^{2/3} (\|\psi\|_{\cH} + \|D_y \psi\|_{\cH})$. Thus, 
$$\lim_{\delta \to 0} h^{2/3} \|\varphi_\delta(y) \psi\|^2_{\cH} \leq \|\mathscr{L}_h^\Phi \psi\|\|\varphi_\delta(y)^2 \psi\| +Ch^{2/3}(\|\psi\|_{\cH} + \|D_y\psi\|_{\cH})\|\varphi(y) \psi\|_{\cH}.$$
Using the basic elliptic estimate of Proposition \ref{prop:basic_ellip} with $\varepsilon = h^{1/3}$ and using Young's inequality, we deduce that
$$h^{2/3}\|\varphi_\delta(y) \psi\|^2 \leq C h^{-2/3}\|\mathscr{L}_h^\Phi \psi\|_{\cH}^2 + h^{2/3}\|\psi\|^2_{\cH}\,.$$
Sending $\delta$ to $0$, we obtain \eqref{eq:control_y}. 

We obtain the estimate
\begin{equation}
\label{eq:controlD2y}
\|D_y^2 \psi\|_{\cH} \leq C \left(\|\psi\|_{\cH} + h^{-\frac23} \|\mathscr{L}_h^\Phi \psi\|_{\cH}\right)
\end{equation}
by the method of difference quotients as in the proof of Proposition \ref{prop:ellip_A_yD2y}. We omit the detail, since no new difficulty arises. Summing \eqref{eq:control_y} and \eqref{eq:controlD2y} and using the Fubini theorem 
$$\int_{\R} \|y \psi(x,\cdot)\|^2_{L^2(\R_+)} + \|D_y^2 \psi(x,\cdot)\|^2_{L^2(\R_+)} \,dx \leq C \big(\|\psi\|_{\cH} + h^{-2/3} \|\mathscr{L}_h^\Phi \psi\|_{\cH}\big)$$
which implies \eqref{eq:ellip_reg_LhPhi} by the definition of the $\mathrm{D}$ norm and the triangle inequality. 
\end{proof}

\section{Existence of eigenvalues of \texorpdfstring{$\mathscr{L}_h$}{Lh} close to \texorpdfstring{$\lambda_{1,\alpha}(0)h^{2/3}$}{lambda\_1,alpha}}

\label{app:sp_non_vide}

		In this paragraph, we prove the following result
		\begin{proposition}
		\label{prop:D}
		There exists $R > 0$ and $h_0 > 0$ such that for all $h \in (0,h_0)$, $\mathscr{L}_h$ admits at least one eigenvalue in the dist $D(\lambda_{1,\alpha}(0)h^{\frac23}, Rh)$. 
		\end{proposition}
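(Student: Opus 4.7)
The plan is to reduce the spectral problem for $\mathscr{L}_h$ near $\lambda_{1,\alpha}(0)h^{2/3}$ to that of the effective one-dimensional operator
\[\mathscr{L}_h^{\rm eff} := h^{2/3}\lambda_{1,\alpha}(x) + (hD_x)^2 + iV(x) \quad \text{on } L^2(\R),\]
via a Grushin problem built from $j_\alpha$ and $j_\alpha^T$. Two ingredients from the body of the paper are crucial: on $\textup{Ran}(\Pi_{1,\alpha})$, by Proposition \ref{prop:jpi_elem}(iv), $\mathscr{L}_h$ essentially reduces (via the isometric embedding $j_\alpha$) to $\mathscr{L}_h^{\rm eff}$; on $\textup{Ran}(\Id-\Pi_{1,\alpha})$, Proposition \ref{prop:ellip2} shows that $\mathscr{L}_h-\lambda$ is invertible with $O(h^{-2/3})$ bound on the inverse. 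I would also need $\mathscr{L}_h$ to have compact resolvent (so that $\sigma(\mathscr{L}_h)$ is discrete and the Rouché-type argument applies): this is obtained from the basic elliptic estimate of Proposition \ref{prop:basic_ellip}, which controls $\|\sqrt{V}\psi\|$, $\|\sqrt{y}\psi\|$, $\|D_y\psi\|$ and $\|(hD_x)\psi\|$, combined with the Rellich--Kondrachov theorem (using Assumption \ref{ass:V}(iii) for confinement in $x$ and the weight $y$ for confinement in $y$).

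First, I would apply \cite[Proposition 3.6]{AFHR2025} adapted to $\mathscr{L}_h^{\rm eff}$ -- the additional term $h^{2/3}(\lambda_{1,\alpha}(x)-\lambda_{1,\alpha}(0))$ is bounded and, since $\lambda_{1,\alpha}'(0)=0$, contributes only $O(h^{5/3})$ to the eigenvalue (the 1D eigenfunction being concentrated at scale $h^{1/2}$) -- to obtain an eigenvalue $\mu_h\in D(\lambda_{1,\alpha}(0)h^{2/3}, R_0 h)$ of $\mathscr{L}_h^{\rm eff}$ with unit eigenfunction $f_h\in H^2(\R)$. Setting $\psi_h:=j_\alpha f_h$, Proposition \ref{prop:jpi_elem}(iv) yields
\[(\mathscr{L}_h-\mu_h)\psi_h = j_\alpha(\mathscr{L}_h^{\rm eff} f_h-\mu_h f_h) + [(hD_x)^2,j_\alpha]f_h = [(hD_x)^2,j_\alpha]f_h,\]
which, expanding the commutator and using $\|hD_x f_h\|_{L^2(\R)}=O(h^{1/2})$ together with the uniform bounds \eqref{eq:Dxux}, is $O(h^{3/2})$ in $\cH$, while $\|\psi_h\|_\cH = \|f_h\|_{L^2(\R)} = 1$.

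Next, for $\lambda$ in the disk $\mathcal{D}:=D(\lambda_{1,\alpha}(0)h^{2/3},Rh)$ with $R>R_0$ large enough, I would set up the Grushin problem
\[\mathcal{P}_h(\lambda) := \begin{pmatrix} \mathscr{L}_h-\lambda & j_\alpha \\ j_\alpha^T & 0 \end{pmatrix} : \cHd \oplus L^2(\R) \to \cH \oplus L^2(\R).\]
Using the decomposition $u=j_\alpha(j_\alpha^T u) + (\Id-\Pi_{1,\alpha})u$ and Proposition \ref{prop:ellip2} to invert on $\textup{Ran}(\Id-\Pi_{1,\alpha})$, one constructs a bounded two-sided inverse $\mathcal{E}_h(\lambda)=\begin{pmatrix} E & E_+ \\ E_- & E_{-+}\end{pmatrix}$. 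The Schur complement identification gives $\lambda\in\sigma(\mathscr{L}_h) \iff 0\in\sigma(E_{-+}(\lambda))$, and a direct computation based on the intertwining relation $\mathscr{L}_h j_\alpha = j_\alpha\mathscr{L}_h^{\rm eff} + [(hD_x)^2,j_\alpha]$ yields
\[E_{-+}(\lambda) = -(\mathscr{L}_h^{\rm eff}-\lambda) + K_h(\lambda),\]
with $\|K_h(\lambda)\|_{\mathscr{L}(L^2(\R))} = o(h)$ uniformly on $\mathcal{D}$. Since $\mu_h$ is a simple eigenvalue of $\mathscr{L}_h^{\rm eff}$ inside $\mathcal{D}$ whose spectral gap is $\gtrsim h$ (by the spectral gap of the complex harmonic oscillator approximation), a Rouché-type argument for operator-valued holomorphic functions on $\partial \mathcal{D}$ produces some $\lambda\in\mathcal{D}$ with non-trivial $\textup{Ker}\, E_{-+}(\lambda)$, hence an eigenvalue of $\mathscr{L}_h$ in $\mathcal{D}$.

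The hard part will be the identification $E_{-+}(\lambda) = -(\mathscr{L}_h^{\rm eff}-\lambda) + K_h(\lambda)$ with $K_h = o(h)$: one must track the commutator $[(hD_x)^2,j_\alpha]$ (of order $h$ by Proposition \ref{prop:commutator}, with an additional $h^{1/2}$ gained on functions concentrated at scale $h^{1/2}$) through the Grushin inverse, controlling the contribution of the $O(h^{-2/3})$-bounded resolvent on $\textup{Ran}(\Id-\Pi_{1,\alpha})$ carefully enough to preserve the factor of $h$ separating $\mu_h$ from the rest of the spectrum of $\mathscr{L}_h^{\rm eff}$.
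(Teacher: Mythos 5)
Your route (a Grushin/Schur-complement reduction to the effective operator $\mathscr{L}_h^{\rm eff}$, followed by Rouch\'e) is genuinely different from the paper's, which instead (a) proves the quantitative injectivity bound $\|(\mathscr{L}_h-\lambda)\psi\|\gtrsim h\|\psi\|$ for $\lambda$ at distance $\gtrsim h$ from $\lambda_{1,\alpha}(0)h^{2/3}+S_h$, $S_h$ being the spectrum of the complex harmonic oscillator (combining Proposition \ref{prop:ellip2} with $\Phi=0$, Corollary \ref{cor:commut}, and the one-dimensional lower bound from \cite{AFHR2025}), and then (b) shows the resulting Riesz projector over a small circle is nonzero by testing it against the explicit tensorized quasimode $\mathrm{Ai}(e^{i\pi/6}\alpha(0)^{1/3}y+z_1)\,h^{-1/4}e^{-e^{i\pi/4}\kappa x^2/h}$. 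The paper's step (b) is more elementary than your Rouch\'e argument and avoids having to produce an exact eigenpair of $\mathscr{L}_h^{\rm eff}$ or to track the commutator $[(hD_x)^2,j_\alpha]$ through the Grushin inverse.

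There are, however, two genuine problems in your write-up. First, $\mathscr{L}_h$ does \emph{not} have compact resolvent: Assumption \ref{ass:V} requires $V$ to be \emph{bounded}, so condition (iii) ($\liminf_{|x|\to\infty}V>0$) gives no confinement in $x$, and translated Weyl sequences of the form $e^{i\xi_0x/h}\chi(x-n)u_{\rm Ai}(y)$ produce essential spectrum (located at imaginary part $\gtrsim V_\infty$, hence away from the relevant disk, but nonempty). The correct and sufficient statement is that $\mathscr{L}_h-\lambda$ is Fredholm of index $0$ for $\lambda$ in a fixed small disk $D(0,c)$, which is the paper's Lemma \ref{lem:spectre_vp_D(O,c)}. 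Second, Proposition \ref{prop:ellip2} is only an a priori (left) estimate; it does not by itself yield the \emph{surjectivity} of $(\Id-\Pi_{1,\alpha})(\mathscr{L}_h-\lambda)$ on $\textup{Ran}(\Id-\Pi_{1,\alpha})$ that you need to construct the two-sided Grushin inverse $\mathcal{E}_h(\lambda)$. Both issues are repaired by the same Fredholm-index-zero argument (injectivity plus index $0$ gives invertibility), but as written the proposal asserts a false compactness property and silently assumes invertibility where only an a priori bound is available. A smaller caveat: the term $h^{2/3}(\lambda_{1,\alpha}(x)-\lambda_{1,\alpha}(0))$ is $O(h^{2/3})$ in operator norm, far larger than the $O(h)$ spectral gap, so your claim that it shifts the eigenvalue of $\mathscr{L}_h^{\rm eff}$ by only $O(h^{5/3})$ cannot be obtained by naive perturbation theory; it requires the adapted one-dimensional analysis of \cite{AFHR2025} (which is also what the paper invokes), so you should make that dependence explicit.
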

		\begin{proof}
		We use the following three steps, which are shown individually in separate lemmas below. 
				\begin{enumerate}[(i)]
				\item We show that the spectrum of $\mathscr{L}_h$ in a small disk $D(0,c)$ consists exclusively of eigenvalues (Lemma \ref{lem:spectre_vp_D(O,c)}).
				\item We show that if $\lambda \in D(\lambda_{1,\alpha}(0)h^{2/3},Rh)$ satisfies 
				$$\dist(\lambda - \lambda_{1,\alpha}(0)h^{2/3},S_h) \gtrsim h$$
				where $S_h = \big\{e^{i \frac{\pi}{4}}(2n-1)\kappa h \,\mid\, n = 1,2,\ldots\big\}$ is the spectrum of the complex harmonic oscillator $h^2D_x^2 + i \kappa^2x^2$, with $\kappa = \sqrt{\frac{V''(0)}{2}}$,  then $\lambda \in \rho(\mathscr{L}_h)$ and 
				$$\|(\mathscr{L}_h - \lambda)^{-1}\| \lesssim h^{-1},$$
				see Lemma \ref{lem:oscillateur}.
				\item By (ii), there exists a small circle $C := \mathscr{C}(\lambda_{1,\alpha}(0)h^{\frac23} + e^{i \frac{\pi}{4}}\kappa h,\varepsilon h)$ in the resolvent set of $\mathscr{L}_h$. Thus, the operator
				$$P_h := \frac{1}{2\pi i}\int_{C}(z - \mathscr{L}_h)^{-1}\,dz$$
				is well-defined. We then exhibit an explicit tensorized ``quasimode'' $\psi$ for which we prove that $P_h \psi \neq 0$ (Lemma \ref{lem:quasimode}). Thus $P_h \neq 0$, which means that $(z - \mathscr{L}_h)^{-1}$ cannot be holomorphic in a closed disk containing $C$. In particular, the spectrum of $\mathscr{L}_h$ cannot be empty in the disk $D(\lambda_{1,\alpha}(0)h^{2/3},Rh)$ and by (i), it follows that $\mathscr{L}_h$ admits an eigenvalue in this disk.
				\end{enumerate}
		\end{proof}
		
		\begin{lemma}
		\label{lem:spectre_vp_D(O,c)}
		There exists $c > 0$ and $h_0 > 0$ such that for all $h \in (0,h_0)$, the spectrum of $\mathscr{L}_h$ in $D(0,c)$ is discrete and consists only of eigenvalues of $\mathscr{L}_h$. Moreover, for all $\lambda \in D(0,c)$, $\mathscr{L}_h - \lambda$ is Fredholm of index $0$.
		\end{lemma}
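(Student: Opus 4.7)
The plan is to show that $\mathscr{L}_h - \lambda$ is Fredholm of index $0$ for every $\lambda$ in a connected open region containing both $D(0,c)$ and at least one point of the resolvent set of $\mathscr{L}_h$, and then to invoke the analytic Fredholm theorem. Once Fredholm of index $0$ is established throughout $D(0,c)$, any $\lambda \in \sigma(\mathscr{L}_h) \cap D(0,c)$ is automatically an eigenvalue (since non-invertibility of an index-$0$ Fredholm operator forces a non-trivial kernel), and discreteness follows because the analytic family $\lambda \mapsto \mathscr{L}_h - \lambda$ cannot be everywhere non-invertible on a connected Fredholm region containing an invertibility point.

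The first step is a coercive estimate modulo a compact remainder. Fix $c < V_\infty/4$ and, for each $h$, choose $R = R_h > 0$ so large that $h^{2/3} \alpha_0 R_h \geq V_\infty/2$, and set $B_R := \{(x,y) \in \R \times \R_+ : |x| \leq R,\ y \leq R\}$. For $\phi \in \cHd$ supported outside $B_{R_h}$, the pointwise inequality $h^{2/3}\alpha(x) y + V(x) \geq V_\infty/2$ (on $|x| > R_h$ use $V \geq V_\infty/2$; on $y > R_h$ use $h^{2/3}\alpha y \geq V_\infty/2$) yields
\[
\Im \langle \mathscr{L}_h \phi,\phi \rangle = h^{2/3}\langle \alpha y\phi,\phi\rangle + \langle V\phi,\phi\rangle \geq (V_\infty/2)\,\|\phi\|^2,
\]
and thus $\|\phi\|_{\cH} \leq (4/V_\infty) \|(\mathscr{L}_h - \lambda)\phi\|_{\cH}$ for $\lambda \in D(0,c)$ by Cauchy--Schwarz. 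Taking a smooth cutoff $\chi$ equal to $1$ on $B_{R_h}$ and supported in $B_{2R_h}$, writing $\psi = \chi\psi + (1-\chi)\psi$, applying the exterior estimate to $(1-\chi)\psi$, and controlling the commutator $[\mathscr{L}_h, \chi]\psi$ (a first-order differential operator in $\psi$ supported in the bounded annulus $B_{2R_h} \setminus B_{R_h}$) via Proposition \ref{prop:basic_ellip}, I obtain
\[
\|\psi\|_{\cH} \leq C \bigl( \|(\mathscr{L}_h - \lambda)\psi\|_{\cH} + \|\widetilde{\chi} \psi\|_{H^1(\R\times \R_+)}\bigr), \qquad \psi \in \cHd,
\]
for a slightly larger compactly supported cutoff $\widetilde\chi$. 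Since $\widetilde\chi$ has compact support, Rellich--Kondrachov makes the map $\psi \mapsto \widetilde\chi \psi$ compact from $\cHd$ to $\cH$, giving a coercive estimate modulo a compact remainder.

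Such an estimate implies that $\mathscr{L}_h - \lambda$ has closed range and finite-dimensional kernel. The same argument applied to the formal adjoint $\mathscr{L}_h^* = h^{2/3}D_y^2 - ih^{2/3}\alpha y + (hD_x)^2 - iV$ (whose imaginary quadratic form has the opposite sign, but whose absolute-value coercive estimate is identical) handles the cokernel, so $\mathscr{L}_h - \lambda$ is Fredholm for every $\lambda$ in the connected half-plane $H := \{\Im\lambda < V_\infty/4\}$. At the point $\lambda_0 = -iA$ with $A > 0$ large, the rotated operator $e^{-i\pi/4}(\mathscr{L}_h - \lambda_0)$ is strictly accretive with coercivity constant $A\cos(\pi/4)$, hence invertible, so the Fredholm index vanishes at $\lambda_0$. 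By local constancy of the index on $H$, it vanishes throughout $H$, and in particular on $D(0,c) \subset H$. The analytic Fredholm theorem applied to the holomorphic family $\lambda \mapsto \mathscr{L}_h - \lambda$ on $H$ then yields discreteness of the non-invertibility set in $H$, and thus in $D(0,c)$.

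The main obstacle I anticipate is the clean bookkeeping for the commutator $[\mathscr{L}_h,\chi]\psi$: it involves $hD_x\psi$ and $h^{2/3}D_y\psi$ terms supported in the bounded annulus, and these must be compactly controlled in $\cH$ without reintroducing a coefficient close to $1$ in front of $\|\psi\|_{\cH}$, which would prevent absorption into the left-hand side. A careful use of Proposition \ref{prop:basic_ellip} applied to $\widetilde\chi\psi$, combined with Young's inequality to split the resulting $\|\mathscr{L}_h \widetilde\chi\psi\|$ contribution between $\|(\mathscr{L}_h - \lambda)\psi\|$ and an $\varepsilon \|\psi\|$ term, should make these contributions manageable for each fixed small $h$.
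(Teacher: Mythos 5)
Your proposal is correct, and its skeleton coincides with the paper's: establish that $\mathscr{L}_h-\lambda$ is Fredholm of index $0$ near the origin, exhibit one point of the resolvent set via an accretivity estimate, and conclude by the analytic Fredholm theorem. The only genuine difference is the mechanism for the Fredholm step, which the paper merely sketches: it perturbs $\mathscr{L}_h$ by a compactly supported (relatively compact) potential chosen so that the perturbed operator is accretive, hence invertible, so that $\mathscr{L}_h-\lambda$ is a compact perturbation of an invertible operator and is automatically Fredholm of index $0$; you instead prove a Peetre-type a priori estimate with a compact error term (exterior coercivity of $\Im\langle\mathscr{L}_h\phi,\phi\rangle$ plus a cutoff), apply it to the operator and its formal adjoint to get semi-Fredholmness of both, and then transport the index from a distant invertible point $-iA$ by connectedness. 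Both routes are standard and valid; yours has the advantage of giving Fredholmness on the whole half-plane $\{\Im\lambda<V_\infty/4\}$ rather than only on a small disk, at the cost of having to argue separately for the adjoint and for the local constancy of the index. Two small remarks: your worry about the commutator $[\mathscr{L}_h,\chi]\psi$ is unfounded, since its contribution is already dominated by $\|\widetilde\chi\psi\|_{H^1}$, which is the compact remainder and needs no small constant in front of it (no absorption into the left-hand side is required); and to invoke the semi-Fredholm criterion you should upgrade the left-hand side of your estimate from $\|\psi\|_{\cH}$ to the graph norm $\|\psi\|_{\cHd}$, which follows from elliptic regularity for $\mathscr{L}_h$ (Proposition \ref{prop:ellip_LhPhi} and the equation itself), and also justify that the resolvent of $\mathscr{L}_h$, and not merely the operator $\mathscr{L}_h-\lambda$, is invertible at $\lambda_0=-iA$ by checking the adjoint's accretivity so that the range is dense.
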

		\begin{proof}
		It can be shown by a standard argument that $\mathscr{L}_h - z$ is Fredholm of index $0$ for $z \in D(0,c)$ with $c$ small enough (by adding a smooth, compactly supported perturbation $\chi$ equal to a positive constant on the set where $V(x) + y \leq \frac{\liminf_{\infty} V}{2}$). For $z = -\frac{c}{2}$, we have $\Re \langle (\mathscr{L}_h + c/2) \psi,\psi \rangle \geq \frac{c}{2}\|\psi\|^2$, thus $\mathscr{L}_h + \frac{c}{2}$ is injective, hence an isomorphism; in particular $-\frac{c}{2} \in \rho(\mathscr{L}_h)$. Since $z \mapsto \mathscr{L}_h - z$ is holomorphic with respect to $z$, the Fredholm analytic theorem (see, e.g., \cite[Theorem C.9]{dyatlov2019mathematical}) implies that the spectrum of $\mathscr{L}_h$ is discrete and consists only of elements with finite algebraic multiplicity, hence eigenvalues (by \cite[Corollary 3.36]{cheverry2021guide}).
		\end{proof}

		\begin{lemma}
		\label{lem:oscillateur}
		If $\lambda \in D(\lambda_{1,\alpha}(0)h^{2/3},Rh)$ satisfies 
				$$\dist(\lambda - \lambda_{1,\alpha}(0)h^{2/3},S_h) \gtrsim h$$
				where $S_h = \big\{e^{i \frac{\pi}{4}}(2n-1)\kappa h \,\mid\, n = 1,2,\ldots\big\}$ is the spectrum of the complex harmonic oscillator $h^2D_x^2 + i \kappa^2x^2$, with $\kappa = \frac{V''(0)}{2}$,  then $\lambda \in \rho(\mathscr{L}_h)$ and 
				$$\|(\mathscr{L}_h - \lambda)^{-1}\| \lesssim h^{-1}$$
		\end{lemma}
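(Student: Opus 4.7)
The plan would be to combine the two key elliptic estimates already established in the paper with a resolvent bound for the effective one-dimensional complex Schr\"odinger operator on the range of $\Pi_{1,\alpha}$. I would first apply Propositions~\ref{prop:ellip1} and~\ref{prop:ellip2} with $\Phi\equiv 0$ (which is trivially a $\mu$-subsolution for any $\mu\in(0,1)$), sum them, and absorb the $h^{1/3}\|\psi\|$ term for $h$ small enough to obtain
\[
\|\psi\|\lesssim h^{-1}\|(\mathscr{L}_h-\lambda)\psi\|+\|\mathbf{1}_{\{|x|\leq Lh^{1/2}\}}\psi\|.
\]
Splitting $\psi=\Pi_{1,\alpha}\psi+(\Id-\Pi_{1,\alpha})\psi$ and reusing Proposition~\ref{prop:ellip2} on the second summand, it will then be enough to estimate $\|\mathbf{1}_{\{|x|\leq Lh^{1/2}\}}\Pi_{1,\alpha}\psi\|$, equivalently (up to universal constants) $\|f\|_{L^2(\R)}$ with $f:=j_\alpha^T\psi$, by $h^{-1}\|(\mathscr{L}_h-\lambda)\psi\|$ plus a term absorbable into $\|\psi\|$.

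For this, I would observe that, using the commutation relation $j_\alpha^T\mathscr{A}_\alpha=\lambda_{1,\alpha}j_\alpha^T$ (obtained by transposing Proposition~\ref{prop:jpi_elem}(iv)), $f$ satisfies the effective one-dimensional equation
\[
(H_h^{\rm eff}-\lambda)f = j_\alpha^T(\mathscr{L}_h-\lambda)\psi - h^2[j_\alpha^T,D_x^2]\psi,\qquad H_h^{\rm eff}:=h^{2/3}\lambda_{1,\alpha}(x)+h^2D_x^2+iV(x).
\]
The commutator $h^2[j_\alpha^T,D_x^2]\psi$ is a linear combination of terms of the form $h^2\langle D_x\psi(x,\cdot),D_x\bar u_{\alpha,x}\rangle_{\R_+}$ and $h^2\langle\psi(x,\cdot),D_x^2\bar u_{\alpha,x}\rangle_{\R_+}$. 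Applying Proposition~\ref{prop:basic_ellip} with $\varepsilon=h^{1/3}$ and using $|\lambda|=O(h^{2/3})$, its $L^2(\R)$ norm is bounded by $Ch^{2/3}\|(\mathscr{L}_h-\lambda)\psi\|+Ch^{4/3}\|\psi\|$. Consequently, the right-hand side $g$ of the effective equation will satisfy $\|g\|_{L^2(\R)}\lesssim\|(\mathscr{L}_h-\lambda)\psi\|+h^{4/3}\|\psi\|$.

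The last ingredient is the resolvent bound $\|(H_h^{\rm eff}-\lambda)^{-1}\|_{\mathscr{L}(L^2(\R))}\lesssim h^{-1}$ whenever $\dist(\lambda-h^{2/3}\lambda_{1,\alpha}(0),S_h)\gtrsim h$. Since $\alpha'(0)=V'(0)=0$ and $V''(0)=2\kappa^2$, the shifted operator $H_h^{\rm eff}-h^{2/3}\lambda_{1,\alpha}(0)$ differs from the complex harmonic oscillator $H_h^{\rm harm}:=h^2D_x^2+i\kappa^2x^2$ only by an $x$-localized perturbation of size $O(h^{2/3}x^2+|x|^3)$ near the origin, and $H_h^{\rm harm}$ itself has spectrum $S_h$ and resolvent bounded by $C/\dist(\cdot,S_h)$ (via a Davies-type complex-rotation argument reducing it to $e^{i\pi/4}$ times the self-adjoint harmonic oscillator). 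Combining this with the spectral description of $(hD_x)^2+iV(x)$ near the origin established in \cite[Proposition~3.6]{AFHR2025} gives the required bound, and hence $\|f\|\lesssim h^{-1}\|(\mathscr{L}_h-\lambda)\psi\|+h^{1/3}\|\psi\|$. Putting everything together will close the a priori estimate $\|\psi\|\lesssim h^{-1}\|(\mathscr{L}_h-\lambda)\psi\|$; invertibility and the resolvent bound will then follow since $\mathscr{L}_h-\lambda$ is Fredholm of index~$0$ by Lemma~\ref{lem:spectre_vp_D(O,c)}. The main difficulty will be the last resolvent estimate for $H_h^{\rm eff}$: a naive Neumann-series perturbation of $H_h^{\rm harm}$ fails because $V-\kappa^2x^2$ is not globally small on $L^2(\R)$, so the proof requires either importing this fact from the one-dimensional analysis in \cite{AFHR2025}, or reproving it by localizing near $x=0$ and handling the region $\{V(x)\gtrsim h\}$ separately by the ellipticity of $(hD_x)^2+iV(x)-z$ there.
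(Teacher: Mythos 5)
Your proposal is correct and follows essentially the same route as the paper: an elliptic estimate on $\textup{Ran}(\Id-\Pi_{1,\alpha})$ via Proposition~\ref{prop:ellip2} with $\Phi=0$, reduction of the $\Pi_{1,\alpha}$-part to the effective one-dimensional operator $(hD_x)^2+iV+h^{2/3}\lambda_{1,\alpha}(x)$ whose resolvent bound away from $S_h$ is imported from the one-dimensional analysis in \cite{AFHR2025} (the paper cites Corollary~3.10 there), a commutator estimate to pass between $(\mathscr{L}_h-\lambda)\Pi_{1,\alpha}\psi$ and $\Pi_{1,\alpha}(\mathscr{L}_h-\lambda)\psi$, and the Fredholm-index-zero argument of Lemma~\ref{lem:spectre_vp_D(O,c)}. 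The only difference is presentational: your detour through Proposition~\ref{prop:ellip1} and the indicator term $\|\mathbf{1}_{\{|x|\leq Lh^{1/2}\}}\psi\|$ is unnecessary, since the effective resolvent bound already controls $\|\Pi_{1,\alpha}\psi\|$ directly, which is how the paper proceeds.
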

		\begin{proof}
		We can use a particular case of Proposition \ref{prop:ellip2} (applied with $\Phi=0$) to get
		\begin{equation}
		\label{eq:aux1}
		h^{\frac23}\|(\mathrm{Id}-\Pi_{1,\alpha})\psi\|\leq C\|(\mathscr{L}_h-\lambda)\psi\|+Ch\|\psi\|\,.
		\end{equation}
		Moreover, using Corollary \ref{cor:commut} (again with $\Phi = 0$)
		\begin{equation}
		\label{eq:aux2}
		\|(\mathscr{L}_h - \lambda)\Pi_{1,\alpha} \psi\| \leq C\|(\mathscr{L}_h - \lambda)\psi\|+ Ch^{\frac43} \|\psi\|.
		\end{equation}
		We again observe (as in the proof of Proposition \ref{prop:ellip1}) that
		$$(\mathscr{L}_h - \lambda)\Pi_{1,\alpha} = [(hD_x)^2 + V_{h,{\rm eff}} - z]\Pi_{1,\alpha}$$ 
		where $V_{h,{\rm eff}} =  iV + h^{2/3} (\lambda_{1,\alpha}(x) - \lambda_{1,\alpha}(0))$ and, here, $z \in D(0,Rh)$ satisfies $\dist(z,S_h) \geq \varepsilon h$. Thus, by \cite[Corollary 3.10]{AFHR2025} (adapting the proof as explained in Remark 1.2 (ii) of that reference) and using the Fubini theorem, 
		\begin{equation}
		\label{eq:aux3}
		\|(\mathscr{L}_h - \lambda)\Pi_{1,\alpha} \psi\| \geq Ch \|\Pi_{1,\alpha} \psi\|.
		\end{equation}
		Combining equations~\eqref{eq:aux1}-\eqref{eq:aux3}, we deduce that 
		$$\|(\mathscr{L}_h - \lambda) \psi\| \geq Ch \|\psi\|.$$
		Thus $(\mathscr{L}_h -\lambda)$ is injective, hence an isomorphism by Lemma \ref{lem:spectre_vp_D(O,c)} and the conclusion follows.		
		\end{proof}

		Recall the operator $P_h$ introduced in step (iii) of the proof of Proposition \ref{prop:D}. 
		\begin{lemma}
		\label{lem:quasimode}
		The operator $P_h$ is not equal to $0$.  
		\end{lemma}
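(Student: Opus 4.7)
The plan is to construct a tensorized quasimode $\psi_h$ for $\tilde\lambda_h := \lambda_{1,\alpha}(0)h^{2/3}+e^{i\pi/4}\kappa h$ (the centre of $C$), with residual $r_h := (\mathscr{L}_h - \tilde\lambda_h)\psi_h$ of size $o(h)\|\psi_h\|$, and then to apply the resolvent identity together with the bound $\|(z-\mathscr{L}_h)^{-1}\|\lesssim h^{-1}$ on $C$ from Lemma \ref{lem:oscillateur} to conclude $\|P_h\psi_h - \psi_h\| = o(\|\psi_h\|)$. This forces $P_h\psi_h\neq 0$ for $h$ small, and hence $P_h\neq 0$.

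The natural ansatz, dictated by the splitting $\mathscr{L}_h = h^{2/3}\mathscr{A}_\alpha + [(hD_x)^2 + iV]$ and the approximations $\alpha(x)\approx\alpha(0)$ and $V(x)\approx\kappa^2 x^2$ near $x=0$, is
$$\psi_h(x,y) := g_h(x)\,u_{\alpha,0}(y),\qquad g_h(x) := \exp\!\left(-\frac{e^{i\pi/4}\kappa\,x^2}{2h}\right),$$
where $u_{\alpha,0} = \mathcal{U}_\alpha(0) u_1$ is the Airy eigenfunction introduced in Section \ref{sec:Pialpha}. Since $g_h\in\mathscr{S}(\R)$ and $u_{\alpha,0}\in\mathrm{D}$ with $u_{\alpha,0}(0)=0$, we have $\psi_h\in\cHd$. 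Using that $g_h$ is the first Davies-type eigenfunction of $(hD_x)^2 + i\kappa^2 x^2$ with eigenvalue $e^{i\pi/4}\kappa h$, together with $\alpha'(0)=0$, $V(0)=V'(0)=0$ and $V''(0)=2\kappa^2$, a direct calculation gives $(\mathscr{L}_h-\tilde\lambda_h)\psi_h = r_h$ with
$$r_h(x,y) = ih^{2/3}\bigl(\alpha(x)-\alpha(0)\bigr)\,y\,g_h(x)u_{\alpha,0}(y) + i\bigl(V(x)-\kappa^2 x^2\bigr)\psi_h(x,y).$$

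The Taylor bounds $|\alpha(x)-\alpha(0)|\lesssim x^2$ and $|V(x)-\kappa^2 x^2|\lesssim |x|^3$ near $0$, combined with the global boundedness of $\alpha$ and $V$ to absorb the exponentially small tails of $g_h$ away from $0$, and the standard Gaussian moment identities, yield $\|g_h\|_{L^2(\R)}^2\sim h^{1/2}$, $\|(\alpha-\alpha(0))g_h\|_{L^2}^2\lesssim h^{5/2}$ and $\|(V-\kappa^2 x^2)g_h\|_{L^2}^2\lesssim h^{7/2}$. Since $\|u_{\alpha,0}\|_{L^2(\R_+)}$ and $\|yu_{\alpha,0}\|_{L^2(\R_+)}$ are finite constants, this produces $\|\psi_h\|\sim h^{1/4}$ and $\|r_h\|\lesssim h^{7/4}$. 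On $C$ we have $|z-\tilde\lambda_h| = \varepsilon h$; choosing $\varepsilon<\kappa$, the remaining points of $S_h$ lie at distance $\geq (2\kappa - \varepsilon)h\gtrsim h$ from $z-\lambda_{1,\alpha}(0)h^{2/3}$, so Lemma \ref{lem:oscillateur} gives $\|(z-\mathscr{L}_h)^{-1}\|\leq Ch^{-1}$ uniformly on $C$. The resolvent identity $(z-\mathscr{L}_h)^{-1} = (z-\tilde\lambda_h)^{-1} + (z-\tilde\lambda_h)^{-1}(z-\mathscr{L}_h)^{-1}(\mathscr{L}_h-\tilde\lambda_h)$ applied to $\psi_h$ and integrated around $C$ yields $P_h\psi_h = \psi_h + E_h$ with
$$\|E_h\| \leq \frac{1}{2\pi}\int_{C}\frac{\|(z-\mathscr{L}_h)^{-1}\|\,\|r_h\|}{|z-\tilde\lambda_h|}\,|dz| \lesssim h^{-1}\|r_h\| \lesssim h^{3/4}.$$
Since $h^{3/4}=o(h^{1/4})=o(\|\psi_h\|)$, we obtain $\|P_h\psi_h\|\geq \|\psi_h\|-\|E_h\|>0$ for $h$ small, whence $P_h\psi_h\neq 0$ and $P_h\neq 0$.

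The only delicate point is the bookkeeping ensuring $\|r_h\|/\|\psi_h\| = o(h)$ rather than merely $O(h)$, since only then does the resolvent bound $O(h^{-1})$ on $C$ strictly dominate it. The non-degeneracies $\alpha'(0)=0$ and $V'(0)=0$ (Assumptions \ref{ass:alpha} and \ref{ass:V}), together with the prefactor $h^{2/3}$ multiplying $\mathscr{A}_\alpha$, are precisely what render the two pieces of $r_h$ subcritical in the quasimode sense.
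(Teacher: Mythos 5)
Your proposal is correct and follows essentially the same route as the paper: the same tensorized quasimode (Gaussian ground state of the complex harmonic oscillator in $x$ times the first Airy eigenfunction in $y$), the same quasi-eigenvalue $\lambda_{1,\alpha}(0)h^{2/3}+e^{i\pi/4}\kappa h$, and the same resolvent-identity argument combined with the $O(h^{-1})$ bound of Lemma \ref{lem:oscillateur} on the contour. You merely spell out the Gaussian moment estimates that the paper summarizes as $(\mathscr{L}_h-\mu_1(h))\psi = O(h^{3/2})\|\psi\|$.
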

		\begin{proof}
		Let $\psi = u_{{\rm Airy}}(y)f_h(x)$ where 
		$$u_{\rm Airy}(y) :=  \textup{Ai}(e^{i \frac\pi6} \alpha(0)^{1/3}y + z_1)\,, \quad f_h(x) = h^{-1/4}e^{-cx^2/h}$$
		where $c = e^{i\pi/4}\kappa$. The point is that
		$$(D_y^2 + i \alpha(0)y)u_{\rm Airy} = \lambda_{1,\alpha}(0) u_{\rm Airy} \quad \textup{and} \quad [(hD_x)^2 + i\kappa^2 x^2]f_h = e^{i \frac\pi4}\kappa h f_h\,,$$
		Thus, using Taylor expansions of $V$ and $\alpha$ near $x = 0$, one can check that 
		$$(\mathscr{L}_h -\mu(h))\psi = O(h^{3/2})\|\psi\|,$$
		where $\mu_1(h) :=  \lambda_{1,\alpha}(0)h^{2/3} + e^{i \frac\pi4}\kappa h$. 
		Therefore, 
		\[\begin{split}\left\|(P_h-\mathrm{Id})\psi\right\|&=\left\|\frac{1}{2i\pi}\int_{C}\left((\zeta-\mathscr{L}_h)^{-1}-(\zeta-\mu_1(h))^{-1}\right)\psi\,{d}\zeta\right\|\\
		&=\left\|\frac{1}{2i\pi}\int_{C}\left((\zeta-\mathscr{L}_h)^{-1}(\zeta-\mu_1(h))^{-1}\right)\,{d}\zeta\,(\mathscr{L}_h-\mu_1(h))\psi\,\right\|\\
		& \leq Ch^{-1} h^{3/2}|\psi\| = Ch^{1/2} \|\psi\|.
		\end{split}\]
		In particular, $P_h \neq 0$, concluding the proof.
		\end{proof}

\section{Pollution by the Airy operator}

\label{app:pollution}

In this paragraph, we outlie the pollution phenomenon alluded to in Remark \ref{rem:comments_main}(\ref{remitem:pollution}). Firstly, it is possible to obtain a weaker localization result than Theorem \ref{thm:main}, namely, a localization at scale $O(h^{1/3})$, by elementary manipulations of the quadratic form. Indeed it is not difficult to show that for $\lambda \in D(\lambda_{1,\alpha}(0)h^{\frac23},Rh)$,
$$\textup{Re} \left[e^{-i\frac\pi4} \left\langle  \left(e^{\Phi(x)/{h^{2/3}}}(\mathscr{L}_h - \lambda)e^{-\Phi(x)/h^{2/3}}\right)\psi,\psi\right\rangle\right] \geq c  \left \langle (V -Ch^{2/3})  \psi,\psi \right\rangle$$
where $c > 0$ and $C > 0$ are independent of $h$, and $\Phi$ is a smooth bounded function satisfying $\Phi(x) \geq \min(1,x^2)$. Here, the term $Ch^{2/3}$ is simply due to the fact that $\lambda = O(h^{2/3})$. This leads to the Agmon estimate
$$\|e^{\Phi/h^{2/3}} \psi\| \leq C \|\psi\|$$
for any eigenfunction $\psi$ associated to $\lambda$. Since $V(x) \sim x^2$ as $x \to 0$, this gives an $O(h^{1/3})$ localization in the $x$ variable.

To understand why this argument fails to obtain the optimal $O(h^{1/2})$ localization scale, it is instructive to consider an analogous situation involving a self-adjoint counterpart of $\mathscr{L}_h$ -- namely, when we replace $\mathscr{L}_h$ by the operator 
$$\mathscr{L}_h^{\rm sa} := h^{2/3}(D^2_y + \alpha(x) y) + (hD_x)^2 + V(x)$$
In this case, the variational argument above can be improved by exploiting the fact that, by the min-max principle, one has (in the sense of quadratic forms)
$$D_y^2 + \alpha(x)y \geq \mu_\alpha(x)$$ 
where $\mu_{1,\alpha}(x) := \alpha(x)^{2/3}|z_1|$. Therefore, splitting $\lambda \in D(\mu_{1,\alpha}(0)h^{2/3},Rh)$ as
$$\lambda = \mu_{1,\alpha}(x)h^{2/3} + (\mu_{1,\alpha}(0) - \mu_{1,\alpha}(x))h^{2/3} + z\,,$$
the action of $\mathscr{L}_h^{\rm sa} - \lambda$ is decoupled into an ``Airy part'' and an effective ``Shcrödinger part'' via 
$$\mathscr{L}_h^{\rm sa} - \lambda = h^{2/3}\overbrace{\left(D_y^2 + \alpha(x)y - \mu_{1,\alpha}(x)\right)}^{\textup{Airy part} \geq 0} + \overbrace{h^2 D_x^2 + V_{h,{\rm eff}}(x) - z}^{\textup{Schrödinger part}}$$
where $V_{h,{\rm eff}}(x) = V(x) + h^{2/3}\left(\mu_\alpha(0) - \mu_\alpha(x)\right)$, and, {\em crucially}, $z \in D(0,Rh)$. Thus, we are in the situation of the proof of Proposition \ref{prop:ellip1}, but the difference is that this decoupling is not restricted to the image of the adiabatic projection $\Pi_{1,\alpha}$. 

In contrast, in the complex setting, the ``complex Airy part'' $A := D_y^2 + i \alpha(x)y - \lambda_{1,\alpha}(x)$ {\em does not have a sign}. Worse still, there exists $\varepsilon > 0$ such that
\begin{equation}
\label{eq:claim_numrange}
\left\{\frac{ \langle A\psi,\psi\rangle}{\|\psi\|^2_{\cH}}\,\,\Big|\,\, \psi \in \cH^2 \right\} \supset D(0,\varepsilon)
\end{equation}
i.e., the numerical range of $A$, contains a neighborhood of $0$ (this is an immediate consequence of Corollary \ref{cor:num_range} below). This leaves little hope of improving the $O(h^{1/3})$ localization result outlined in point (i) above by limiting oneself to direct arguments on the quadratic form.

We now show the claim \eqref{eq:claim_numrange}. As in \S\ref{sec:Pialpha}, we denote by $\mathscr{A} := D_y^2 + iy$ the complex Airy operator on the Hilbert space $L^2(\R_+)$ with domain $\mathrm{D}$ defined by \eqref{eq:def_D}. The {\em numerical range} of $\mathscr{A}$ is the subset of the complex plane defined by
$$W(\mathscr{A}) := \left\{\langle \mathscr{A}\psi,\psi\rangle\,\mid\, \psi \in \mathrm{D}\, \textup{ and } \|\psi\|_{L^2(\R_+)}=1\right\}.$$
The smallest eigenvalue in magnitude of $\mathscr{A}$ is given by $|z_1| e^{i \frac\pi3}$; however, this value is not on the boundary of $W(\mathscr{A})$. To show this, we start by the following lemma, which is reminiscent of the virial theorem. 
\begin{proposition}
\label{lem:virial}
Let $u_1$ be an $L^2$-normalized eigenfunction of the {\em self-adjoint} Airy operator $A = D_y^2 + y$ for its smallest eigenvalue $|z_1|$. Then
$$\|D_y u_1\|^2_{L^2(\R_+)} = \frac12 \langle yu_1,u_1\rangle_{L^2(\R_+)} = \frac{|z_1|}{3}.$$
\end{proposition}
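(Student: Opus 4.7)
The plan is to set up a $2 \times 2$ linear system in the unknowns $\|D_y u_1\|_{L^2(\R_+)}^2$ and $\langle y u_1, u_1\rangle_{L^2(\R_+)}$, and then solve it.

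For the first equation, I would simply test the eigenvalue relation $A u_1 = |z_1| u_1$ against $u_1$ and integrate by parts once; the boundary term at $y = 0$ vanishes by the Dirichlet condition, and at infinity by the rapid decay of the Airy eigenfunction, giving
\begin{equation*}
\|D_y u_1\|_{L^2(\R_+)}^2 + \langle y u_1, u_1\rangle_{L^2(\R_+)} = \langle A u_1, u_1\rangle_{L^2(\R_+)} = |z_1|.
\end{equation*}

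For the second, independent equation I would invoke a virial/scaling argument. Since $A$ is non-negative and self-adjoint with $|z_1|$ its smallest eigenvalue, the min-max principle characterizes $u_1$ as a minimizer of the Rayleigh quotient $\langle A v, v\rangle_{L^2(\R_+)}/\|v\|_{L^2(\R_+)}^2$ over $\mathrm{D} \setminus \{0\}$. I would then consider the one-parameter family $u_1^\lambda(y) := \lambda^{1/2} u_1(\lambda y)$ for $\lambda > 0$: these functions preserve the Dirichlet condition at $y = 0$, lie in $\mathrm{D}$, and have unit $L^2$ norm. A direct substitution shows that
\begin{equation*}
Q(\lambda) := \langle A u_1^\lambda, u_1^\lambda\rangle_{L^2(\R_+)} = \lambda^2 \|D_y u_1\|_{L^2(\R_+)}^2 + \lambda^{-1}\langle y u_1, u_1\rangle_{L^2(\R_+)},
\end{equation*}
and by the variational characterization $Q(\lambda) \geq |z_1| = Q(1)$ for all $\lambda > 0$, so $Q$ has a minimum at $\lambda = 1$. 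Setting $Q'(1) = 0$ then produces the virial relation $2\|D_y u_1\|_{L^2(\R_+)}^2 = \langle y u_1, u_1\rangle_{L^2(\R_+)}$, and solving the resulting system with the first identity yields the values $\langle y u_1, u_1\rangle_{L^2(\R_+)} = \frac{2|z_1|}{3}$ and $\|D_y u_1\|_{L^2(\R_+)}^2 = \frac{|z_1|}{3}$.

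I do not expect a genuine obstacle: the only step requiring care is the verification that $u_1^\lambda \in \mathrm{D}$ for all $\lambda > 0$, which reduces to $u_1 \in \mathrm{D}$ and is granted by the hypothesis (and is automatic since $u_1$ is an Airy function decaying superexponentially at infinity). As a purely algebraic alternative to the scaling argument, one could exploit the commutator identity $\langle [A, y D_y + D_y y] u_1, u_1\rangle_{L^2(\R_+)} = 0$, valid because $A$ is self-adjoint and $u_1$ is an eigenfunction; a short computation gives $[A, y D_y + D_y y] = -4i D_y^2 + 2iy$, which reproduces the virial relation directly upon taking expectations.
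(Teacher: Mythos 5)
Your proposal is correct and follows essentially the same route as the paper: the identity $\|D_y u_1\|^2 + \langle y u_1,u_1\rangle = |z_1|$ combined with a scaling/min-max argument ($Q'(1)=0$ for the dilated trial family) to obtain the virial relation $\langle y u_1,u_1\rangle = 2\|D_y u_1\|^2$, then solving the $2\times 2$ system. The only differences are cosmetic (you normalize the trial functions rather than dividing by their norm, and you additionally record the commutator alternative), so there is nothing to amend.
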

\begin{proof}
Let $a := \|D_y u_1\|^2_{L^2(\R_+)}$ and $b := \langle yu_1,u_1\rangle_{L^2(\R_+)}$. We have $a + b = |z_1|$. Moreover, for all $\gamma >  0$, consider $u_\gamma := u_1(\gamma y)$. Then by homogeneity,
$$f(\gamma) := \frac{\langle A u_\gamma,u_\gamma\rangle}{\|u_\gamma\|_{L^2(\R_+)}^2} = \gamma^{-2} a + \gamma b.$$
The min-max principle states that $f$ attains a global minimum for $\gamma = 1$. Thus $f'(1) = 0$, that is, $b = 2a$. 
\end{proof}

\begin{corollary}
\label{cor:num_range}
There exists $\varepsilon > 0$ such that $D(|z_1| e^{i\frac\pi3},\varepsilon) \subset W(\mathscr{A})$.
\end{corollary}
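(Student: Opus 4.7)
The plan is to exhibit a one-parameter family of unit vectors that traces out a curve in $W(\mathscr{A})$ whose convex hull contains a neighborhood of $|z_1|e^{i\pi/3}$, and then to invoke convexity of the numerical range.

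The first step is to observe that for $\psi \in \mathrm{D}$ with $\|\psi\|_{L^2(\R_+)} = 1$, an integration by parts gives $\langle \mathscr{A}\psi, \psi\rangle = \|D_y\psi\|^2 + i\langle y\psi,\psi\rangle \in \R_+ + i\R_+$. I would then apply this to the scaling $u_{1,\gamma}(y) := \gamma^{1/2} u_1(\gamma y)$, which preserves the $L^2$-norm; a direct change of variable combined with Proposition \ref{lem:virial} yields
\[
\langle \mathscr{A}u_{1,\gamma}, u_{1,\gamma}\rangle_{L^2(\R_+)} = \gamma^2 \frac{|z_1|}{3} + i\,\frac{2|z_1|}{3\gamma}.
\]
Consequently the curve $C := \{p(\gamma) := \gamma^2 |z_1|/3 + i\cdot 2|z_1|/(3\gamma) : \gamma > 0\}$ is contained in $W(\mathscr{A})$, and writing $z = a + ib$, every point of $C$ satisfies $ab^2 = 4|z_1|^3/27$.

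Next, I would invoke the Toeplitz--Hausdorff theorem, whose proof reduces to the $2\times 2$ case and applies verbatim to any densely defined operator since only linear combinations within the domain $\mathrm{D}$ are used; this gives that $W(\mathscr{A})$ is convex, so that $\mathrm{conv}(C) \subset W(\mathscr{A})$. Since $b \mapsto 4|z_1|^3/(27 b^2)$ is convex on $(0,\infty)$ with limit $+\infty$ at $0$ and $0$ at $+\infty$, an elementary intermediate value argument (expressing any point of the strict epigraph as a chord between two distant points of $C$) identifies
\[
\mathrm{conv}(C) = \big\{a + ib : a,\,b > 0,\,\, ab^2 \geq 4|z_1|^3/27\big\}.
\]

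To conclude, it is enough to check that $|z_1|e^{i\pi/3}$ is an interior point of this set: for $a = |z_1|/2$ and $b = |z_1|\sqrt{3}/2$ one has $ab^2 = 3|z_1|^3/8 > 4|z_1|^3/27$, so a small open disk around $|z_1|e^{i\pi/3}$ is contained in $\mathrm{conv}(C) \subset W(\mathscr{A})$, which is exactly the claim. The only mildly delicate point is the identification of $\mathrm{conv}(C)$ with the epigraph; a cleaner alternative would be to pick three explicit values $\gamma_1, \gamma_2, \gamma_3$ so that the open triangle with vertices $p(\gamma_i)$ strictly contains $|z_1|e^{i\pi/3}$, which reduces the convexity input to the elementary finite-dimensional Toeplitz--Hausdorff applied to the three vectors $u_{1,\gamma_i}$.
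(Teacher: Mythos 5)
Your proposal is correct and follows essentially the same route as the paper: the scaling family $u_1(\gamma\,\cdot)$ combined with the virial identity of Proposition \ref{lem:virial} produces the curve $\gamma^2|z_1|/3 + 2i|z_1|/(3\gamma)$ inside $W(\mathscr{A})$, and Toeplitz--Hausdorff convexity then places $|z_1|e^{i\pi/3}$ (which satisfies $ab^2 = 3|z_1|^3/8 > 4|z_1|^3/27$) in the interior of its convex hull. The only cosmetic difference is that the paper realizes the interior point via a triangle with vertices $z(\gamma)$, $z(\gamma^{-1})$, $z(1)$ for $\gamma$ large, which is precisely the cleaner alternative you mention at the end.
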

\begin{proof}
By Lemma \ref{lem:virial} and by homogeneity, for all $\gamma > 0$, $W(\mathscr{A})$ contains the complex number
$$z(\gamma) := \frac{\langle \mathscr{A}u_\gamma,u_\gamma\rangle}{\|u_\gamma\|_{L^2(\R_+)^2}} = \frac{|z_1|}{3\gamma^2} + 2i \frac{\gamma |z_1|}{3}\,,$$
where $u_\gamma(x) := u_1(\gamma x)$. Observing that
$$\frac{|z_1|}{3} < \Re(|z_1|e^{i\frac\pi3}) \quad \textup{and}\quad  \frac{2|z_1|}{3} < \Im(|z_1|e^{i\frac\pi3}),$$
one can check that the triangle with vertices $z(\gamma)$, $z(\gamma^{-1})$ and $z(1)$ `contains $|z_1| e^{i\pi/3}$ in its interior for $\gamma$ large enough. Since $W(\mathscr{A})$ is convex (by the Toeplitz-Haussdorff theorem \cite{toeplitz1918algebraische,hausdorff1919wertvorrat}), the conclusion follows. 
\end{proof}

\bibliographystyle{abbrv}
\bibliography{biblio.bib}

@phdthesis{keraval2018formules,
  title={Formules de Weyl par r{\'e}duction de dimension: application {\`a} des Laplaciens {\'e}lectromagn{\'e}tiques},
  author={Keraval, P.},
  year={2018},
  school={Rennes 1}
}

@article{gerard1991mathematical,
  title={A mathematical approach to the effective Hamiltonian in perturbed periodic problems},
  author={G{\'e}rard, C. and Martinez, A. and Sj{\"o}strand, J.},
  journal={Communications in mathematical physics},
  volume={142},
  number={2},
  pages={217--244},
  year={1991},
  publisher={Springer}
}

@article{AFHR2025,
  title={{Semiclassical tunneling for some 1D Schr{\"o}dinger operators with complex-valued potentials}},
  author={Averseng, M. and Frantz, N. and H{\'e}rau, F. and Raymond, N.},
  journal={arXiv preprint arXiv:2510.04296},
  year={2025}
}

@article{grebenkov2017complex,
 author = {Grebenkov, D.~S. and Helffer, B. and Henry, R.},
 title = {The complex {Airy} operator on the line with a semipermeable barrier},
 fjournal = {SIAM Journal on Mathematical Analysis},
 journal = {SIAM Journal of Mathematical Analysis},
 issn = {0036-1410},
 volume = {49},
 number = {3},
 pages = {1844--1894},
 year = {2017},
 language = {English},
 doi = {10.1137/16M1067408},
 keywords = {47A10,34L05,35A08,35P10,47A75},
 zbMATH = {6728850},
 Zbl = {1430.47001}
}

@article{savchuk2017spectral,
 author = {Savchuk, A.~M. and Shkalikov, A.~A.},
 title = {Spectral properties of the complex {Airy} operator on the half-line},
 fjournal = {Functional Analysis and its Applications},
 journal = {Funct. Anal. Appl.},
 issn = {0016-2663},
 volume = {51},
 number = {1},
 pages = {66--79},
 year = {2017},
 language = {English},
 doi = {10.1007/s10688-017-0168-1},
 keywords = {34L10,34L40,34B40,34L15},
 zbMATH = {6751017},
 Zbl = {1372.34131}
}

@book{cheverry2021guide,
 author = {Cheverry, C. and Raymond, N.},
 title = {A guide to spectral theory. {Applications} and exercises},
 fseries = {Birkh{\"a}user Advanced Texts. Basler Lehrb{\"u}cher},
 series = {Birkh{\"a}user Adv. Texts, Basler Lehrb{\"u}ch.},
 issn = {1019-6242},
 isbn = {978-3-030-67461-8; 978-3-030-67464-9; 978-3-030-67462-5},
 year = {2021},
 publisher = {Cham: Birkh{\"a}user},
 language = {English},
 doi = {10.1007/978-3-030-67462-5},
 keywords = {47-01,47A10,47B40,47J10,35-01,35Pxx},
 zbMATH = {7312855},
 Zbl = {1505.47001}
}

@book{mclean2000strongly,
 author = {McLean, W.},
 title = {Strongly elliptic systems and boundary integral equations},
 isbn = {0-521-66332-6; 0-521-66375-X},
 year = {2000},
 publisher = {Cambridge: Cambridge University Press},
 language = {English},
 keywords = {35-01,35J55,47F05,47G10,47N20,35B65,35J05,74B05},
 zbMATH = {1446717},
 Zbl = {0948.35001}
}

@article{nirenberg1955remarks,
 author = {Nirenberg, L.},
 title = {Remarks on strongly elliptic partial differential equations},
 fjournal = {Communications on Pure and Applied Mathematics},
 journal = {Commun. Pure Appl. Math.},
 issn = {0010-3640},
 volume = {8},
 pages = {649--675},
 year = {1955},
 language = {English},
 doi = {10.1002/cpa.3160080414},
 zbMATH = {3113747},
 Zbl = {0067.07602}
}

@article{grebenkov2024spectral,
 author = {Grebenkov, D.~S.},
 title = {Spectral properties of the {Bloch}-{Torrey} operator in three dimensions},
 fjournal = {Journal of Physics A: Mathematical and Theoretical},
 journal = {J. Phys. A, Math. Theor.},
 issn = {1751-8113},
 volume = {57},
 number = {12},
 pages = {28},
 note = {Id/No 125201},
 year = {2024},
 language = {English},
 doi = {10.1088/1751-8121/ad2d6d},
 keywords = {81Q10,47A10,15A18,47A75},
 url = {iopscience.iop.org/article/10.1088/1751-8121/ad2d6d/pdf},
 zbMATH = {7871170},
 Zbl = {1548.81100}
}

@article{bloch1946nuclear,
  title={Nuclear induction},
  author={Bloch, F.},
  journal={Physical review},
  volume={70},
  number={7-8},
  pages={460},
  year={1946},
  publisher={APS}
}

@article{torrey1956bloch,
  title={Bloch equations with diffusion terms},
  author={Torrey, H.~C.},
  journal={Physical review},
  volume={104},
  number={3},
  pages={563},
  year={1956},
  publisher={APS}
}

@article{stoller1991transverse,
  title={Transverse spin relaxation in inhomogeneous magnetic fields},
  author={Stoller, S.~D. and Happer, W. and Dyson, F.~J.},
  journal={Physical Review A},
  volume={44},
  number={11},
  pages={7459},
  year={1991},
  publisher={APS}
}

@article{lebihan2003looking,
  title={Looking into the functional architecture of the brain with diffusion MRI},
  author={Le Bihan, D.},
  journal={Nature reviews neuroscience},
  volume={4},
  number={6},
  pages={469--480},
  year={2003},
  publisher={Nature Publishing Group UK London}
}

@article{almog2008stability,
  title={The stability of the normal state of superconductors in the presence of electric currents},
  author={Almog, Y.},
  journal={SIAM Journal on Mathematical Analysis},
  volume={40},
  number={2},
  pages={824--850},
  year={2008},
  publisher={SIAM}
}

@article{deswiet1994decay,
  title={Decay of nuclear magnetization by bounded diffusion in a constant field gradient},
  author={de Swiet, T.~M and Sen, P.~N.},
  journal={The Journal of chemical physics},
  volume={100},
  number={8},
  pages={5597--5604},
  year={1994},
  publisher={American Institute of Physics}
}

@article{grebenkov2014exploring,
  title={Exploring diffusion across permeable barriers at high gradients. II. Localization regime},
  author={Grebenkov, D.~S.},
  journal={Journal of Magnetic Resonance},
  volume={248},
  pages={164--176},
  year={2014},
  publisher={Elsevier}
}

@article{moutal2019localization,
  title={Localization regime in diffusion NMR: theory and experiments},
  author={Moutal, N. and Demberg, K. and Grebenkov, D.~S. and Kuder, T.~A.},
  journal={Journal of magnetic resonance},
  volume={305},
  pages={162--174},
  year={2019},
  publisher={Elsevier}
}

@article{grebenkov2018diffusion,
  title={Diffusion MRI/NMR at high gradients: challenges and perspectives},
  author={Grebenkov, D.~S.},
  journal={Microporous and Mesoporous Materials},
  volume={269},
  pages={79--82},
  year={2018},
  publisher={Elsevier}
}

@article{keraval2024reduction,
  title = {Réduction de dimension pour des opérateurs semi-classiques à double échelle},
  author = {Keraval, P.},
  journal = {Mémoires de la Société Mathématique de France (to appear)},
  year = {2024},
  note = {To appear}
}

@article{herau2024semiclassical,
 author = {H{\'e}rau, F. and Krej{\v{c}}i{\v{r}}{\'{\i}}k, D. and Raymond, Nicolas},
 title = {Semiclassical asymptotics of the {Bloch}-{Torrey} operator in two dimensions},
 fjournal = {Journal of the Mathematical Society of Japan},
 journal = {J. Math. Soc. Japan},
 issn = {0025-5645},
 volume = {77},
 number = {4},
 pages = {1047--1081},
 year = {2025},
 language = {English},
 doi = {10.2969/jmsj/93399339},
 keywords = {81Q20,35P15,81Q12},
 url = {projecteuclid.org/journals/journal-of-the-mathematical-society-of-japan/volume-77/issue-4/Semiclassical-asymptotics-of-the-BlochTorrey-operator-in-two-dimensions/10.2969/jmsj/93399339.full},
 zbMATH = {8123404}
}

@misc{henry2014semiclassical,
 author = {Henry, R.},
 title = {On the semi-classical analysis of {Schr{\"o}dinger} operators with purely imaginary electric potentials in a bounded domain},
 year = {2014},
 howpublished = {Preprint, {arXiv}:1405.6183 [math.{SP}] (2014)},
 url = {https://arxiv.org/abs/1405.6183},
 arXiv = {arXiv:1405.6183}
}

@article{almog2016spectral,
  title={Spectral analysis of a complex Schr{\"o}dinger operator in the semiclassical limit},
  author={Almog, Y. and Henry, R.},
  journal={SIAM Journal on Mathematical Analysis},
  volume={48},
  number={4},
  pages={2962--2993},
  year={2016},
  publisher={SIAM}
}

@article{almog2018spectral,
  title={Spectral semi-classical analysis of a complex Schr{\"o}dinger operator in exterior domains},
  author={Almog, Y. and Grebenkov, D.~S. and Helffer, B.},
  journal={Journal of Mathematical Physics},
  volume={59},
  number={4},
  year={2018},
  publisher={AIP Publishing}
}

@article{grebenkov2018spectral,
  title={On Spectral Properties of the Bloch--Torrey Operator in Two Dimensions},
  author={Grebenkov, D.~S. and Helffer, B.},
  journal={SIAM Journal on Mathematical Analysis},
  volume={50},
  number={1},
  pages={622--676},
  year={2018},
  publisher={SIAM}
}

@article{cornean2022two,
  title={On the two-dimensional quantum confined Stark effect in strong electric fields},
  author={Cornean, H. and Krejcirik, D. and Pedersen, T.~G. and Raymond, N. and Stockmeyer, E.},
  journal={SIAM Journal on Mathematical Analysis},
  volume={54},
  number={2},
  pages={2114--2127},
  year={2022},
  publisher={SIAM}
}

@article{hausdorff1919wertvorrat,
  title={Der wertvorrat einer bilinearform},
  author={Hausdorff, F.},
  journal={Mathematische Zeitschrift},
  volume={3},
  number={1},
  pages={314--316},
  year={1919},
  publisher={Springer}
}

@article{toeplitz1918algebraische,
  title={Das algebraische Analogon zu einem Satze von Fej{\'e}r},
  author={Toeplitz, O.},
  journal={Mathematische Zeitschrift},
  volume={2},
  number={1},
  pages={187--197},
  year={1918},
  publisher={Springer}
}

@book {agmon1982lectures,
    AUTHOR = {Agmon, S.},
     TITLE = {Lectures on exponential decay of solutions of second-order
              elliptic equations: bounds on eigenfunctions of {$N$}-body
              {S}chr\"{o}dinger operators},
    SERIES = {Mathematical Notes},
    VOLUME = {29},
 PUBLISHER = {Princeton University Press, Princeton, NJ; University of Tokyo
              Press, Tokyo},
      YEAR = {1982},
     PAGES = {118},
      ISBN = {0-691-08318-5},
   MRCLASS = {35B40 (35J10 35P99)},
  MRNUMBER = {745286},
MRREVIEWER = {A. Ramm},
}

@book{dyatlov2019mathematical,
  title={Mathematical theory of scattering resonances},
  author={Dyatlov, S. and Zworski, M.},
  volume={200},
  year={2019},
  publisher={American Mathematical Soc.}
}

@misc{code,
  key = {zz},
  author = {},
  title = {},
  year = {2025},
  publisher = {GitHub},
  journal = {GitHub repository},
  howpublished = {\url{https://github.com/MartinAverseng/Bloch-Torrey-localization}}
}

\end{document}